\newif\ifnotesw\noteswtrue
\ifnotesw\marginpar[\hfill\(\top\)]{\(\top\)}\fi}%
\ifnotesw\marginpar[\hfill\(\bot\)]{\(\bot\)}\fi}
\newcommand{\mnote}[1]%
    {\ifnotesw\marginpar%
        [{\scriptsize\begin{minipage}[t]{\marginparwidth}
        \raggedleft#1%
                        \end{minipage}}]%
        {\scriptsize\begin{minipage}[t]{\marginparwidth}
        \raggedright#1%
                        \end{minipage}}%
    \fi}
\newcommand{\ignore}[1]{}
\newcommand{\etal}{{\it et al.~}}
\newtheorem{theorem}{Theorem}
\newtheorem{corollary}[theorem]{Corollary}
\newtheorem{lemma}[theorem]{Lemma}
\newtheorem{proposition}[theorem]{Proposition}
\newtheorem{definition}{Definition}
\newcommand{\iverson}[1]{\lbrack\!\lbrack #1 \rbrack\!\rbrack}
\newcommand{\zo}{\{0,1\}}
\newcommand{\CC}{\mathbb{C}}
\newcommand{\ZZ}{\mathbb{Z}}
\newcommand{\RR}{\mathbb{R}}
\newcommand{\QQ}{\mathbb{Q}}
\newcommand{\FF}{\mathcal{F}}
\newcommand{\GG}{\mathcal{G}}
\newcommand{\TT}{\mathbb{T}}
\newcommand{\II}{\mathbb{I}}
\newcommand{\anorm}[1]{\lVert #1 \rVert}
\newcommand{\SC}{\mathcal{C}}
\newcommand{\tstar}{t^{\star}}
\newcommand{\vstar}{\varphi^{\star}}
\newcommand{\ket}[1]{\vec{#1}}
\newcommand{\bra}[1]{\vec{#1}^{\dagger}}
\newcommand{\bracket}[2]{\langle {#1} | {#2} \rangle}			
\newcommand{\tbracket}[3]{\langle {#1} | {#2} {#3} \rangle}		
\newcommand{\braket}[2]{\bra{#1}\ket{#2}}
\newcommand{\ketbra}[2]{\ket{#1}\bra{#2}}
\newcommand{\uket}[1]{\vec{\mathsf{e}}_{#1}}		
\newcommand{\ubra}[1]{\uket{#1}^{\dagger}}
\newcommand{\buket}[1]{\vec{\mathsf{e}}_{#1}}		
\newcommand{\bvket}[1]{\vec{#1}}			
\newcommand{\mbraket}[3]{{#2}_{#1,#3}}			
\newcommand{\mbbraket}[3]{(#2)_{#1,#3}}			
\newcommand{\vbbravec}[2]{(\vec{#2})_{#1}}		
\renewcommand{\ket}[1]{| #1 \rangle}
\renewcommand{\bra}[1]{\langle #1 |}
\renewcommand{\braket}[2]{\langle #1 | #2 \rangle}
\renewcommand{\ketbra}[2]{| #1 \rangle\langle #2 |}
\renewcommand{\uket}[1]{\ket{#1}}
\renewcommand{\ubra}[1]{\bra{#1}}
\renewcommand{\buket}[1]{#1}
\renewcommand{\bvket}[1]{#1}					
\renewcommand{\mbraket}[3]{\bra{#1}{#2}\ket{#3}}
\renewcommand{\mbbraket}[3]{\bra{#1}#2\ket{#3}}
\renewcommand{\vbbravec}[2]{\braket{#1}{#2}}
\renewcommand{\bracket}[2]{\braket{#1}{#2}}
\renewcommand{\tbracket}[3]{\bra{#1}{#2}\ket{#3}}
\DeclareMathOperator{\SwAut}{\mathsf{SwAut}}
\DeclareMathOperator{\Orb}{Orb}
\DeclareMathOperator{\Fix}{Fix}
\DeclareMathOperator{\id}{\mathit{id}}
\newcommand{\avg}[1]{\langle #1 \rangle}
\DeclareMathOperator{\diag}{diag}
\DeclareMathOperator{\Tr}{Tr}
\DeclareMathOperator{\Circ}{Circ}
\newcommand{\cart}{\mbox{ $\Box$ }}
\newcommand{\Real}{\mathfrak{Re}}
\newcommand{\Imag}{\mathfrak{Im}}
\newcommand{\debug}[1]{{\color{black} #1}}
\title{
Universal State Transfer on Graphs
}
\author{
Stephen Cameron\thanks{Department of Mathematics, College of William and Mary.}
\and
Shannon Fehrenbach\thanks{Department of Mathematics, University of Wisconsin Oshkosh.}
\and
Leah Granger\thanks{Department of Mathematics, Clarkson University.}
\and
Oliver Hennigh\footnotemark[3]
\and
Sunrose Shrestha\thanks{Department of Mathematics, Hamilton College.}
\and
Christino Tamon\thanks{Department of Computer Science, Clarkson University. Contact author: tino@clarkson.edu.}
}
\date{\today}
\begin{document}
\maketitle
\bibliographystyle{plain}

\begin{abstract}
A continuous-time quantum walk on a graph $G$ is given by the unitary matrix 
$U(t) = \exp(-itA)$, where $A$ is the Hermitian adjacency matrix of $G$.
We say $G$ has {\em pretty good state transfer} between vertices $a$ and $b$
if for any $\epsilon > 0$, there is a time $t$, where the $(a,b)$-entry
of $U(t)$ satisfies $|U(t)_{a,b}| \ge 1-\epsilon$.
This notion was introduced by Godsil (2011).
The state transfer is {\em perfect} if the above holds for $\epsilon = 0$.
In this work, we study a natural extension of this notion called
{\em universal} state transfer.
Here, state transfer exists between {\em every} pair of vertices of the graph. 
We prove the following results about graphs with this stronger property:
\begin{itemize}
\item Graphs with universal state transfer have distinct eigenvalues and flat eigenbasis
	(where each eigenvector has entries which are equal in magnitude).
\item The switching automorphism group of a graph with universal state transfer is abelian 
	and its order divides the size of the graph. Moreover, if the state transfer is perfect, 
	then the switching automorphism group is cyclic.
\item There is a family of prime-length cycles with complex weights which has universal 
	pretty good state transfer.
	This provides a concrete example of an infinite family of graphs with the universal property.
\item There exists a class of graphs with real symmetric adjacency matrices which has
	universal {\em pretty good} state transfer. 
	In contrast, Kay (2011) proved that no graph with real-valued adjacency matrix 
	can have universal {\em perfect} state transfer.
\end{itemize}
We also provide a spectral characterization of universal perfect state transfer graphs
that are switching equivalent to circulants.

\vspace{.1in}
\par\noindent{\em Keywords}: quantum walk, state transfer, Hermitian graphs, switching automorphisms.
\end{abstract}


\section{Introduction}

The study of continuous-time quantum walk on graphs is important for several reasons.
Originally, it was studied for developing new quantum algorithmic techniques. This led to the
seminal results of Childs \etal \cite{ccdfgs03} and of Farhi \etal \cite{fgg08}.
The algorithms developed in \cite{ccdfgs03,fgg08} are notable in that they provably
beat the corresponding classical resource bounds. The main goal here is to develop 
applications for quantum walk algorithms which have the same impact as Shor's quantum 
algorithm for factoring integers.

Around the same time, Bose \cite{bose03} studied the problem of quantum information
transmission in quantum spin chains. As stated by Bose, this problem may be viewed 
as a continuous-time quantum walk on a path where perfect state transfer occurs
between the two antipodal endpoints. 
Subsequently, Christandl \etal \cite{cdel04,cddekl05} proved that perfect state transfer 
between antipodal points on a path with $n$ vertices is only possible whenever $n=2$ or $3$. 
Despite this negative result, the problem of perfect state transfer on other finite graphs
became an interesting question in both quantum information and algebraic graph theory.
A recent survey in this area is given by Godsil \cite{godsil-dm11}.

More recently, Childs \cite{childs09} showed that quantum walk is a key ingredient 
in simulating universal quantum computation. 
Later, Underwood and Feder \cite{uf10} showed how to use perfect state transfer 
as an alternative to the graph scattering methods used by Childs \cite{childs09}.
This underscores the importance of perfect state transfer in quantum walks.

Several works have raised the difficulties in requiring the state transfer be perfect
(for example, Anderson localization \cite{klmw07}). This led to the natural notion of
{\em pretty good state transfer} which was introduced by Godsil \cite{godsil-dm11}.
In a recent breakthrough, Godsil, Kirkland, Severini and Smith \cite{gkss12} proved 
that some families of paths (whose lengths satisfy specific number-theoretic conditions)
have pretty good state transfer. 
This is in contrast to the negative result of Christandl \etal \cite{cdel04,cddekl05} 
and it confirmed some of the numerical observations made by Bose \cite{bose03}.
A further result on pretty good state transfer was given by Fan and Godsil \cite{fg13}
on the so-called double-star graphs.

A fundamental observation by Kay \cite{kay11} is that perfect state transfer between
two non-disjoint pairs of vertices is impossible for graphs with real symmetric 
adjacency matrices. 
In this work, we study and exhibit graphs with complex Hermitian adjacency matrices 
which have {\em universal} state transfer. 
By universal, we mean that state transfer exists between {\em every} pair of vertices. 
Graphs with complex Hermitian adjacency matrices are also known as complex gain graphs 
-- which are generalizations of signed graphs (see Zaslavsky \cite{z-survey}).
For brevity, we will refer to these graphs as {\em Hermitian} graphs.
As the main result of this work, we prove spectral and structural properties 
of Hermitian graphs with universal state transfer. We summarize these results
in the following.

For spectral properties, we show that if a graph has universal (pretty good or perfect) 
state transfer, then all of its eigenvalues must be simple. 
Moreover, we prove that each eigenvector of such a graph has entries which are all equal 
in magnitude. In other words, the eigenbasis of a graph with universal state transfer 
is {\em flat} 
(and therefore is type-II, since it is also unitary; see Chan and Godsil \cite{cg10}).
For structural properties, we show that the switching automorphism group of a graph 
with universal state transfer must be abelian and its order must divide the size
of the graph. Moreover, if the state transfer is perfect, then the switching automorphism 
group is necessarily cyclic.

Next, we provide an explicit family of graphs with universal state transfer.
For each prime $p$, we prove that the graph $\SC_{p}$ obtained from a directed
$p$-cycle with $\pm i$ weights has universal pretty good state transfer. 
Aside from the smallest case of $\SC_{2}$, the Hermitian $3$-cycle $\SC_{3}$ 
has universal {\em perfect} (not just pretty good) state transfer 
and was the original motivation for our study of universal state transfer.
Our proof employed some the number-theoretic machinery used in 
Godsil \etal \cite{gkss12} (for example, Kronecker's approximation theorem). 
By employing the same methods, we also show that the Cartesian bunkbed
$K_{2} \cart \SC_{p}$ has universal pretty good state transfer for
primes $p \ge 5$.

Given that our example above of prime-length cycles is a circulant family, 
we consider the question of universal state transfer on circulant graphs. 
Here, we prove a spectral characterization of circulants with universal 
perfect state transfer. 
More specifically, the eigenvalues of a $n$-vertex circulant with universal 
perfect state transfer must be a permutation of the integers modulo $n$ under 
a linear bijection. By our earlier result, we know that the switching automorphism
group of a universal perfect state transfer $n$-vertex graph is cyclic and its order 
must divide $n$. But, in the circulant case, we show that the order of this group
must be exactly $n$.

Finally, we give examples of graphs with real adjacency matrices with
universal {\em pretty good} state transfer. This is in contrast to the aforementioned
observation of Kay that universal {\em perfect} state transfer is impossible for such graphs.
Our construction is based on another theorem from number theory (Lindemann's theorem)
and uses Hadamard matrices. Here, we exploit the fact that Hadamard matrices are 
real and flat.

For other works which had studied graphs with complex Hermitian adjacency matrices,
we refer the reader to Kay \cite{kay11} (and some of the references therein) 
and to Zimbor\'{a}s \etal \cite{zfkwlb12}.


\section{Preliminaries} \label{section:preliminaries}

We state some notation used in the rest of the paper.
For a logical statement $S$, we let the Iversonian $\iverson{S}$ denote 
$1$ if $S$ is true, and $0$ otherwise (see Graham \etal \cite{gkp94}).
As is standard,
$\QQ$ denotes the rational numbers, 
$\ZZ$ denotes the integers, 
$\RR$ denotes the real numbers, 
and
$\CC$ denotes the complex numbers.
Throughout the paper, we reserve $i$ to represent $\sqrt{-1}$.
We also use $\TT$ to denote the set of complex numbers with unit modulus; 
that is, $\TT = \{z \in \CC : |z|=1\}$.

The identity matrix is denoted as $\II$. 
For an index $k$, we let $\uket{k}$ denote the unit vector that is $1$ at position $k$ 
and $0$ elsewhere. The inner product of two vectors $\ket{u}, \ket{v} \in \CC^{n}$ is 
denoted $\bracket{\bvket{u}}{\bvket{v}}$.
All norms used on vectors and matrices will be the standard $2$-norms. 
The conjugate transpose of a matrix $A$ is denoted $A^{\dagger}$.

\subsection{Hermitian Graphs}

We are interested in graphs with Hermitian adjacency matrices.
Let $G$ be a graph over a vertex set $V$ whose adjacency matrix $A$ 
is Hermitian. Here, we view the edge set of $G$ as given
by $E = \{(u,v) \in V \times V : A_{u,v} \neq 0\}$. Since $A$ is Hermitian,
if $(u,v) \in E$ then
\begin{equation}
A_{v,u} = A_{u,v}^{-1} = \overline{A}_{u,v}.
\end{equation}
In this sense, we are working with complex {\em gain} graphs 
(see Zaslavsky \cite{z-survey} and the references therein).
For brevity, we refer to these as {\em Hermitian} graphs 
(since their adjacency matrices are Hermitian). 
In several places, we use the notation $V(G)$, $E(G)$ and $A(G)$
to denote the vertex set, edge set and adjacency matrix of $G$, respectively.

A {\em monomial} $n \times n$ matrix is a product of a permutation matrix $P_{\phi}$,
where $\phi$ is a permutation on $n$ elements, and a complex diagonal matrix $D$ 
of size $n$ (see Davis \cite{davis}). 
The $(j,k)$-entry of the permutation matrix $P_{\phi}$ is given by
$(P_{\phi})_{j,k} = \iverson{j = \phi(k)}$, 
for each $j,k \in \{1,\ldots,n\}$.
We use the notation $\tilde{P}_{\phi} = P_{\phi}D$ to denote such a monomial matrix,
where we intentionally suppress details about $D$ since our focus is on the permutation 
action of $\phi$ (rather than the scaling by $D$).
Any monomial matrix is a unitary matrix since
\begin{equation}
(P_{\phi}D)^{\dagger} 
	= \overline{D}P_{\phi}^{T}
	= (P_{\phi}D)^{-1}.
\end{equation}
Thus, the eigenvalues of a monomial matrix all lie in $\TT$.
Also, note that monomial matrices are closed under multiplication since
\begin{equation}
\tilde{P}_{\phi_{1}}\tilde{P}_{\phi_{2}}
=
(P_{\phi_{1}}D_{1})(P_{\phi_{2}}D_{2})
=
P_{\phi_{1}}P_{\phi_{2}}\hat{D}_{1}D_{2}
=
\tilde{P}_{\phi_{1}\phi_{2}},
\end{equation}
where the $(j,j)$-entry of $\hat{D}_{1}$ is equal to the 
$(\phi_{2}(j),\phi_{2}(j))$-entry of $D_{1}$.
Here, we use $\phi_{1}\phi_{2}$ to denote the composition of the permutations
$\phi_{1}$ and $\phi_{2}$.
Since $\II$ is monomial, the set of monomial matrices form a group under
multiplication.

We say two Hermitian graphs $G_{1}$ and $G_{2}$ are {\em switching isomorphic}, 
or $G_{1} \simeq G_{2}$, if there is a monomial matrix $\tilde{P}_{\phi}$,
where $\phi$ is a bijection $\phi: V(G_{2}) \rightarrow V(G_{1})$, with
\begin{equation}
A(G_{2}) 
= \tilde{P}_{\phi}^{\dagger}A(G_{1})\tilde{P}_{\phi}.
= (P_{\phi}D)^{\dagger}A(G_{1})(P_{\phi}D).
\end{equation}
We say $G_{1}$ and $G_{2}$ are {\em switching equivalent}, or $G_{1} \sim G_{2}$, 
if they are switching isomorphic with $P_{\phi} = \II$.
We say $G_{1}$ and $G_{2}$ are {\em isomorphic}, or $G_{1} \cong G_{2}$, 
if they are switching isomorphic with $D = \II$.
The switching automorphism group $\SwAut(G)$ is the group of monomial matrices 
which commute with $A(G)$:
\begin{equation}
\SwAut(G) 
= 
\{\tilde{P}_{\phi}  : A(G)\tilde{P}_{\phi} = \tilde{P}_{\phi}A(G)\}.
\end{equation}
Note that if $P_{\phi}D_{1}$ and $P_{\phi}D_{2}$ are both in $\SwAut(G)$,
then $D_{1} = \gamma D_{2}$, for some $\gamma \in \CC$.

\subsection{Circulants}

A circulant graph is a graph whose adjacency matrix is circulant.
An $n \times n$ circulant matrix $C = \Circ(a_{0},\ldots,a_{n-1})$ defined by 
the sequence $\{a_{k}\}$ is given by
\begin{equation} \label{eqn:circulant-matrix}
C =
\begin{bmatrix}
a_{0} & a_{1} & a_{2} & \ldots & a_{n-1} \\
a_{n-1} & a_{0} & a_{1} & \ldots & a_{n-2} \\
a_{n-2} & a_{n-1} & a_{0} & \ldots & a_{n-3} \\
\vdots & \vdots & \vdots & \ldots & \vdots \\
a_{1} & a_{2} & a_{3} & \ldots & a_{0}
\end{bmatrix}.
\end{equation}
Let $\sigma = (0 \ 1 \ \ldots \ n-1)$ denote the $n$-cycle permutation whose 
permutation matrix is given by the circulant matrix $\Theta_{n} = \Circ(0,0,\ldots,0,1)$:
\begin{equation}
\Theta_{n} =
\begin{bmatrix}
0 & 0 & 0 & \ldots & 0 & 1 \\
1 & 0 & 0 & \ldots & 0 & 0 \\
0 & 1 & 0 & \ldots & 0 & 0 \\
\vdots & \vdots & \vdots & \vdots & \vdots & \vdots \\
0 & 0 & 0 & \ldots & 1 & 0
\end{bmatrix}
\end{equation}
An alternate and useful characterization of circulants is that they are polynomials
in $\Theta_{n}$. For example, the circulant $C$ in Equation (\ref{eqn:circulant-matrix}) may
be written as 
\begin{equation}
C = \sum_{k=0}^{n-1} a_{n-k}\Theta_{n}^{k}.
\end{equation}
Let $\omega_{n} = \exp(2\pi i/n)$ be the principal $n$th root of unity.
The Fourier matrix $\FF_{n}$ of order $n$ is a unitary matrix
whose entries are given by
$\mbbraket{j}{\FF_{n}}{k} = \omega_{n}^{jk}/\sqrt{n}$,
for each $j,k \in \{0,1,\ldots,n-1\}$.
For convenience, we often refer to the $k$th column of $\FF_{n}$ as $\ket{F_{k}}$.
It is known that $\FF$ diagonalizes any circulant $C$; that is, we have
\begin{equation} \label{eqn:fourier-diagonalize-circulant}
\FF_{n}^{\dagger} C \FF_{n} = 
\begin{bmatrix}
\lambda_{0} & 0 & 0 & \ldots & 0 & 0 \\
0 & \lambda_{1} & 0 & \ldots & 0 & 0 \\
0 & 0 & \lambda_{2} & \ldots & 0 & 0 \\
\vdots & \vdots & \vdots & \vdots & \vdots & \vdots \\
0 & 0 & 0 & \ldots & 0 & \lambda_{n-1}
\end{bmatrix}
\end{equation}
By using Equation (\ref{eqn:fourier-diagonalize-circulant}), we may derive the eigenvalues of 
the circulant $C$ which are given by
\begin{equation} \label{eqn:circulant-eigenvalue}
\lambda_{k} = \sum_{j=0}^{n-1} a_{j} \omega_{n}^{jk}.
\end{equation}
More background on circulants may be found in Davis \cite{davis} and 
in Godsil and Royle \cite{godsil-royle01}.

\subsection{Quantum walk}

Given a graph $G=(V,E)$ with adjacency matrix $A$, a continuous-time quantum walk on $G$
is defined by the time-dependent unitary matrix $U(t) = e^{-itA}$, where $t \in \RR$.
For two vertices $a,b \in V$, we say that $G$ has {\em pretty good state transfer} from
$a$ to $b$ if for any $\epsilon > 0$, there is a time $t$ so that
\begin{equation} \label{eqn:pgst}
\anorm{e^{-itA}\uket{a} - \gamma\uket{b}} < \epsilon,
\end{equation}
for some $\gamma \in \TT$.
When the context is clear, we will use the convenient shorthand 
$e^{-itA}\uket{a} \approx \gamma\uket{b}$ to denote Equation (\ref{eqn:pgst}).
\debug{Alternatively}, we may require 
$|\tbracket{\buket{b}}{e^{-itA}}{\buket{a}}| \ge 1 - \epsilon$
in place of Equation (\ref{eqn:pgst}).

We say $G$ has {\em perfect state transfer} from vertex $a$ to vertex $b$ at time $t$ if 
\begin{equation} \label{eqn:pst}
|\tbracket{\buket{b}}{e^{-itA}}{\buket{a}}| = 1.
\end{equation}
A graph $G$ is called {\em periodic} if there is perfect state transfer from $a$ to itself, 
for every vertex $a$ of $G$.

\begin{definition}
A graph $G$ has {\em universal} (pretty good or perfect) {\em state transfer} 
if it has (pretty good or perfect) state transfer between every pair of its vertices.
\end{definition}


\section{Small Graphs with Universal State Transfer} \label{section:small-graphs}

Here, we provide small examples of graphs with universal (perfect and pretty good)
state transfer. First, note that any Hermitian graph based on $K_{2}$ has universal perfect 
state transfer. We may take, for example, graphs whose adjacency matrices are the standard 
Pauli $X$ and $Y$ matrices:
\begin{equation}
X = \begin{bmatrix} 0 & 1 \\ 1 & 0 \end{bmatrix},
\ 
\hspace{.5in}
\
Y = \begin{bmatrix} 0 & -i \\ i & 0 \end{bmatrix}.
\end{equation}
If $\mathcal{K}_{2}$ is a Hermitian graph with adjacency matrix $Y$,
the quantum walk on this graph is
\begin{equation}
e^{-itY}
=
\frac{e^{-it}}{2}
\begin{bmatrix}
1 & -i \\
i & 1
\end{bmatrix}
+
\frac{e^{it}}{2}
\begin{bmatrix}
1 & i \\
-i & 1
\end{bmatrix}
=
\begin{bmatrix}
\cos(t) & -\sin(t) \\
\sin(t) & \cos(t)
\end{bmatrix}
\end{equation}
which shows that it has universal perfect state transfer,
since each matrix entry can be made equal to $1$.

\subsection{Hermitian $3$-Cycle}

Consider the Hermitian graph $\SC_{3}$ (see Figure \ref{figure:c3-k4})
whose adjacency matrix is
\begin{equation}
A(\SC_{3})
=
\begin{bmatrix}
0 & -i & i \\
i & 0 & -i \\
-i & i & 0
\end{bmatrix}.
\end{equation}
Let $\omega = e^{2\pi i/3}$.
Since $\SC_{3}$ is a circulant, its eigenvalues are
$\lambda_{0} = 0$, $\lambda_{1} = \sqrt{3}$ and $\lambda_{2} = -\sqrt{3}$.
Thus, by the spectral decomposition, we have
\begin{equation}
e^{-itA(\SC_{3})}
=
\frac{1}{3}
\begin{bmatrix}
1 & 1 & 1 \\
1 & 1 & 1 \\
1 & 1 & 1
\end{bmatrix}
+
\frac{e^{-it\sqrt{3}}}{3}
\begin{bmatrix}
1 & \overline{\omega} & \omega \\
\omega & 1 & \overline{\omega} \\
\overline{\omega} & \omega & 1
\end{bmatrix}
+
\frac{e^{it\sqrt{3}}}{3}
\begin{bmatrix}
1 & \omega & \overline{\omega} \\
\overline{\omega} & 1 & \omega \\
\omega & \overline{\omega} & 1
\end{bmatrix}
\end{equation}
The quantum walk starting at $0$ (without loss of generality) is given by
\begin{equation}
e^{-itA(\SC_{3})}\uket{0}
	= 
		\frac{1}{3}
		\begin{bmatrix}
		1 + e^{-it\sqrt{3}} + e^{it\sqrt{3}} \\
		1 + \omega e^{-it\sqrt{3}} + \overline{\omega} e^{it\sqrt{3}} \\
		1 + \overline{\omega} e^{-it\sqrt{3}} + \omega e^{it\sqrt{3}} 
		\end{bmatrix} 
	= 
		\frac{1}{3}
		\begin{bmatrix}
		1 + 2\cos(t\sqrt{3}) \\
		1 + 2\cos(t\sqrt{3} - 2\pi/3) \\
		1 + 2\cos(t\sqrt{3} + 2\pi/3)
		\end{bmatrix}
\end{equation}
Thus, we have perfect state transfer 
from $0$ to $0$ at time $t = 2\pi/\sqrt{3}$,
from $0$ to $1$ at time $t = 8\pi/(3\sqrt{3})$,
and
from $0$ to $2$ at time $t = 4\pi/(3\sqrt{3})$.

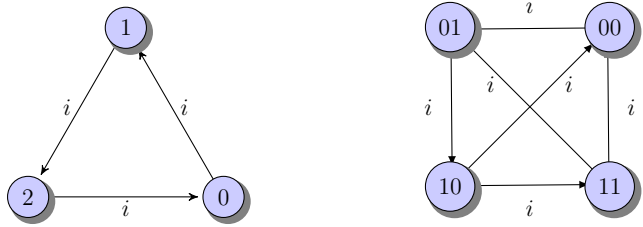
\begin{figure}[t]
\begin{center}
\begin{tikzpicture}[->, >=stealth', shorten >=2pt, scale=1.5]
\foreach \c in {0,...,2}
{
	\draw ({\c*90+(\c-1)*30+6}:0.85) -> ({(\c+1)*90+\c*30-6}:0.85);
	\node at ({\c*120+30}:0.6)[circle][scale=0.75]{$i$};
}
\foreach \c in {0,...,2}
{
	\node at ({\c*90+(\c-1)*30}:1)[circle, circular drop shadow, fill={blue!20}, draw][scale=0.75]{$\c$};
}
\end{tikzpicture}
\quad \quad 
\hspace{.1in}
\quad \quad
\begin{tikzpicture}[scale=1.5][h!!!!!!!!!]
\foreach \c [count=\x from 0] in {0, ..., 3} 
{
	\draw [->, >=latex, shorten >= 2pt]({\x*90+45+6}:0.9) -> ({(\x+1)*90+45-5}:0.9); 
	\node at (\x*90:0.9)[circle][scale=0.75]{$i$}; 
}
\draw [->, >=latex, shorten >= 2pt](225:0.8) -> (45:0.85);	
\draw [->, >=latex, shorten >= 2pt](315:0.8) -> (135:0.85);

{
	\node at (0*90+45:1)[circle, circular drop shadow, fill={blue!20}, draw][scale=0.75]{$00$};
	\node at (1*90+45:1)[circle, circular drop shadow, fill={blue!20}, draw][scale=0.75]{$01$};
	\node at (2*90+45:1)[circle, circular drop shadow, fill={blue!20}, draw][scale=0.75]{$10$};
	\node at (3*90+45:1)[circle, circular drop shadow, fill={blue!20}, draw][scale=0.75]{$11$};
}
\node at (30: 0.4)[circle][scale=0.75]{$i$};
\node at (150: 0.4)[circle][scale=0.75]{$i$};
\end{tikzpicture}
\caption{
Hermitian graphs:
(1) $\SC_{3}$ has universal perfect state transfer;
(2) $\mathcal{K}_{4}$ has universal pretty good state transfer.
For each directed edge labeled with $i$, there is a backward edge labeled with $-i$
(which are omitted for simplicity).
}
\label{figure:c3-k4}
\end{center}
\end{figure}

\subsection{Hermitian $4$-Clique}

Our second example is the Hermitian graph $\mathcal{K}_{4}$ (see Figure \ref{figure:c3-k4})
which is a graph on the vertex set $\zo^{2} = \{00,01,10,11\}$.
The adjacency matrix of this graph is given by:
\begin{equation}
A(\mathcal{K}_{4})
=
\begin{bmatrix}
 0 & -i & i & i \\
i &  0 & -i & i \\
-i & i &  0 & -i \\
-i & -i & i &  0
\end{bmatrix}
\ = \
\II \otimes Y - (Y \otimes \II + X \otimes Y).
\end{equation}
A switching automorphisms of $\mathcal{K}_{4}$ is given by the following
monomial matrix:
\begin{equation}
\tilde{P}_{(0 1)(2 3)}
=
\begin{bmatrix}
0  & i & 0  & 0  \\
-i & 0 & 0  & 0 \\
0  & 0 & 0  & i \\
0  & 0 & -i & 0 
\end{bmatrix}
=
\begin{bmatrix}
0 & 1 & 0 & 0 \\
1 & 0 & 0 & 0 \\
0 & 0 & 0 & 1 \\
0 & 0 & 1 & 0
\end{bmatrix}
\begin{bmatrix}
-i & 0 & 0 & 0 \\
0  & i & 0 & 0 \\
0  & 0 & -i & 0 \\
0  & 0 & 0 & i 
\end{bmatrix}
\end{equation}
To analyze the quantum walk on $\mathcal{K}_{4}$, we need the following lemma
which might be of independent interest.

\begin{lemma} \label{lemma:anticommute-exponential}
Let $A$ and $B$ be two anti-commuting square matrices, that is $AB = -BA$,
which satisfies $A^{2} + B^{2} > 0$. Then, we have
\begin{equation}
e^{it(A+B)} = 
\cos(t\sqrt{A^{2} + B^{2}}) + i\frac{A+B}{\sqrt{A^{2} + B^{2}}} \sin(t\sqrt{A^{2} + B^{2}}).
\end{equation}
\end{lemma}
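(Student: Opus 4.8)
The plan is to reduce the whole statement to the single matrix $C := A+B$ together with its square. First I would observe that the anti-commutativity hypothesis is exactly what kills the cross terms: since $AB = -BA$,
\[
C^2 = (A+B)^2 = A^2 + AB + BA + B^2 = A^2 + B^2.
\]
Because $A^2 + B^2 > 0$ is positive definite, it is invertible and possesses a unique positive-definite square root $M := \sqrt{A^2+B^2}$, so that $M^2 = C^2$ and $M^{-1}$ is well defined. This is where both hypotheses enter: anti-commutativity gives the clean identity $C^2 = A^2+B^2$, and positive-definiteness guarantees that the square root (and its inverse) appearing in the conclusion actually makes sense.

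The one structural fact I would establish carefully is that $C$ commutes with $M$. Since $M$ is the \emph{principal} square root of $C^2$, it is a limit of polynomials in $C^2$; as $C$ trivially commutes with $C^2$, it commutes with $M$ and hence with $M^{-1}$. I expect this to be the main point to get right, because it is tempting but wrong to take $M$ to be an arbitrary square root of $C^2$ — one must use that $M$ is genuinely a function of $C^2$ so that commutativity with $C$ is inherited.

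With commutativity secured, the remainder is a routine Taylor-series computation. Expanding $e^{itC} = \sum_{k \ge 0} (it)^k C^k / k!$ and splitting into even indices $k = 2m$ and odd indices $k = 2m+1$, I would use $C^{2m} = (C^2)^m = M^{2m}$ and $C^{2m+1} = C\,M^{2m}$. The even part collapses to
\[
\sum_{m \ge 0} \frac{(-1)^m t^{2m}}{(2m)!}\, M^{2m} = \cos(tM),
\]
while the odd part, after factoring out $C$ and inserting $M^{-1}M = \II$, becomes
\[
i\,C \sum_{m \ge 0} \frac{(-1)^m t^{2m+1}}{(2m+1)!}\, M^{2m} = i\,C\,M^{-1}\sin(tM).
\]
Since $C$ commutes with $M^{-1}$, this equals $i\,(A+B)(A^2+B^2)^{-1/2}\sin\!\big(t\sqrt{A^2+B^2}\big)$, which combined with the even part yields the claimed identity. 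Convergence of every series involved is automatic, as they are the standard matrix exponential and trigonometric series; the only real content is the commutativity established above, after which the calculation is the matrix analogue of the familiar Pauli identity $e^{i\theta(\hat{n}\cdot\vec{\sigma})} = \cos\theta + i(\hat{n}\cdot\vec{\sigma})\sin\theta$.
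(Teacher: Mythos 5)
Your proposal is correct and follows essentially the same route as the paper: use anti-commutativity to get $(A+B)^2 = A^2+B^2$, then split the exponential series into even and odd parts to recover the cosine and sine. Your extra care in checking that the principal square root $\sqrt{A^2+B^2}$ is a function of $(A+B)^2$ and hence commutes with $A+B$ is a detail the paper leaves implicit, but it does not change the argument.
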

\begin{proof}
Since $AB + BA = 0$, we have $A^{2}$ and $B^{2}$ commute.
This yields
\begin{eqnarray}
(A + B)^{2m} & = & (A^{2} + B^{2})^{m}  \\
(A + B)^{2m+1} & = & (A + B)(A^{2} + B^{2})^{m}.
\end{eqnarray}
Therefore, the exponential matrix $e^{it(A+B)}$ is given by
\begin{eqnarray}
e^{it(A+B)}
	& = & 
		\sum_{m=0}^{\infty} \frac{(it)^{2m}}{(2m)!} (A^{2}+B^{2})^{m}
		+
		(A+B) \sum_{m=0}^{\infty} \frac{(it)^{2m+1}}{(2m+1)!} (A^{2}+B^{2})^{m} \\
	& = &
		\cos(t \sqrt{A^{2}+B^{2}})
		+
		i\frac{(A+B)}{\sqrt{A^{2}+B^{2}}}
		\sin(t \sqrt{A^{2}+B^{2}})
\end{eqnarray}
This proves the claim.
\end{proof}

\par\noindent
We apply Lemma \ref{lemma:anticommute-exponential} with
$A = Y \otimes \II$ and $B = X \otimes Y$. 
Since $A^{2} + B^{2} = 2\II$, we have
\begin{eqnarray} \label{eqn:k4-fidelity}
e^{it(A+B)}
& = & \cos(t\sqrt{2})\II + \frac{1}{\sqrt{2}}\sin(t\sqrt{2})(A+B) \\
& = &
\frac{1}{\sqrt{2}}
\begin{bmatrix}
\sqrt{2}\cos(t\sqrt{2}) & 0 & \sin(t\sqrt{2}) & \sin(t\sqrt{2}) \\
0 & \sqrt{2}\cos(t\sqrt{2}) & -\sin(t\sqrt{2}) & \sin(t\sqrt{2}) \\
-\sin(t\sqrt{2}) & \sin(t\sqrt{2}) & \sqrt{2}\cos(t\sqrt{2}) & 0 \\
-\sin(t\sqrt{2}) & -\sin(t\sqrt{2}) & 0 & \sqrt{2}\cos(t\sqrt{2}) 
\end{bmatrix}
\end{eqnarray}
Also, note that $\II \otimes Y$ commutes with both $A$ and $B$.
Given the tensor decomposition of $A(\mathcal{K}_{4})$, we will view the 
vertices of $\mathcal{K}_{4}$ as binary strings over $\zo^{2}$.
The quantum walk on $\mathcal{K}_{4}$ starting at vertex $00$ is given by
\begin{eqnarray}
e^{-itA(\mathcal{K}_{4})}\uket{00}
	& = & e^{it(A+B)} e^{-it(\II \otimes Y)}\uket{00} 
		= e^{it(A+B)} (\II \otimes e^{-itY})\uket{00} \\
	& = & e^{it(A+B)} (\cos(t)\uket{00} + \sin(t)\uket{01}).
\end{eqnarray}
Using Equation (\ref{eqn:k4-fidelity}), we see that
\begin{eqnarray}
e^{-itA(\mathcal{K}_{4})}\uket{00}
	& = & \cos(t)\cos(t\sqrt{2}) \uket{00} + \sin(t)\cos(t\sqrt{2}) \uket{01} \\
	& & + \ \mbox{\normalsize $\frac{1}{\sqrt{2}}$} (-\cos(t) + \sin(t)) \sin(t\sqrt{2}) \uket{10} \\ 
	& & + \ \mbox{\normalsize $\frac{1}{\sqrt{2}}$} (-\cos(t) - \sin(t)) \sin(t\sqrt{2}) \uket{11}
\end{eqnarray}
Since $\{1,\sqrt{2}\}$ is linearly independent over the rationals,
$\mathcal{K}_{4}$ has perfect state transfer from vertex $00$ to all other vertices.
For example, to reach vertex $11$, we can arrange so that 
$\sin(t) \approx 1/\sqrt{2}$, $\cos(t) \approx 1/\sqrt{2}$ 
and $\sin(t\sqrt{2}) \approx 1$ simultaneously for some $t$.
We will formalize this argument using Kronecker's Theorem 
for several family of graphs in Theorem \ref{thm:circulant-with-upgst}.


\section{State Transfer on Graphs}

In this section, we prove that if a graph $G$ has state transfer from 
a vertex $a$ to a vertex $\phi(a)$ for some automorphism $\phi$ of $G$, 
then the unitary matrix $e^{-itA(G)}$ is equal to the permutation matrix $P_{\phi}$
up to scaling.
This property was first observed by Godsil \etal \cite{gkss12} in the context of 
pretty good state transfer on paths. We generalize their observation to arbitrary 
graphs under some mild assumptions. First, we prove this for perfect state transfer
since the proof is simpler and more direct.

\begin{theorem} \label{thm:pst-implies-permutation}
Let $G$ be a $n$-vertex graph with {perfect} state transfer 
from vertex $a$ to vertex $\phi(a)$ at time $t$,
for some switching automorphism $\tilde{P}_{\phi} \in \SwAut(G)$.
Suppose that $A(G)$ and $\tilde{P}_{\phi}$ share a set 
$\{\ket{z_{1}},\ldots,\ket{z_{n}}\}$ of orthonormal eigenvectors
which satisfies $\bracket{\buket{a}}{\bvket{z_{k}}} \neq 0$, for each $k$.
Then, 
\begin{equation}
e^{-itA(G)} = \gamma\tilde{P}_{\phi},
\end{equation}
for some $\gamma \in \TT$.
\end{theorem}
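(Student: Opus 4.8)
The plan is to exploit the shared eigenbasis to diagonalize both operators simultaneously, then translate the perfect state transfer condition into an equation that forces each eigenvalue of $e^{-itA(G)}$ to match the corresponding eigenvalue of $\gamma\tilde{P}_{\phi}$. Write $A(G)\ket{z_k} = \lambda_k \ket{z_k}$ and $\tilde{P}_{\phi}\ket{z_k} = \mu_k \ket{z_k}$, where the $\mu_k \in \TT$ since $\tilde{P}_{\phi}$ is unitary. Then $e^{-itA(G)}\ket{z_k} = e^{-it\lambda_k}\ket{z_k}$, so in the $\{\ket{z_k}\}$ basis both $e^{-itA(G)}$ and $\tilde{P}_{\phi}$ are diagonal. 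Hence it suffices to show $e^{-it\lambda_k} = \gamma \mu_k$ for a single $\gamma \in \TT$ independent of $k$.

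First I would expand the perfect state transfer hypothesis. Since $|\tbracket{\phi(a)}{e^{-itA(G)}}{a}| = 1$ and $e^{-itA(G)}$ is unitary, the vector $e^{-itA(G)}\uket{a}$ must equal $\gamma\uket{\phi(a)}$ for some $\gamma \in \TT$; a unit vector whose inner product with another unit vector has modulus $1$ must be a unimodular scalar multiple of it. So I have the clean identity $e^{-itA(G)}\uket{a} = \gamma\tilde{P}_{\phi}\uket{a}$, using that $\tilde{P}_{\phi}\uket{a}$ is a unimodular multiple of $\uket{\phi(a)}$ (absorbing the diagonal scaling of $D$ into $\gamma$). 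Now I would expand $\uket{a}$ in the shared eigenbasis as $\uket{a} = \sum_k \bracket{z_k}{a}\ket{z_k}$, apply both operators, and compare coefficients of each $\ket{z_k}$. This yields
\begin{equation}
e^{-it\lambda_k}\bracket{z_k}{a} = \gamma\mu_k\bracket{z_k}{a}
\end{equation}
for every $k$. Here is where the nonvanishing hypothesis $\bracket{a}{z_k} \neq 0$ does its work: I can cancel $\bracket{z_k}{a}$ on both sides to conclude $e^{-it\lambda_k} = \gamma\mu_k$ for all $k$.

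Finally I would assemble the conclusion. Having shown $e^{-it\lambda_k} = \gamma\mu_k$ for every $k$, the two operators $e^{-itA(G)}$ and $\gamma\tilde{P}_{\phi}$ agree on every vector of the orthonormal eigenbasis $\{\ket{z_k}\}$, hence are equal as matrices. The main obstacle, and the step deserving the most care, is the very first reduction: justifying that a modulus-one overlap forces $e^{-itA(G)}\uket{a}$ to be exactly $\gamma\uket{\phi(a)}$ (equivalently $\gamma\tilde{P}_{\phi}\uket{a}$ up to the diagonal phase), since this is what lets the argument proceed entrywise in the eigenbasis rather than merely comparing magnitudes. Everything downstream is linear algebra once that rigidity is in hand; the nonvanishing condition on the eigenvector entries is exactly the hypothesis needed to promote the single-vector identity to an identity of diagonal entries and thus of the full matrices. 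I expect the subtlety to be bookkeeping the phase contributed by the diagonal matrix $D$ in $\tilde{P}_{\phi}$, which simply gets folded into the global scalar $\gamma \in \TT$.
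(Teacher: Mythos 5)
Your proof is correct, and it forces the phase equality by a different mechanism than the paper. The paper argues at the scalar level: perfect state transfer gives
$1 = \left|\tbracket{\buket{a}}{\tilde{P}^{\dagger}_{\phi}e^{-itA(G)}}{\buket{a}}\right| = \left|\sum_{k=1}^{n} e^{-i(t\lambda_{k}+\theta_{k})}\,|\bracket{\buket{a}}{\bvket{z_{k}}}|^{2}\right|$,
a modulus-one convex combination of unimodular numbers, and then invokes the equality case of the triangle inequality, packaged as Lemma~\ref{lemma:complex-sum-forcing} in the appendix, to conclude that all the phases $e^{-i(t\lambda_{k}+\theta_{k})}$ coincide; the hypothesis $\bracket{\buket{a}}{\bvket{z_{k}}}\neq 0$ enters there as strict positivity of the weights required by that lemma. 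You argue at the vector level: unitarity of $e^{-itA(G)}$ upgrades the modulus-one overlap to the exact identity $e^{-itA(G)}\uket{a}=\gamma\tilde{P}_{\phi}\uket{a}$ (Cauchy--Schwarz equality for unit vectors), after which the phases are forced by comparing coefficients in the shared orthonormal eigenbasis and cancelling the nonzero overlaps $\bracket{\bvket{z_{k}}}{\buket{a}}$. Both routes spend the nonvanishing hypothesis at the same logical moment and finish identically, but yours is self-contained --- it never needs the appendix lemma, only the rigidity of unit vectors under modulus-one inner products --- whereas the paper's scalar formulation is the one that degrades gracefully to the pretty good setting of Theorem~\ref{thm:pgst-implies-permutation}, where no exact vector identity is available and one must instead estimate $\sum_{k}|e^{-i(t\lambda_{k}+\theta_{k})}-\gamma|^{2}\,|\bracket{\buket{a}}{\bvket{z_{k}}}|^{2}$; your argument can be read as the $\epsilon = 0$ limit of that quantitative proof.
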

\begin{proof}
Let $\lambda_{1},\ldots,\lambda_{n}$ be the eigenvalues of $A(G)$.
Given that $\tilde{P}_{\phi}$ is a monomial matrix, we assume its eigenvalues 
are given by $e^{\theta_{1}},\ldots,e^{\theta_{n}}$, for some real numbers $\theta_{k} \in \RR$.
Since $A(G)$ and $\tilde{P}_{\phi}$ share the set $\{\ket{z_{1}},\ldots,\ket{z_{n}}\}$ 
of orthonormal eigenvectors, we have
\begin{equation}
A(G) = \sum_{k=1}^{n} \lambda_{k}\ketbra{z_{k}}{z_{k}},
\ \ \
\hspace{.2in}
\ \ \
\tilde{P}_{\phi} = \sum_{k=1}^{n} e^{i\theta_{k}}\ketbra{z_{k}}{z_{k}}.
\end{equation}
Since $G$ has perfect state transfer from vertex $a$ to vertex $\phi(a)$ at time $t$, we have
\begin{equation}
1 = |\tbracket{\buket{a}}{\tilde{P}^{\dagger}_{\phi}e^{-itA(G)}}{\buket{a}}| 
	= \left|\sum_{k=1}^{n} e^{-i(t\lambda_{k}+\theta_{k})} 
		\bracket{\buket{a}}{\bvket{z_{k}}} \bracket{\bvket{z_{k}}}{\buket{a}}\right| 
	\le \sum_{k=1}^{n} |\bracket{\buket{a}}{\bvket{z_{k}}}|^{2}
	= 1,
\end{equation}
where the last equality holds since $\sum_{k} \ketbra{z_{k}}{z_{k}} = \II$.
Thus, there is a real number $\alpha \in \RR$ so that
$e^{-i(t\lambda_{k} + \theta_{k})} = e^{i\alpha}$,
for each $k=1,\ldots,n$ (see Lemma \ref{lemma:complex-sum-forcing}).
Therefore, we have
\begin{equation}
e^{-itA(G)} 
	= \sum_{k=1}^{n} e^{-it\lambda_{k}} \ketbra{z_{k}}{z_{k}}
	= e^{i\alpha} \sum_{k=1}^{n} e^{i\theta_{k}} \ketbra{z_{k}}{z_{k}}
	= e^{i\alpha} \tilde{P}_{\phi}.
\end{equation}
This proves the claim.
\end{proof}

Next, we generalize Theorem \ref{thm:pst-implies-permutation} for the case of
pretty good state transfer.

\begin{theorem} \label{thm:pgst-implies-permutation}
Let $G$ be a $n$-vertex graph with pretty good state transfer 
from vertex $a$ to vertex $\phi(a)$,
for some switching automorphism $\tilde{P}_{\phi} \in \SwAut(G)$.
Suppose that $A(G)$ and $\tilde{P}_{\phi}$ share a set 
$\{\ket{z_{1}},\ldots,\ket{z_{n}}\}$ of orthonormal eigenvectors
which satisfies $\bracket{\buket{a}}{\bvket{z_{k}}} \neq 0$, for each $k$.
Then, for each $\epsilon > 0$ there is a time $t \in \RR$ where
\begin{equation}
\anorm{e^{-itA(G)} - \gamma\tilde{P}_{\phi}} \le \epsilon,
\end{equation}
for some $\gamma \in \TT$.
\end{theorem}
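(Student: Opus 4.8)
The plan is to follow the skeleton of the proof of Theorem~\ref{thm:pst-implies-permutation}, but since pretty good state transfer only supplies approximate agreement on the single vector $\ket{a}$, the real work is in propagating that approximation to the full operators. I would begin by writing $A(G) = \sum_{k} \lambda_{k}\ketbra{z_{k}}{z_{k}}$ and $\tilde{P}_{\phi} = \sum_{k} e^{i\theta_{k}}\ketbra{z_{k}}{z_{k}}$ in the shared orthonormal eigenbasis, and expanding $\ket{a} = \sum_{k} c_{k}\ket{z_{k}}$ with $c_{k} = \braket{z_{k}}{a}$. By hypothesis $\braket{a}{z_{k}} \neq 0$ for every $k$, so each $c_{k} \neq 0$, and I would single out $c_{\min} = \min_{k} |c_{k}| > 0$ as the quantity controlling the entire estimate.

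First I would recast the state-transfer hypothesis in terms of $\tilde{P}_{\phi}$ rather than $\ket{\phi(a)}$. Since $\tilde{P}_{\phi} = P_{\phi}D$ is monomial, $\tilde{P}_{\phi}\ket{a} = D_{a,a}\ket{\phi(a)}$ with $D_{a,a} \in \TT$, so $\gamma\ket{\phi(a)} = \gamma'\tilde{P}_{\phi}\ket{a}$ exactly, where $\gamma' = \gamma\overline{D_{a,a}} \in \TT$. Hence the two conditions $e^{-itA(G)}\ket{a} \approx \gamma\ket{\phi(a)}$ and $e^{-itA(G)}\ket{a} \approx \gamma'\tilde{P}_{\phi}\ket{a}$ are literally the same norm. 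Thus for any prescribed $\delta > 0$, pretty good state transfer from $a$ to $\phi(a)$ hands me a time $t$ and a phase $\gamma' \in \TT$ with $\anorm{e^{-itA(G)}\ket{a} - \gamma'\tilde{P}_{\phi}\ket{a}} < \delta$.

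Next I would expand this norm in the eigenbasis. Because $e^{-itA(G)}\ket{a} - \gamma'\tilde{P}_{\phi}\ket{a} = \sum_{k} (e^{-it\lambda_{k}} - \gamma' e^{i\theta_{k}})\, c_{k}\ket{z_{k}}$ and the $\ket{z_{k}}$ are orthonormal, the squared norm is $\sum_{k} |e^{-it\lambda_{k}} - \gamma' e^{i\theta_{k}}|^{2}\,|c_{k}|^{2} < \delta^{2}$. Retaining only one term at a time and using $|c_{k}| \ge c_{\min}$ yields the per-mode bound $\max_{k}|e^{-it\lambda_{k}} - \gamma' e^{i\theta_{k}}| < \delta / c_{\min}$. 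I would then observe that $e^{-itA(G)} - \gamma'\tilde{P}_{\phi} = \sum_{k}(e^{-it\lambda_{k}} - \gamma' e^{i\theta_{k}})\ketbra{z_{k}}{z_{k}}$ is normal and diagonal in the orthonormal eigenbasis, so its $2$-norm is exactly that maximum. Choosing $\delta = \epsilon\, c_{\min}$ at the start then gives $\anorm{e^{-itA(G)} - \gamma\tilde{P}_{\phi}} < \epsilon$ with $\gamma = \gamma'$, which is the claim.

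The one genuine subtlety, and the reason the hypothesis $\braket{a}{z_{k}} \neq 0$ for every $k$ is indispensable, is precisely the passage from agreement on the single vector $\ket{a}$ to agreement of the full operators: this is where $c_{\min} > 0$ enters, since the vanishing of any coefficient $c_{k}$ would make the mode $\ket{z_{k}}$ invisible to $\ket{a}$, leaving the phase $e^{-it\lambda_{k}}$ uncontrolled and destroying the operator bound. Everything else is a routine orthonormal-expansion estimate, and I do not expect to need any Kronecker-type number theory here, because the state-transfer time $t$ is furnished directly by the pretty good state transfer hypothesis.
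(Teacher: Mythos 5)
Your proposal is correct and follows essentially the same route as the paper's proof: expand $A(G)$ and $\tilde{P}_{\phi}$ in the shared orthonormal eigenbasis, use the minimum overlap $\min_{k}|\braket{\buket{a}}{\bvket{z_{k}}}|>0$ to turn the vector estimate from pretty good state transfer into per-mode phase bounds $|e^{-it\lambda_{k}}-\gamma e^{i\theta_{k}}|$, and then pass from those to an operator bound. The only difference is in the last step and is cosmetic: where the paper bounds $\anorm{\tilde{P}_{\phi}^{\dagger}e^{-itA(G)}-\gamma\II}$ by a triangle inequality followed by Cauchy--Schwarz (costing a factor $\sqrt{n}$ and forcing the choice $\tilde{\epsilon}=\delta\epsilon/\sqrt{n}$), you note that the difference operator is normal and diagonal in the orthonormal basis, so its $2$-norm is exactly the maximum modulus of its diagonal entries, which avoids the $\sqrt{n}$ loss.
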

\begin{proof}
Suppose $A(G)$ has eigenvalues $\lambda_{1},\ldots,\lambda_{n}$.
Given that $\tilde{P}_{\phi}$ is a monomial matrix, 
we assume its eigenvalues are given by $e^{\theta_{1}},\ldots,e^{\theta_{n}}$, 
for some real numbers $\theta_{k} \in \RR$.
Since $A(G)$ and $\tilde{P}_{\phi}$ share the set $\{\ket{z_{1}},\ldots,\ket{z_{n}}\}$ 
of orthonormal eigenvectors, we have
$A(G) = \sum_{k=1}^{n} \lambda_{k}\ketbra{z_{k}}{z_{k}}$
and
$\tilde{P}_{\phi} = \sum_{k=1}^{n} e^{i\theta_{k}}\ketbra{z_{k}}{z_{k}}$.
Thus, we have
\begin{equation} \label{eqn:pgst-decomp}
\tilde{P}_{\phi}^{\dagger}e^{-i\tilde{t}A(G)}
=
\sum_{k=1}^{n} e^{-i(\tilde{t}\lambda_{k}+\theta_{k})}\ketbra{z_{k}}{z_{k}}.
\end{equation}
Since $G$ has pretty good state transfer from vertex $a$ to vertex $\phi(a)$,
for any $\tilde{\epsilon} > 0$, there is a time $\tilde{t}$ where
\begin{equation} \label{eqn:pgst-vector}
\anorm{e^{-i\tilde{t}A(G)}\uket{a} - \gamma\tilde{P}_{\phi}\uket{a}} \ < \ \tilde{\epsilon},
\end{equation}
for some $\gamma \in \TT$. 
By using Equation (\ref{eqn:pgst-decomp}) and the fact that $\tilde{P}_{\phi}$ is unitary, 
we rewrite the left-hand side of Equation (\ref{eqn:pgst-vector}) as
\begin{equation} \label{eqn:pgst-spectral}
\anorm{e^{-i\tilde{t}A(G)}\uket{a} - \gamma\tilde{P}_{\phi}\uket{a}}
	=
	\anorm{(\tilde{P}_{\phi}^{\dagger}e^{-i\tilde{t}A(G)} - \gamma \II)\uket{a}} 
	=
	\left\lVert \sum_{k=1}^{n} (e^{-i(\tilde{t}\lambda_{k}+\theta_{k})}-\gamma)
		\bracket{\bvket{z_{k}}}{\buket{a}} \ket{z_{k}} \right\rVert.
\end{equation}
Since $\{\ket{z_{1}},\ldots,\ket{z_{n}}\}$ is an orthonormal set, 
by combining Equations (\ref{eqn:pgst-vector}) and (\ref{eqn:pgst-spectral}), we have
\begin{equation}
\left\lVert \sum_{k=1}^{n} 
	(e^{-i(\tilde{t}\lambda_{k}+\theta_{k})}-\gamma) 
		\bracket{\bvket{z_{k}}}{\buket{a}} \ket{z_{k}} \right\rVert^{2} 
\ = \
\sum_{k=1}^{n} |e^{-i(\tilde{t}\lambda_{k}+\theta_{k})}-\gamma|^{2} 
	\cdot
	|\bracket{\buket{a}}{\bvket{z_{k}}}|^{2} 
\ < \ \tilde{\epsilon}^{2}.
\end{equation}
Suppose that $\delta = \min_{k}\{|\bracket{\buket{a}}{\bvket{z_{k}}}|\}$, 
where $0 < \delta \le 1$.
Then, we get
\begin{equation} \label{eqn:cauchy-schwarz}
\sum_{k=1}^{n} \left|e^{-i(\tilde{t}\lambda_{k}+\theta_{k})}-\gamma\right|^{2}
\ < \ 
\left(\frac{\tilde{\epsilon}}{\delta}\right)^{2}.
\end{equation}
This allows us to obtain the following upper bound:
\begin{eqnarray}
\anorm{\tilde{P}_{\phi}^{\dagger}e^{-i\tilde{t}A(G)} - \gamma\II}
	& = & \left\lVert \sum_{k=1}^{n} (e^{-i(\tilde{t}\lambda_{k}+\theta_{k})} 
			- \gamma)\ketbra{z_{k}}{z_{k}} \right\rVert \\
	& \le & \sum_{k=1}^{n} |e^{-i(\tilde{t}\lambda_{k}+\theta_{k})} - \gamma| 
			\cdot \anorm{\ketbra{z_{k}}{z_{k}}} \\
	& \le & \sum_{k=1}^{n} |e^{-i(\tilde{t}\lambda_{k}+\theta_{k})} - \gamma|,
		\ \ \mbox{ since $\anorm{\ketbra{z_{k}}{z_{k}}} \le 1$ } \\
	& \le & \sqrt{n} \left(\frac{\tilde{\epsilon}}{\delta}\right), 
		\ \ \mbox{ by Cauchy-Schwarz on Equation (\ref{eqn:cauchy-schwarz}).}
\end{eqnarray}
We have used
$\anorm{\ketbra{z_{k}}{z_{k}}} = \max \ \anorm{(\ket{z_{k}}\bra{z_{k}})\ket{x}}
\le |\bracket{\bvket{z_{k}}}{\bvket{x}}| \le 1$, where the maximum is taken over all unit-norm
$\ket{x}$.

So, given $\epsilon > 0$, we choose $\tilde{\epsilon} = \delta\epsilon/\sqrt{n}$.
Since $G$ has pretty good state transfer from $a$ to $\phi(a)$, there is a time $\tilde{t}$
so that $\anorm{e^{-i\tilde{t}A(G)}\uket{a} - \gamma\tilde{P}_{\phi}\uket{a}} 
< \delta\epsilon/\sqrt{n} = \tilde{\epsilon}$.
From the derivation above, 
we get
\begin{equation}
\anorm{e^{-i\tilde{t}A(G)} - \gamma\tilde{P}_{\phi}} 
\ = \
\anorm{\tilde{P}_{\phi}^{\dagger}e^{-i\tilde{t}A(G)} - \gamma\II}
\ < \
\epsilon.
\end{equation}
This proves the claim.
\end{proof}

In \cite{godsil-dm11}, Godsil proved that if $G$ is a vertex-transitive graph
where perfect state transfer occurs between vertices $a$ and $b$ at time $t$,
then $e^{-itA(G)}$ is a scalar multiple of a permutation matrix $P$, 
for which $P\uket{a} = \uket{b}$, where $P$ is of order two with no fixed points 
and $P$ lies in the center of the automorphism group of $G$.
We generalize this observation to Hermitian graphs.

\begin{corollary} \label{cor:pst-implies-center}
Suppose $G$ has perfect state transfer from vertex $a$ to vertex $\phi(a)$ at time $t$,
for some switching automorphism $\tilde{P}_{\phi} \in \SwAut(G)$.
Assume that $A(G)$ and $\tilde{P}_{\phi}$ share a set $\{\ket{z_{1}},\ldots,\ket{z_{n}}\}$
of orthonormal eigenvectors, where $\bracket{\buket{a}}{\bvket{z_{k}}} \neq 0$, for each $k$.
Then, $G$ has perfect state transfer from vertex $b$ to vertex $\phi(b)$ at time $t$, 
for each $b \in V(G)$.
Moreover, $\tilde{P}_{\phi}$ is in the center of $\SwAut(G)$.
\end{corollary}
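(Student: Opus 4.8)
The plan is to extract both assertions from the single identity supplied by Theorem \ref{thm:pst-implies-permutation}. The hypotheses of the corollary are exactly those of that theorem, so I would first invoke it to obtain a scalar $\gamma \in \TT$ with
\begin{equation}
e^{-itA(G)} = \gamma\tilde{P}_{\phi}.
\end{equation}
Both claims then become elementary statements about the monomial matrix $\gamma\tilde{P}_{\phi}$, and no further spectral computation is required.

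For the first claim, I would write $\tilde{P}_{\phi} = P_{\phi}D$ and note that $\tilde{P}_{\phi}\uket{b} = D_{b,b}\uket{\phi(b)}$ for every vertex $b$, since $P_{\phi}\uket{b} = \uket{\phi(b)}$ and $D$ is diagonal. Hence the transition amplitude from $b$ to $\phi(b)$ at time $t$ is
\begin{equation}
\tbracket{\phi(b)}{e^{-itA(G)}}{b} = \gamma\, D_{b,b}\,\bracket{\phi(b)}{\phi(b)} = \gamma\, D_{b,b}.
\end{equation}
Because a monomial matrix is unitary, each column $D_{b,b}\uket{\phi(b)}$ must have unit norm, forcing $|D_{b,b}| = 1$; together with $|\gamma| = 1$ this yields $|\tbracket{\phi(b)}{e^{-itA(G)}}{b}| = 1$, which is perfect state transfer from $b$ to $\phi(b)$ at the same time $t$.

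For the second claim, I would use that every $\tilde{Q} \in \SwAut(G)$ commutes with $A(G)$ by definition, and therefore commutes with the analytic function $e^{-itA(G)}$ of $A(G)$. Substituting the identity above gives $\tilde{Q}\,(\gamma\tilde{P}_{\phi}) = (\gamma\tilde{P}_{\phi})\,\tilde{Q}$, and cancelling the unit scalar $\gamma$ leaves $\tilde{Q}\tilde{P}_{\phi} = \tilde{P}_{\phi}\tilde{Q}$. Since $\tilde{Q}$ was arbitrary, $\tilde{P}_{\phi}$ commutes with all of $\SwAut(G)$ and hence lies in its center.

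There is no genuine obstacle once Theorem \ref{thm:pst-implies-permutation} is available; the corollary is essentially a reading-off of its conclusion. The only points requiring a moment's care are the unitarity argument that pins down $|D_{b,b}| = 1$ for every $b$ (so that the amplitude has modulus exactly one, not merely at most one), and the observation that the global phase $\gamma$ is irrelevant to the commutation computation.
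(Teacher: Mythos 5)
Your proposal is correct and follows essentially the same route as the paper: both proofs reduce the corollary to the identity $e^{-itA(G)} = \gamma\tilde{P}_{\phi}$ from Theorem \ref{thm:pst-implies-permutation}, read off perfect state transfer from $b$ to $\phi(b)$ by applying this monomial matrix to $\uket{b}$, and obtain centrality from the fact that every element of $\SwAut(G)$ commutes with $A(G)$ and hence with $e^{-itA(G)}$. Your explicit check that $|D_{b,b}|=1$ and your direct ``commutes with any analytic function of $A(G)$'' argument (where the paper detours through eigenprojectors being polynomials in $A(G)$) are minor stylistic differences, not a different proof.
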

\begin{proof}
Suppose $G$ has perfect state transfer from $a$ to $\phi(a)$ at time $t$, for some
$\tilde{P}_{\phi} \in \SwAut(G)$. 
Assume $A(G)$ and $\tilde{P}_{\phi}$ share an eigenbasis for which vertex $a$ 
has full eigenvector support.
By Theorem \ref{thm:pst-implies-permutation}, we have 
$e^{-itA(G)} = \gamma\tilde{P}_{\phi}$, for some $\gamma \in \TT$. 
So, for each vertex $b$, we have
\begin{equation}
e^{-itA(G)}\uket{b} \ = \ (\gamma\tilde{P}_{\phi}) \uket{b} \ = \ \gamma \uket{\phi(b)}.
\end{equation}
Suppose $A(G) = \sum_{k=} \lambda_{k}E_{k}$, where $E_{k}$ is the eigenprojector 
corresponding to eigenvalue $\lambda_{k}$. Thus, we have
\begin{equation}
e^{-itA(G)} = \sum_{k} e^{-it\lambda_{k}}E_{k}.
\end{equation}
Since each eigenprojector $E_{k}$ is a polynomial in $A(G)$ (see Godsil \cite{godsil93}),
every switching automorphism $\tilde{P}_{\psi} \in \SwAut(G)$ commutes with $e^{-itA(G)}$
(by definition $\tilde{P}_{\psi}$ commutes with $A(G)$). 
Since $\tilde{P}_{\phi}$ is a scalar multiple of $e^{-itA(G)}$, 
each $\tilde{P}_{\psi}$ commutes with $\tilde{P}_{\phi}$.
\end{proof}

Next, we prove the version of Corollary \ref{cor:pst-implies-center} for
pretty good state transfer.

\begin{corollary} \label{cor:pgst-implies-center}
Suppose $G$ has pretty good state transfer from vertex $a$ to vertex $\phi(a)$,
for some switching automorphism $\tilde{P}_{\phi} \in \SwAut(G)$.
Assume $A(G)$ and $\tilde{P}_{\phi}$ share a set $\{\ket{z_{1}},\ldots,\ket{z_{n}}\}$
of orthonormal eigenvectors, where $\bracket{\buket{a}}{\bvket{z_{k}}} \neq 0$, for each $k$.
Then, $G$ has pretty good state transfer from vertex $b$ to vertex $\phi(b)$, 
for each $b \in V(G)$.
Moreover, $\tilde{P}_{\phi}$ is in the center of $\SwAut(G)$.
\end{corollary}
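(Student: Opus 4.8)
The plan is to follow the strategy of Corollary~\ref{cor:pst-implies-center}, replacing the exact identity $e^{-itA(G)} = \gamma\tilde{P}_{\phi}$ (available only in the perfect case) by the approximate version supplied by Theorem~\ref{thm:pgst-implies-permutation}. That theorem guarantees that for every $\epsilon > 0$ there is a time $t \in \RR$ and a phase $\gamma \in \TT$ with $\anorm{e^{-itA(G)} - \gamma\tilde{P}_{\phi}} \le \epsilon$ in the operator $2$-norm, and I would extract both conclusions of the corollary from this single estimate.

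For the state-transfer claim, I would fix a vertex $b$ and an $\epsilon > 0$, and use submultiplicativity of the operator norm together with $\anorm{\uket{b}} = 1$:
\begin{equation}
\anorm{e^{-itA(G)}\uket{b} - \gamma\tilde{P}_{\phi}\uket{b}}
\ \le \
\anorm{e^{-itA(G)} - \gamma\tilde{P}_{\phi}} \ \le \ \epsilon.
\end{equation}
Since $\tilde{P}_{\phi} = P_{\phi}D$ is a unitary monomial matrix, its diagonal factor $D$ has unit-modulus entries, so $\tilde{P}_{\phi}\uket{b} = D_{b,b}\uket{\phi(b)}$ with $|D_{b,b}| = 1$. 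Hence $\gamma\tilde{P}_{\phi}\uket{b} = (\gamma D_{b,b})\uket{\phi(b)}$ with $\gamma D_{b,b} \in \TT$, and the displayed inequality is exactly the definition of pretty good state transfer from $b$ to $\phi(b)$. As $\epsilon$ and $b$ were arbitrary, this gives pretty good state transfer from $b$ to $\phi(b)$ for every $b \in V(G)$.

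The centrality assertion is where the perfect-state-transfer proof does not carry over verbatim, and I expect it to be the main obstacle: there one uses that $\tilde{P}_{\phi}$ is literally a scalar multiple of $e^{-itA(G)}$, whereas here the identity holds only up to $\epsilon$. My plan is to convert this into a commutator estimate. Let $\tilde{P}_{\psi} \in \SwAut(G)$ be arbitrary; by definition it commutes with $A(G)$, hence with $e^{-itA(G)}$ for every $t$. For a given $\epsilon > 0$, I would pick $t$ and $\gamma$ as above and insert $e^{-itA(G)}$ into the commutator:
\begin{align}
\tilde{P}_{\psi}\tilde{P}_{\phi} - \tilde{P}_{\phi}\tilde{P}_{\psi}
&= \gamma^{-1}\big[\tilde{P}_{\psi}(\gamma\tilde{P}_{\phi} - e^{-itA(G)})
+ (\tilde{P}_{\psi}e^{-itA(G)} - e^{-itA(G)}\tilde{P}_{\psi}) \notag \\
&\qquad + (e^{-itA(G)} - \gamma\tilde{P}_{\phi})\tilde{P}_{\psi}\big].
\end{align}
The middle term vanishes by commutation, and using $\anorm{\tilde{P}_{\psi}} = 1$, $|\gamma^{-1}| = 1$, and $\anorm{e^{-itA(G)} - \gamma\tilde{P}_{\phi}} \le \epsilon$, I obtain $\anorm{\tilde{P}_{\psi}\tilde{P}_{\phi} - \tilde{P}_{\phi}\tilde{P}_{\psi}} \le 2\epsilon$. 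The left-hand side is a fixed matrix independent of $\epsilon$, so letting $\epsilon \to 0$ forces the commutator to be zero. Thus $\tilde{P}_{\phi}$ commutes with every element of $\SwAut(G)$ and lies in its center, completing the proof.
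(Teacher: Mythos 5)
Your proposal is correct and follows essentially the same route as the paper: both parts rest on the operator-norm estimate from Theorem~\ref{thm:pgst-implies-permutation}, with the state-transfer claim obtained by applying that estimate to $\uket{b}$ and the centrality claim obtained by inserting $e^{-itA(G)}$ into the commutator, using that $\tilde{P}_{\psi}$ commutes with $e^{-itA(G)}$, and letting $\epsilon \to 0$. Your explicit remark that $\tilde{P}_{\phi}\uket{b} = D_{b,b}\uket{\phi(b)}$ with $|D_{b,b}|=1$ is a small clarification the paper leaves implicit, but the argument is otherwise the same.
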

\begin{proof}
Let $G$ be a graph with pretty good state transfer from $a$ to $\phi(a)$,
for some $\tilde{P}_{\phi} \in \SwAut(G)$.
Assume that $A(G)$ and $\tilde{P}_{\phi}$ share an eigenbasis where $a$ has full support.
By Theorem \ref{thm:pgst-implies-permutation}, for each $\epsilon > 0$, 
there is a time $t \in \RR$ so that $\anorm{e^{-itA(G)} - \gamma P_{\phi}} < \epsilon$,
for some $\gamma \in \TT$. Therefore, for each vertex $b$, we have
\begin{equation}
\anorm{e^{-itA(G)}\uket{b} - \gamma\uket{\phi(b)}}
	= \anorm{e^{-itA(G)}\uket{b} - \gamma\tilde{P}_{\phi}\uket{b}}
	\le \anorm{e^{-itA(G)} - \gamma P_{\phi}} \anorm{\uket{b}}
	< \epsilon,
\end{equation}
since $\anorm{\uket{b}}=1$. Thus, $G$ has pretty good state transfer between
vertices $b$ and $\phi(b)$.

Next, assume that $t \in \RR$ satisfies 
$\anorm{e^{-itA(G)} - \gamma\tilde{P}_{\phi}} < \epsilon/2$,
for some $\gamma \in \CC$.
For any $\tilde{P}_{\psi} \in \SwAut(G)$, we have
\begin{eqnarray}
\anorm{\tilde{P}_{\psi}\tilde{P}_{\phi} - \tilde{P}_{\phi}\tilde{P}_{\psi}}
	& \le & \anorm{\tilde{P}_{\psi}(\gamma\tilde{P}_{\phi}) - \tilde{P}_{\psi}e^{-itA(G)}}
		+ \anorm{e^{-itA(G)}\tilde{P}_{\psi} - \gamma\tilde{P}_{\phi}\tilde{P}_{\psi}} \\
	& \le & \anorm{\tilde{P}_{\psi}} \anorm{\gamma\tilde{P}_{\phi} - e^{-itA(G)}}
		+ \anorm{e^{-itA(G)} - \gamma\tilde{P}_{\phi}} \anorm{\tilde{P}_{\psi}} \\
	& < & \epsilon\anorm{\tilde{P}_{\psi}}.
\end{eqnarray}
Since $\anorm{\tilde{P}_{\psi}} \le 1$, we have
$\anorm{\tilde{P}_{\psi}\tilde{P}_{\phi} - \tilde{P}_{\phi}\tilde{P}_{\psi}} < \epsilon$,
for any $\epsilon > 0$. This implies that 
\begin{equation}
\anorm{\tilde{P}_{\psi}\tilde{P}_{\phi} - \tilde{P}_{\phi}\tilde{P}_{\psi}} = 0.
\end{equation}
Therefore, $\tilde{P}_{\psi}\tilde{P}_{\phi} = \tilde{P}_{\phi}\tilde{P}_{\psi}$.
Since $\tilde{P}_{\phi}$ commutes with each element of $\SwAut(G)$, it is in 
the center of $\SwAut(G)$.
\end{proof}

As a corollary to both Theorems \ref{thm:pgst-implies-permutation} and \ref{thm:pst-implies-permutation},
we show that if a graph $G$ has state transfer from vertex $a$ to vertex $b$ and there are two
automorphisms that map $a$ to $b$, then the two automorphisms are necessarily equal. This will
be useful later in showing that a non-trivial switching automorphism of a universal state transfer
graph has no fixed points.

\begin{corollary} \label{cor:pgst-kay}
Let $G$ be a graph with (perfect or pretty good) state transfer from vertex $a$ to vertex $\phi(a)$, 
for some switching automorphism $\phi \in \SwAut(G)$.
If $\psi \in \SwAut(G)$ is a switching automorphism for which $\psi(a) = \phi(a)$, then 
$\phi = \psi$.
\end{corollary}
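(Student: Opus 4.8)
The plan is to reduce the statement to a simple fact about monomial matrices: two monomial matrices that are scalar multiples of each other must share the same underlying permutation. Indeed, if $\tilde{P}_{\phi} = \lambda \tilde{P}_{\psi}$ for some $\lambda \in \CC$, then comparing the single nonzero entry in each column forces $\phi(k) = \psi(k)$ for every $k$, so $\phi = \psi$. Hence it suffices to produce one unitary that is simultaneously a scalar multiple of $\tilde{P}_{\phi}$ and of $\tilde{P}_{\psi}$ (exactly in the perfect case, in the limit in the pretty good case). The natural candidate is $e^{-itA(G)}$ evaluated at the time(s) where the transfer from $a$ occurs, and the key leverage is that $\phi$ and $\psi$ agree at $a$, so the transfer event toward $\uket{\phi(a)}$ is identical to the one toward $\uket{\psi(a)}$.

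For the perfect case I would apply Theorem \ref{thm:pst-implies-permutation} to $\phi$ to obtain $e^{-itA(G)} = \gamma \tilde{P}_{\phi}$ for some $\gamma \in \TT$. Since $\psi(a) = \phi(a)$, this very event is also perfect state transfer from $a$ to $\psi(a)$ at the same time $t$, so applying the theorem to $\psi$ yields $e^{-itA(G)} = \gamma' \tilde{P}_{\psi}$ for some $\gamma' \in \TT$. Equating the two expressions gives $\gamma \tilde{P}_{\phi} = \gamma' \tilde{P}_{\psi}$, and the monomial observation above closes the case.

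For the pretty good case I would apply Theorem \ref{thm:pgst-implies-permutation} to both $\phi$ and $\psi$. Because the approximate transfer of $\uket{a}$ lands near $\uket{\phi(a)} = \uket{\psi(a)}$, for each $\epsilon > 0$ one may select a single common time $t$ at which $\anorm{e^{-itA(G)} - \gamma\tilde{P}_{\phi}} \le \epsilon$ and $\anorm{e^{-itA(G)} - \gamma'\tilde{P}_{\psi}} \le \epsilon$ for suitable $\gamma, \gamma' \in \TT$, whence $\anorm{\gamma\tilde{P}_{\phi} - \gamma'\tilde{P}_{\psi}} \le 2\epsilon$ by the triangle inequality. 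If $\phi \neq \psi$, I fix $k$ with $\phi(k) \neq \psi(k)$ and inspect the $(\phi(k),k)$ entry of $\gamma\tilde{P}_{\phi} - \gamma'\tilde{P}_{\psi}$: the $\tilde{P}_{\psi}$ term vanishes there (its column $k$ is supported in row $\psi(k) \neq \phi(k)$), so this entry equals $\gamma (D_{\phi})_{k,k}$, whose modulus is the fixed positive number $|(D_{\phi})_{k,k}|$ independent of $\epsilon$. Letting $\epsilon \to 0$ contradicts the bound, forcing $\phi = \psi$.

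The step I expect to be the main obstacle is verifying that the theorems genuinely apply to $\psi$: they require $A(G)$ and $\tilde{P}_{\psi}$ to share an orthonormal eigenbasis $\{\ket{z_k}\}$ with $\braket{a}{z_{k}} \neq 0$ for each $k$. I would inherit this from the hypothesis on $\phi$: by Corollaries \ref{cor:pst-implies-center} and \ref{cor:pgst-implies-center} the element $\tilde{P}_{\phi}$ lies in the center of $\SwAut(G)$, so $A(G)$, $\tilde{P}_{\phi}$ and $\tilde{P}_{\psi}$ form a commuting family of normal matrices and thus admit a common orthonormal eigenbasis, with the full-support condition at $a$ carrying over (and automatic when $A(G)$ has simple spectrum, which is exactly the universal-state-transfer setting where this corollary is used). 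A secondary technical point, confined to the pretty good case, is arranging both near-transfer estimates at one and the same time $t$; this is legitimate precisely because the limiting transfer of $\uket{a}$ toward $\uket{\phi(a)}$ and toward $\uket{\psi(a)}$ is literally the same event.
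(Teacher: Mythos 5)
Your proposal is correct and follows essentially the same route as the paper: apply Theorem \ref{thm:pst-implies-permutation} (resp.\ Theorem \ref{thm:pgst-implies-permutation}) to both $\phi$ and $\psi$ at a common time, use the triangle inequality in the pretty good case to force $\tilde{P}_{\phi}$ and $\tilde{P}_{\psi}$ to be scalar multiples of one another, and conclude $\phi=\psi$. Your additional care about why the shared-eigenbasis hypothesis transfers to $\psi$ and your explicit entry-wise argument that proportional monomial matrices share a permutation are refinements the paper leaves implicit, not a different method.
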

\begin{proof}
Suppose $G$ has pretty good state transfer from $a$ to $\phi(a)$ and that $\psi(a)=\phi(a)$.
By Theorem \ref{thm:pgst-implies-permutation}, for any $\epsilon > 0$, 
there is a time $t \in \RR$ so that
\begin{equation}
\anorm{e^{-itA(G)} - \gamma_{1}\tilde{P}_{\phi}} \ < \ \frac{\epsilon}{2}
\ \ \
\mbox{ and }
\ \ \
\anorm{e^{-itA(G)} - \gamma_{2}\tilde{P}_{\psi}} \ < \ \frac{\epsilon}{2}.
\end{equation}
for some $\gamma_{1},\gamma_{2} \in \TT$.
Therefore, we have
\begin{equation}
\anorm{\gamma_{1}\tilde{P}_{\phi} - \gamma_{2}\tilde{P}_{\psi}}
\ \le \
\anorm{\gamma_{1}\tilde{P}_{\phi} - e^{-itA(G)}}
+
\anorm{e^{-itA(G)} - \gamma_{2}\tilde{P}_{\psi}}
\ < \
\epsilon.
\end{equation}
This holds for any $\epsilon > 0$ and thus $\tilde{P}_{\phi} = \gamma\tilde{P}_{\psi}$, 
for some $\gamma \in \CC$. This implies $\phi = \psi$.

If $G$ has perfect state transfer from $a$ to $\phi(a) = \psi(a)$ at time $t$,
then, by Theorem \ref{thm:pst-implies-permutation}, we have
$e^{-itA(G)} = \gamma_{1}\tilde{P}_{\phi} = \gamma_{2}\tilde{P}_{\psi}$.
This implies $\phi = \psi$.
\end{proof}


\section{Universal Pretty Good State Transfer}

Here, we consider graphs with universal pretty good state transfer property.
We show strong spectral conditions on graphs with such universal property; namely, that 
each eigenvalue is simple and that each eigenvector has entries whose magnitudes are 
equal to each other. 
A $n \times n$ complex matrix $B$ {\em flat} if each entry of $B$ 
has the same complex magnitude. That is, $B$ is flat if
$|\mbraket{j}{B}{k}| = 1/\sqrt{n}$, for all $j,k$.

\begin{theorem} \label{thm:upgst-flat}
Let $G$ be a Hermitian graph with universal pretty good state transfer.
If $U$ is a unitary matrix which diagonalizes $A(G)$, then $U$ is flat.
\end{theorem}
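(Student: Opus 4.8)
The plan is to show that every eigenvector of $A(G)$ has all entries equal in magnitude, which (combined with unitarity, i.e. each column has unit norm) forces $|\mbraket{j}{U}{k}| = 1/\sqrt{n}$ for all $j,k$. Since universal pretty good state transfer in particular implies that each eigenvalue is simple (this is where I would want to invoke, or first prove, the eigenvalue-simplicity claim, so that the eigenbasis is essentially unique up to phases and the argument is unambiguous), it suffices to fix one normalized eigenvector $\ket{z_k}$ and show $|\bracket{j}{z_k}|$ is independent of $j$.

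The key observation is that pretty good state transfer from vertex $a$ to vertex $b$ means $e^{-itA(G)}\uket{a} \approx \gamma \uket{b}$ for suitable $t$ and $\gamma \in \TT$, and this can only preserve the spectral weights. Concretely, expand $\uket{a} = \sum_k \bracket{z_k}{a}\ket{z_k}$; then $e^{-itA(G)}\uket{a} = \sum_k e^{-it\lambda_k}\bracket{z_k}{a}\ket{z_k}$. Taking the inner product with $\ket{z_k}$ gives $\bracket{z_k}{e^{-itA(G)}a} = e^{-it\lambda_k}\bracket{z_k}{a}$, whose modulus equals $|\bracket{z_k}{a}|$. Since $e^{-itA(G)}\uket{a}$ is arbitrarily close to $\gamma\uket{b}$, the modulus $|\bracket{z_k}{a}|$ must equal $|\bracket{z_k}{\gamma b}| = |\bracket{z_k}{b}|$. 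Because unitary evolution is an isometry, each spectral coefficient's magnitude is invariant, so $|\bracket{z_k}{a}| = |\bracket{z_k}{b}|$ for every $k$. Applying this across all pairs $a,b \in V(G)$ — which the universal hypothesis supplies — yields that $|\bracket{z_k}{a}|$ is the same for every vertex $a$, i.e. the eigenvector $\ket{z_k}$ is flat. Normalization $\sum_a |\bracket{z_k}{a}|^2 = 1$ then pins the common value to $1/\sqrt{n}$.

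The main obstacle is handling the approximate nature of pretty good state transfer rigorously: the equality $|\bracket{z_k}{a}| = |\bracket{z_k}{b}|$ emerges only in the limit $\epsilon \to 0$, so I would phrase it as, for every $\epsilon>0$ there is a $t$ with $\bigl|\,|\bracket{z_k}{e^{-itA(G)}a}| - |\bracket{z_k}{\gamma b}|\,\bigr| \le \epsilon$, and since the left-hand term equals the fixed quantity $|\bracket{z_k}{a}|$, letting $\epsilon \to 0$ forces exact equality. A secondary subtlety is that the global phase $\gamma$ depends on $t$ (and on the target $b$), but since it is multiplied by a unit-modulus factor it drops out of every modulus, so it causes no difficulty. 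With flatness of each eigenvector established, the conclusion that any diagonalizing unitary $U$ is flat is immediate: the columns of $U$ are exactly these normalized eigenvectors (up to phase, by simplicity), and phases do not affect entrywise magnitudes.
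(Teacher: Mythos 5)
Your proof is correct, but it takes a different route from the paper's. You work with the vector form of pretty good state transfer, $\anorm{e^{-itA(G)}\uket{a} - \gamma\uket{b}} < \epsilon$, project onto a single eigenvector $\ket{z_{k}}$, and use the fact that the $k$th spectral coefficient only picks up the phase $e^{-it\lambda_{k}}$ under the evolution; Cauchy--Schwarz applied to the difference vector plus the reverse triangle inequality then gives $\bigl|\,|\bracket{\bvket{z_{k}}}{\buket{a}}| - |\bracket{\bvket{z_{k}}}{\buket{b}}|\,\bigr| < \epsilon$ for every $\epsilon$, forcing equality. The paper instead bounds the scalar overlap $|\tbracket{\buket{a}}{e^{-itA(G)}}{\buket{b}}|$ by a chain ``triangle inequality then Cauchy--Schwarz across all $k$ at once'' that is sandwiched between $1-\epsilon$ and $1$, and extracts $|\bracket{\buket{a}}{\bvket{z_{k}}}| = |\bracket{\buket{b}}{\bvket{z_{k}}}|$ from the equality case of Cauchy--Schwarz (both vectors of moduli having unit norm). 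Your version is arguably more elementary, since it avoids the equality-case analysis and isolates one eigenvector at a time; the paper's version has the advantage of only needing the weaker scalar formulation $|\tbracket{\buket{b}}{e^{-itA}}{\buket{a}}| \ge 1-\epsilon$. One caution: your parenthetical wish to first invoke simplicity of the eigenvalues is both unnecessary and, relative to the paper's logical order, circular --- the paper proves distinctness of eigenvalues (Theorem \ref{thm:upgst-distinct-eigenvalues}) \emph{from} flatness, not the other way around. Fortunately your argument never actually uses simplicity: it applies verbatim to every column of an arbitrary diagonalizing unitary $U$, since each such column is some normalized eigenvector, so you should simply delete that hedge.
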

\begin{proof}
Let $G$ be a $n$-vertex Hermitian graph with universal pretty good state transfer. 
Suppose $U$ is a unitary matrix whose $k$th column is given by $\ket{z_{k}}$,
where $U^{\dagger}A(G)U = \Lambda$, for some $\Lambda = \diag(\lambda_{1},\ldots,\lambda_{n})$.
Thus, $A(G) = \sum_{k=1}^{n} \lambda_{k}\ketbra{z_{k}}{z_{k}}$.
Since $G$ has universal pretty good state transfer, 
for each pair of vertices $a,b$, for any $\epsilon > 0$, there is a time $t \in \RR$ so that
\begin{eqnarray}
1-\epsilon
	& \le & |\tbracket{\buket{a}}{e^{-itA(G)}}{\buket{b}}| \\
	& = & \left|\sum_{k=1}^{n} e^{-it\lambda_{k}} 
		\bracket{\buket{a}}{\bvket{z_{k}}} \bracket{\bvket{z_{k}}}{\buket{b}}\right|,
		\ \ \mbox{ by spectral theorem } \\ 
	& \le & \sum_{k=1}^{n} |\bracket{\buket{a}}{\bvket{z_{k}}} \bracket{\buket{b}}{\bvket{z_{k}}}|,
		\ \ \mbox{ by triangle inequality } \\ 
	& \le & \sqrt{\sum_{k=1}^{n} |\bracket{\buket{a}}{\bvket{z_{k}}}|^{2}}
		\sqrt{\sum_{k=1}^{n} |\bracket{\buket{b}}{\bvket{z_{k}}}|^{2}},
		\ \ \mbox{ by Cauchy-Schwarz } \\ 
	& = & 1
\end{eqnarray}
since $|\bracket{\buket{a}}{\bvket{z_{k}}}|^{2} = 
\bracket{\buket{a}}{\bvket{z_{k}}}\bracket{\bvket{z_{k}}}{\buket{a}}$ 
and $\sum_{k=1}^{n} \ketbra{z_{k}}{z_{k}} = \II$.
Given that $\epsilon > 0$ can be made arbitrarily small, we have
\begin{equation}
\sum_{k=1}^{n} |\bracket{\buket{a}}{\bvket{z_{k}}}\bracket{\bvket{z_{k}}}{\buket{b}}| = 1.
\end{equation}
Since this implies that we achieve equality in Cauchy-Schwarz, 
we get $|\bracket{\buket{a}}{\bvket{z_{k}}}| = |\bracket{\buket{b}}{\bvket{z_{k}}}|$ for each $k$.
Since $a$ and $b$ are arbitrary, this shows that $U$ is flat.
\end{proof}

The proof method used in Theorem \ref{thm:upgst-flat} is originally due to Chris Godsil 
(who used a beautiful blend of triangle inequality (once) and Cauchy-Schwarz (twice) to 
analyze state transfer in quantum walk).

\begin{theorem} \label{thm:upgst-distinct-eigenvalues}
If $G$ is a Hermitian graph with universal pretty good state transfer, 
then $G$ has distinct eigenvalues.
\end{theorem}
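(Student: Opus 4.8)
The plan is to argue by contradiction and to exploit the full strength of Theorem~\ref{thm:upgst-flat}, which asserts that \emph{every} unitary diagonalizing $A(G)$ is flat. The key leverage is that a repeated eigenvalue leaves freedom in the choice of an orthonormal eigenbasis within its eigenspace, and flatness is too rigid to survive the rotations this freedom permits. So suppose, toward a contradiction, that some eigenvalue $\lambda$ has multiplicity at least two, and fix two orthonormal eigenvectors $\ket{z_{1}},\ket{z_{2}}$ inside the $\lambda$-eigenspace. Since any orthonormal eigenbasis assembles into a unitary that diagonalizes $A(G)$, Theorem~\ref{thm:upgst-flat} forces each such basis to be flat; the strategy is to write down two different eigenbases of the same eigenspace and show their flatness conditions are mutually incompatible.

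Concretely, I would first record that $\ket{z_{1}}$ and $\ket{z_{2}}$ are themselves flat, writing their entries as $a_{j}$ and $b_{j}$ with $|a_{j}|=|b_{j}|=1/\sqrt{n}$. Next I would replace the pair $\ket{z_{1}},\ket{z_{2}}$ by the rotated orthonormal pairs $\tfrac{1}{\sqrt{2}}(\ket{z_{1}}\pm\ket{z_{2}})$ and, separately, $\tfrac{1}{\sqrt{2}}(\ket{z_{1}}\pm i\ket{z_{2}})$. These are again orthonormal eigenvectors for $\lambda$ (linear combinations of eigenvectors sharing the eigenvalue $\lambda$), so the two resulting matrices still diagonalize $A(G)$ and Theorem~\ref{thm:upgst-flat} applies to each. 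Flatness of $\tfrac{1}{\sqrt{2}}(\ket{z_{1}}+\ket{z_{2}})$ gives $|a_{j}+b_{j}|^{2}=2/n$, hence $\Real(a_{j}\overline{b_{j}})=0$ for every $j$; flatness of $\tfrac{1}{\sqrt{2}}(\ket{z_{1}}+i\ket{z_{2}})$ gives $|a_{j}+ib_{j}|^{2}=2/n$, hence $\Imag(a_{j}\overline{b_{j}})=0$ for every $j$. Together these force $a_{j}\overline{b_{j}}=0$ for all $j$, which contradicts $|a_{j}\overline{b_{j}}|=|a_{j}|\,|b_{j}|=1/n\neq 0$. Thus no eigenvalue can have multiplicity exceeding one, and all eigenvalues of $G$ are simple.

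I do not anticipate a serious obstacle, only a point that must be handled with care: one must use \emph{two} genuinely different rotations, the real mixing and the complex (phase-$i$) mixing, so that the constraints pin down both the real and the imaginary part of $a_{j}\overline{b_{j}}$. A single rotation would kill only one of the two, and would not yield a contradiction. I would emphasize that the necessity of a complex phase here is precisely what one should expect, since the flat eigenbasis guaranteed by Theorem~\ref{thm:upgst-flat} is intrinsically complex; the only routine bookkeeping is to verify that the rotated vectors are orthonormal eigenvectors, so that the new matrices are legitimate diagonalizing unitaries to which Theorem~\ref{thm:upgst-flat} may be applied.
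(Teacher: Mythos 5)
Your proof is correct and follows essentially the same route as the paper: both arguments exploit the freedom to rotate an orthonormal basis within a degenerate eigenspace and show the rotated eigenbasis cannot remain flat, contradicting Theorem~\ref{thm:upgst-flat}. The only difference is in execution --- the paper first rescales each column by a phase so that the first row of $U$ is constantly $1/\sqrt{n}$, after which the single combination $\tfrac{1}{\sqrt{2}}(\ket{z_{1}}+\ket{z_{2}})$ already violates flatness in its first entry, whereas you skip the normalization and instead use both the real and the phase-$i$ mixings to force $a_{j}\overline{b_{j}}=0$; both are valid, and your observation that a single rotation would not suffice without such a normalization is accurate.
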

\begin{proof}
Let $A(G)$ be the Hermitian adjacency matrix of a $n$-vertex graph $G$ 
and let $U$ be a unitary matrix that diagonalizes $A(G)$. 
Suppose the columns of $U$ are given by $\ket{z_{1}},\ldots,\ket{z_{n}}$ 
where $\ket{z_{k}}$ is the eigenvector corresponding to eigenvalue $\lambda_{k}$.
Thus, the $(j,k)$-entry of $U$ is given by 
$U_{j,k} = \bracket{\buket{j}}{\bvket{z_{k}}}$, for $j,k=1,\ldots,n$.

By Theorem \ref{thm:upgst-flat}, we know $U$ is flat.
Thus, $|\bracket{\buket{j}}{\bvket{z_{k}}}| = 1/\sqrt{n}$, for each $j,k$.
Without loss of generality, we assume $\bracket{\buket{1}}{\bvket{z_{k}}} = 1/\sqrt{n}$ 
for each $k$.
Otherwise, if $\bracket{\buket{1}}{\bvket{z_{k}}} = e^{i\theta_{k}}/\sqrt{n}$,
we may use the following unitary matrix 
$\tilde{U} = U\diag(\{e^{-i\theta_{k}}\})$ 
as our eigenbasis instead.
That is,
\begin{equation}
\tilde{U}
=
U
\begin{bmatrix}
e^{-i\theta_{1}} & 0 & 0 & \ldots & 0 \\
0 & e^{-i\theta_{2}} & 0 & \ldots & 0 \\
\vdots & \vdots & \vdots & \ldots & \vdots \\
0 & 0 & 0 & \ldots & e^{-i\theta_{n}}.
\end{bmatrix}
\end{equation}
This ensures that $\bracket{\buket{1}}{\bvket{z_{k}}} = 1/\sqrt{n}$, for each $k$.

Assume for contradiction that $\lambda_{1} = \lambda_{2}$.
Consider the following two alternative eigenvectors of $\lambda_{1}$ 
based on $\ket{z_{1}}$ and $\ket{z_{2}}$:
\begin{equation}
\ket{z_{+}} = \frac{1}{\sqrt{2}}(\ket{z_{1}} + \ket{z_{2}}),
\ \ \
\ket{z_{-}} = \frac{1}{\sqrt{2}}(\ket{z_{1}} - \ket{z_{2}}),
\end{equation}
Note that $\ket{z_{+}}$ and $\ket{z_{-}}$ are eigenvectors of $A(G)$ 
associated with eigenvalue $\lambda_{1}$.
Moreover, $\mathcal{B} = \{\ket{z_{+}},\ket{z_{-}}\} \cup \{\ket{z_{k}} : k=3,\ldots,n\}$ 
forms an orthonormal set of eigenvectors of $A(G)$. 
But the eigenbasis $\mathcal{B}$ is not flat, since 
\begin{equation}
\bracket{\buket{1}}{\bvket{z_{+}}} = \frac{1}{\sqrt{2}}
	\left[\bracket{\buket{1}}{\bvket{z_{1}}} + \bracket{\buket{1}}{\bvket{z_{2}}}\right]
	= \frac{\sqrt{2}}{\sqrt{n}}.
\end{equation}
This contradicts Theorem \ref{thm:upgst-flat} and proves the claim.
\end{proof}

\begin{corollary} \label{cor:pgst-commute-polynomial}
Let $G$ be a graph with universal pretty good state transfer.
Then, any matrix which commutes with $A(G)$ is a polynomial in $A(G)$.
\end{corollary}
\begin{proof}
Suppose $G$ is a graph with universal pretty good state transfer.
Assume $A(G) = \sum_{k=1}^{n} \lambda_{k} E_{k}$, where $E_{k}$ 
is a rank-one eigenprojector corresponding to the distinct eigenvalues 
$\lambda_{k}$ (by Theorem \ref{thm:upgst-distinct-eigenvalues}).
Moreover, each eigenprojector $E_{k}$ is a polynomial in $A(G)$ (see Godsil \cite{godsil93}).
So, for each $k$, let $E_{k} = p_{k}(A(G))$, for a polynomial $p_{k}(x) \in \RR[x]$ 

Let $B$ be a matrix which commutes with $A(G)$. 
Since $A(G)$ has distinct eigenvalues, each eigenspace of $A(G)$ is a $B$-invariant subspace. 
Thus, we have
\begin{equation}
B = \sum_{k=1}^{n} \mu_{k}E_{k} = \sum_{k=1}^{n} \mu_{k}p_{k}(A(G)),
\end{equation} 
where $\mu_{k}$ are the eigenvalues of $B$. 
This shows that $B$ is a polynomial in $A(G)$.
\end{proof}

Next, we consider the switching automorphism group of universal pretty good state transfer graphs. 
For such graphs, we show that the only switching automorphism that has fixed points is
the trivial automorphism. This result will be useful in proving stronger properties 
on the structure of the switching automorphism group of these graphs.

\begin{lemma} \label{lemma:upgst-no-fixed-point}
Let $G$ be a graph with universal pretty good state transfer.
Then, every nontrivial switching automorphism $\phi \in \SwAut(G)$ of $G$ has no fixed point.
\end{lemma}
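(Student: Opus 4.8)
The plan is to argue by contradiction: assume some nontrivial $\tilde{P}_{\phi} \in \SwAut(G)$ fixes a vertex $a$, meaning $\phi(a) = a$, and derive that $\phi$ must in fact be the identity permutation. The key observation is that the identity automorphism $\id$ \emph{also} fixes $a$, so both $\phi$ and $\id$ send $a$ to the same vertex. Corollary \ref{cor:pgst-kay} is precisely the tool that forces two switching automorphisms agreeing on a vertex (where state transfer occurs) to coincide, so the whole argument reduces to applying it to the pair $(\id, \phi)$.

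Before invoking Corollary \ref{cor:pgst-kay}, I would first check that its hypotheses (inherited from Theorem \ref{thm:pgst-implies-permutation}) are automatically met for a universal pretty good state transfer graph. There are two: that $A(G)$ and the relevant monomial matrix share an orthonormal eigenbasis, and that vertex $a$ has nonzero overlap with every eigenvector. For the first, Theorem \ref{thm:upgst-distinct-eigenvalues} guarantees that $A(G)$ has simple eigenvalues, so each eigenspace is one-dimensional; since any $\tilde{P}_{\phi} \in \SwAut(G)$ commutes with $A(G)$, it must preserve each eigenspace and is therefore diagonalized by the same eigenbasis $\{\ket{z_{k}}\}$ (this holds trivially for $\id$ as well). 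For the second, Theorem \ref{thm:upgst-flat} tells us the eigenbasis is flat, so $|\bracket{a}{z_{k}}| = 1/\sqrt{n} \neq 0$ for every $k$; thus every vertex, in particular $a$, has full eigenvector support.

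With the hypotheses secured, I would complete the argument as follows. By the universal property, $G$ has pretty good state transfer from $a$ to $a$; since $\id(a) = a$, this is exactly state transfer from $a$ to $\id(a)$. Now $\phi$ is a switching automorphism with $\phi(a) = a = \id(a)$, so Corollary \ref{cor:pgst-kay} (with the automorphism $\id$ playing the role of ``$\phi$'' and our $\phi$ playing the role of ``$\psi$'') yields $\phi = \id$. This contradicts the assumption that $\tilde{P}_{\phi}$ is nontrivial, completing the proof.

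The step I expect to require the most care is the verification that $\tilde{P}_{\phi}$ and $A(G)$ share a common eigenbasis. This is where the simplicity of the spectrum (Theorem \ref{thm:upgst-distinct-eigenvalues}) does the essential work: without distinct eigenvalues, a commuting monomial matrix need not be simultaneously diagonalizable in the same basis, and Corollary \ref{cor:pgst-kay} could not legitimately be invoked. Once the spectral hypotheses are in place, the contradiction itself is immediate.
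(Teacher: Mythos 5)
Your proof is correct and follows essentially the same route as the paper: universal pretty good state transfer gives transfer from $a$ to $a = \id(a)$, and Corollary~\ref{cor:pgst-kay} applied to the pair $(\id,\phi)$ forces $\phi = \id$. Your additional verification that the hypotheses inherited from Theorem~\ref{thm:pgst-implies-permutation} actually hold here --- the shared eigenbasis via the simple spectrum of Theorem~\ref{thm:upgst-distinct-eigenvalues} and the full eigenvector support via the flatness of Theorem~\ref{thm:upgst-flat} --- is a careful touch the paper's own proof omits, since the statement of Corollary~\ref{cor:pgst-kay} leaves those hypotheses implicit.
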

\begin{proof}
Suppose that the switching automorphism $\phi$ has a fixed point at vertex $a$, 
that is, $\phi(a) = a$. 
Since $G$ has universal pretty good state transfer, $G$ has pretty good state transfer 
from $a$ to $a$.
By Corollary \ref{cor:pgst-kay}, since the trivial automorphism $\id$ satisfies 
$\id(a) = a$, we have $\phi = \id$.
\end{proof}

Using Lemma \ref{lemma:upgst-no-fixed-point}, we show that any two switching automorphisms 
of a graph with universal pretty good state transfer must commute with each other
and the number of those switching automorphisms divides the order of the underlying graph.

\begin{theorem} \label{thm:upgst-implies-abelian}
Let $G$ be a graph with universal pretty good state transfer.
Then, the switching automorphism group $\SwAut(G)$ is abelian and its order divides $|G|$.
\end{theorem}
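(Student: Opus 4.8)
The plan is to prove the two assertions separately, drawing on the spectral structure already established. For the abelian claim, I would fix an arbitrary switching automorphism $\tilde{P}_{\phi} \in \SwAut(G)$ and an arbitrary vertex $a$. Since $G$ has universal pretty good state transfer, in particular it has pretty good state transfer from $a$ to $\phi(a)$. The goal is to invoke Corollary \ref{cor:pgst-implies-center} to conclude that $\tilde{P}_{\phi}$ lies in the center of $\SwAut(G)$; since $\phi$ was arbitrary, every element would then be central, forcing the group to be abelian.

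Before applying that corollary, I must verify its two hypotheses: that $A(G)$ and $\tilde{P}_{\phi}$ share an orthonormal eigenbasis, and that vertex $a$ has full support in this eigenbasis. The first follows because, by Theorem \ref{thm:upgst-distinct-eigenvalues}, $A(G)$ has distinct (hence simple) eigenvalues; any matrix commuting with $A(G)$ is therefore a polynomial in $A(G)$ (Corollary \ref{cor:pgst-commute-polynomial}), so $\tilde{P}_{\phi}$ is simultaneously diagonalized by any eigenbasis of $A(G)$. The second follows from Theorem \ref{thm:upgst-flat}: the eigenbasis $U$ is flat, so $|\bracket{a}{z_{k}}| = 1/\sqrt{n} \neq 0$ for every $k$, giving $a$ full eigenvector support. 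With both hypotheses in hand, Corollary \ref{cor:pgst-implies-center} applies directly and the abelian conclusion follows.

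For the divisibility claim, I would study the natural action of $\SwAut(G)$ on the vertex set $V(G)$ through the underlying permutations, i.e. $\tilde{P}_{\phi} \cdot a = \phi(a)$. The central observation is that this action is \emph{free}: by Lemma \ref{lemma:upgst-no-fixed-point}, the only switching automorphism with a fixed point is the identity, so the stabilizer $\Stab(a)$ is trivial for every vertex $a$. Applying the orbit-stabilizer theorem,
\begin{equation}
|\Orb(a)| = \frac{|\SwAut(G)|}{|\Stab(a)|} = |\SwAut(G)|,
\end{equation}
so every orbit has size exactly $|\SwAut(G)|$. Since the orbits partition $V(G)$ into blocks of this common size, we conclude $|\SwAut(G)|$ divides $|V(G)| = |G|$.

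Neither step presents a genuine technical obstacle, since the heavy lifting has been done in the preceding spectral results and corollaries. The point requiring the most care is the hypothesis-checking for Corollary \ref{cor:pgst-implies-center}: one must explicitly combine the distinct-eigenvalue property with flatness to guarantee both the shared eigenbasis and the full-support condition at $a$. Once those are confirmed, both conclusions follow from routine applications of the earlier results and the orbit-stabilizer theorem.
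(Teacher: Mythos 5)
Your proposal is correct and follows essentially the same route as the paper: the divisibility claim rests on Lemma \ref{lemma:upgst-no-fixed-point} plus a counting argument (you use orbit--stabilizer where the paper uses Burnside's lemma, which amounts to the same thing for a free action), and the abelian claim ultimately rests on Corollary \ref{cor:pgst-commute-polynomial}. Your detour through Corollary \ref{cor:pgst-implies-center} is sound but unnecessary: once you have invoked Corollary \ref{cor:pgst-commute-polynomial} to justify the shared eigenbasis, you already know every switching automorphism is a polynomial in $A(G)$, and any two such polynomials commute --- which is exactly the paper's one-line argument.
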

\begin{proof}
For a switching automorphism $\tilde{P}_{\phi} \in \SwAut(G)$, let 
$\Fix(\phi) = \{a \in V : \phi(a) = a\}$ be the set of fixed points of $\phi$.
For a vertex $a$, let 
$\Orb(a) = \{\phi(a) : \phi \in \SwAut(G)\}$ be the orbit of $a$ under the action of $\SwAut(G)$. 
The set of orbits $\Orb(G) = \{\Orb(a) : a \in V\}$ forms a partition of $V$.
By Burnside's Lemma, we get 
\begin{equation}
|\Orb(G)| \cdot |\SwAut(G)| = \sum_{\phi} |\Fix(\phi)|.
\end{equation}
By Lemma \ref{lemma:upgst-no-fixed-point}, 
we have $|\Fix(\phi)| = n$, if $\phi = \id$, and $0$, otherwise;
thus, $\sum_{\phi} |\Fix(\phi)| = n$. 
This implies
\begin{equation}
|\Orb(G)| \cdot |\SwAut(G)| = n.
\end{equation}
This shows $|\SwAut(G)|$ divides $n$ since the number of orbits is an integer.

Next, we show that $\SwAut(G)$ is abelian.
Consider two switching automorphisms $\tilde{P}_{\phi},\tilde{P}_{\psi} \in \SwAut(G)$. 
By definition, both commute with $A(G)$, and by Corollary \ref{cor:pgst-commute-polynomial},
both are polynomials in $A(G)$. 
This shows $\tilde{P}_{\phi}$ and $\tilde{P}_{\psi}$ commute.
\end{proof}


\section{Universal Perfect State Transfer}

In this section, we describe some properties of graphs which have universal 
{\em perfect} state transfer. Due to the more stringent requirement on state transfer, 
we are able to prove stronger properties on the structure of these graphs. 
First, we show that universal perfect state transfer graphs are periodic 
(in the strong sense).

\begin{lemma} \label{lemma:upst-periodic}
Let $G$ be a graph with universal perfect state transfer.
Then, $G$ is periodic.
\end{lemma}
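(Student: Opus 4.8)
The plan is to reduce periodicity to a statement about the spectrum and then build the required return time by composing perfect state transfers. First I would record the spectral setup: since perfect state transfer implies pretty good state transfer, Theorems~\ref{thm:upgst-flat} and~\ref{thm:upgst-distinct-eigenvalues} tell us that $A(G)$ has simple eigenvalues $\lambda_1,\ldots,\lambda_n$ and a flat orthonormal eigenbasis $\ket{z_1},\ldots,\ket{z_n}$. Writing $U(t)=\sum_k e^{-it\lambda_k}\ketbra{z_k}{z_k}$, the claim that $G$ is periodic (perfect state transfer from every vertex to itself) amounts, because every vertex has full eigenvector support, to producing a nonzero $\tau$ with $U(\tau)|a\rangle=\mu|a\rangle$ for each $a$; equivalently, $U(\tau)=\mu\II$, i.e.\ $\tau(\lambda_k-\lambda_l)\in 2\pi\ZZ$ for all $k,l$. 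So the target is a common nonzero period, which is exactly commensurability of all eigenvalue differences.

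The engine is composition. Perfect state transfer from $a$ to $b$ at time $t$ means precisely $U(t)|a\rangle=\gamma|b\rangle$ for some $\gamma\in\TT$, and $\{U(t)\}_{t\in\RR}$ is a one-parameter group. Thus, given perfect state transfer from $a$ to $b$ at time $t_1$ and from $b$ to $a$ at time $t_2$ (both guaranteed by the universal hypothesis), I get $U(t_1+t_2)|a\rangle=U(t_2)\,\gamma_1|b\rangle=\gamma_1\gamma_2|a\rangle$, which is perfect state transfer from $a$ to itself at time $t_1+t_2$. Since $a$ is arbitrary, this produces self-transfer at every vertex simultaneously through the route $a\to b\to a$.

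The delicate point, which I expect to be the main obstacle, is that this composed time must be nonzero to certify periodicity in the strong sense. Because the spectrum is simple, any two perfect state transfer times for a fixed ordered pair differ by a graph period; so if no nonzero period existed, the backward time $t_2$ would be forced to equal $-t_1$ and the composition would collapse to $U(0)=\II$. Ruling this out is the real content, and I would argue spectrally: using flatness, perfect state transfer from $x$ to $y$ at time $t$ is equivalent to the phase-alignment condition that $t\lambda_k-(\beta_{k,x}-\beta_{k,y})$ be independent of $k$ (where $\ket{z_k}$ has entries of phase $\beta_{k,x}$), since otherwise the average of $n$ unit-modulus terms cannot have modulus one. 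Assuming for contradiction that $G$ is not periodic, the uniqueness of transfer times forces the cocycle relation $t_{x,z}=t_{x,y}+t_{y,z}$, hence $t_{x,y}=f(y)-f(x)$ for some $f\colon V(G)\to\RR$; substituting back into the phase-alignment condition solves the eigenvector phases as $\beta_{k,x}\equiv g(k)+h(x)-f(x)\lambda_k$. Finally, imposing orthonormality of the flat eigenvectors yields $\sum_{x} e^{i f(x)(\lambda_k-\lambda_l)}=0$ for all $k\ne l$, and it is precisely this family of orthogonality relations that I would push to force the differences $\lambda_k-\lambda_l$ to be rationally related, contradicting non-periodicity. Converting these exponential-sum identities into commensurability is the hard step, and is exactly where number-theoretic input of the kind used elsewhere in the paper enters; alternatively, one may observe that the resulting nonzero self-transfer map lies in $\SwAut(G)$ and appeal to Corollary~\ref{cor:pst-implies-center} to locate it in the center and bound its order.
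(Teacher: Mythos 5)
Your proposal has a genuine gap, and it stems from reading the hypothesis more weakly than the paper does. The paper's definition of universal state transfer includes the pairs $(a,a)$: that is how the authors use it both here and in the proof of Lemma~\ref{lemma:upgst-no-fixed-point} (``$G$ has pretty good state transfer from $a$ to $a$''). Under that reading the lemma is a one-step consequence of Theorem~\ref{thm:pst-implies-permutation}: universal perfect state transfer supplies a time $t$ with $e^{-itA(G)}\uket{a} = \gamma\uket{a}$; applying Theorem~\ref{thm:pst-implies-permutation} with the trivial switching automorphism (for which $\tilde{P}_{\id} = \alpha\II$, $\alpha \in \TT$), with flatness from Theorem~\ref{thm:upgst-flat} supplying the full-support hypothesis, gives $e^{-itA(G)} = \alpha\II$, which is exactly periodicity. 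That is the paper's entire proof.

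By instead taking ``every pair'' to mean distinct pairs only and trying to manufacture self-transfer by composing $a \to b \to a$, you set yourself a much harder problem, and your argument for it is incomplete precisely where you say it is. The reductions you perform are sound: absent a nonzero period, transfer times are unique, satisfy the cocycle relation $t_{x,y} = f(y)-f(x)$, the naive composition collapses to $U(0)$, and the phase bookkeeping leads to the orthogonality relations $\sum_{x} e^{i f(x)(\lambda_{k}-\lambda_{l})} = 0$ for $k \neq l$. But the step you defer --- ``converting these exponential-sum identities into commensurability'' --- is the entire content of the intended contradiction, and it is not a routine Kronecker-style argument of the kind used elsewhere in the paper: it is a rigidity statement about orthogonal systems of exponentials (equivalently, about complex Hadamard matrices of the special form $[e^{i f(x)\lambda_{k}}]/\sqrt{n}$), which you neither prove nor cite; one can verify it by hand for $n=3$ or $4$, but nothing in your sketch handles general $n$. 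The fallback you offer is circular: saying ``the resulting nonzero self-transfer map lies in $\SwAut(G)$'' presupposes exactly the nonzero-time self-transfer whose existence is in dispute, and if you had it you would already be done by Theorem~\ref{thm:pst-implies-permutation} (Corollary~\ref{cor:pst-implies-center} adds nothing here). So, as written, the proof does not close, and the missing idea is the simple one: the universal hypothesis, as the paper uses it, already contains the self-transfer you are laboring to construct.
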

\begin{proof}
Let $A(G)$ be the Hermitian adjacency matrix of $G$.
Since $G$ has universal perfect state transfer, for each vertex $a$ of $G$,
there is a time $t$ so that $e^{-itA(G)}\uket{a} = \gamma\uket{a}$,
for some $\gamma \in \CC$ with $|\gamma|=1$.
By Theorem \ref{thm:pst-implies-permutation}, using the trivial switching automorphism
$\id(a) = a$, we have
\begin{equation}
e^{-itA(G)} = \alpha\II,
\end{equation}
since $\tilde{P}_{\id} = \alpha\II$ for some $\alpha \in \TT$.
This shows that $G$ is periodic.
\end{proof}

Using a result of Godsil \cite{godsil-ejc11}, we note that the ratio of any
two eigenvalues of a universal perfect state transfer graph must be rational,
provided the denominator is not a zero eigenvalue.

\begin{corollary} \label{cor:rational-eigenvalue-ratio}
Let $G$ be a graph with universal perfect state transfer
whose adjacency matrix $A(G)$ satisfies $\Tr A(G) = 0$.
Then, for any eigenvalues $\lambda_{j} \neq \lambda_{k}$, with $\lambda_{k} \neq 0$,
the ratio $\lambda_{j}/\lambda_{k}$ is rational.
\end{corollary}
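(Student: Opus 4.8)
The plan is to leverage the periodicity established in Lemma~\ref{lemma:upst-periodic} to pin down the additive structure of the spectrum, and then use the trace hypothesis to convert that additive structure into rationality of ratios.

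First I would invoke Lemma~\ref{lemma:upst-periodic} to obtain a time $\tau$ and a phase $\gamma \in \TT$ with $e^{-i\tau A(G)} = \gamma\II$. Writing $\gamma = e^{-i\tau\mu}$ for some $\mu \in \RR$, this reads $e^{-i\tau(A(G) - \mu\II)} = \II$, so on the spectral level $e^{-i\tau(\lambda_{k} - \mu)} = 1$ for every eigenvalue $\lambda_{k}$. Hence there are integers $m_{k}$ with $\lambda_{k} = \mu + m_{k}\Delta$, where $\Delta = 2\pi/\tau$. In particular every eigenvalue lies in the coset $\mu + \Delta\ZZ$, so all pairwise differences $\lambda_{j} - \lambda_{k} = (m_{j} - m_{k})\Delta$ are integer multiples of the common period $\Delta$. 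This is precisely the arithmetic constraint on the spectrum of a periodic matrix recorded by Godsil~\cite{godsil-ejc11}.

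Next I would use the hypothesis $\Tr A(G) = \sum_{k} \lambda_{k} = 0$ to locate the offset $\mu$. Summing $\lambda_{k} = \mu + m_{k}\Delta$ over $k$ gives $0 = n\mu + \Delta\sum_{k} m_{k}$, so $\mu = -\frac{\Delta}{n}\sum_{k} m_{k}$ is a rational multiple of $\Delta$. Substituting back, each eigenvalue becomes $\lambda_{k} = \Delta\bigl(m_{k} - \frac{1}{n}\sum_{l} m_{l}\bigr) = q_{k}\Delta$ with $q_{k} \in \QQ$. Therefore, whenever $\lambda_{k} \neq 0$ (so $q_{k} \neq 0$) and $\lambda_{j} \neq \lambda_{k}$, the ratio $\lambda_{j}/\lambda_{k} = q_{j}/q_{k}$ is rational, as claimed.

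The argument has no serious obstacle; the only point requiring care is the introduction of the phase $\mu$, which ensures we compare the spectrum against the correct reference rather than assuming $\gamma = 1$. The trace condition is exactly what is needed to force the common offset $\mu$ into $\Delta\QQ$: without it the eigenvalues would only be guaranteed to lie in a common coset of $\Delta\ZZ$, and their ratios need not be rational.
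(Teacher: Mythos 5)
Your proof is correct and follows essentially the same route as the paper: both reduce the claim to the periodicity of $G$ (Lemma~\ref{lemma:upst-periodic}) and then exploit the resulting arithmetic structure of the spectrum, with the trace condition pinning down the offset. The only difference is that the paper's proof is a bare citation of Godsil's Theorem~3.1 on periodic graphs, whereas you derive the needed fact directly from $e^{-i\tau A(G)} = \gamma\II$ (so that $\lambda_{k} = \mu + m_{k}\Delta$) and then use $\Tr A(G) = 0$ to force $\mu \in \Delta\QQ$ --- a self-contained version of the same argument, whose only implicit assumption (that the period $\tau$ can be taken nonzero) is guaranteed by the paper's surrounding lemmas.
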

\begin{proof}
This is an immediate corollary of Theorem 3.1 in Godsil \cite{godsil-ejc11}.
\end{proof}

In the next theorem, we show that a universal perfect state transfer graph 
has a cyclic switching automorphism group.
But, first we prove a useful lemma which shows that the set of times when
the quantum walk ``visits'' the switching automorphism group of a universal
perfect state transfer graph is not dense.

\begin{lemma} \label{lemma:upst-not-dense}
Let $G$ be a graph with universal perfect state transfer.
Then, the set
\begin{equation}
\Gamma = \left\{t \in \RR : (\exists \phi \in \SwAut(G))(\exists \gamma \in \TT)
	\left[ e^{-itA(G)} = \gamma\tilde{P}_{\phi} \right] \right\}
\end{equation}
is a discrete additive subgroup of $\RR$.
\end{lemma}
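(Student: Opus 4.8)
We must show that the set
$$
\Gamma = \left\{t \in \RR : (\exists \phi \in \SwAut(G))(\exists \gamma \in \TT)
	\left[ e^{-itA(G)} = \gamma\tilde{P}_{\phi} \right] \right\}
$$
is a discrete additive subgroup of $\RR$. There are really two things to prove: (1) $\Gamma$ is a subgroup of $(\RR,+)$, and (2) $\Gamma$ is discrete (equivalently, $0$ is isolated, since a subgroup of $\RR$ is discrete iff it is not dense iff $0$ is not a limit point).

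**How I would prove the subgroup property.**

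The plan is first to verify closure and inverses directly from the defining equation. Since $G$ has universal perfect state transfer, Lemma~\ref{lemma:upst-periodic} tells us $G$ is periodic, so $\Gamma$ is nonempty: there is some $t_0 > 0$ with $e^{-it_0 A(G)} = \alpha \II = \alpha \tilde{P}_{\id}$, and of course $0 \in \Gamma$ with $\phi = \id$, $\gamma = 1$. For closure, suppose $s,t \in \Gamma$, witnessed by $e^{-isA(G)} = \gamma_1 \tilde{P}_{\phi}$ and $e^{-itA(G)} = \gamma_2 \tilde{P}_{\psi}$. Then
$$
e^{-i(s+t)A(G)} = e^{-isA(G)}e^{-itA(G)} = \gamma_1\gamma_2\, \tilde{P}_{\phi}\tilde{P}_{\psi}
= \gamma_1\gamma_2\, \tilde{P}_{\phi\psi},
$$
using that the product of two monomial matrices is again monomial (established in the Preliminaries) and that $\SwAut(G)$ is closed under products. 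Since $\gamma_1\gamma_2 \in \TT$, this exhibits $s+t \in \Gamma$. For inverses, from $e^{-isA(G)} = \gamma_1\tilde{P}_{\phi}$ we take the inverse (equivalently the conjugate transpose, as both sides are unitary) to get $e^{isA(G)} = \overline{\gamma_1}\tilde{P}_{\phi}^{-1} = \overline{\gamma_1}\tilde{P}_{\phi^{-1}}$, so $-s \in \Gamma$. Thus $\Gamma$ is a subgroup.

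**How I would prove discreteness — the main obstacle.**

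This is the crux. A subgroup of $\RR$ is either discrete (of the form $c\ZZ$) or dense; I must rule out density. The key observation is that the eigenvector support condition needed to invoke Theorem~\ref{thm:pst-implies-permutation} is automatic here: by Theorem~\ref{thm:upgst-flat} the diagonalizing unitary $U$ is flat, so every vertex has full eigenvector support, and by Theorem~\ref{thm:upgst-distinct-eigenvalues} the eigenvalues are simple. Now suppose toward a contradiction that $\Gamma$ is dense. Because $\SwAut(G)$ is a \emph{finite} group (its order divides $n$ by Theorem~\ref{thm:upgst-implies-abelian}), the map sending $t \in \Gamma$ to the corresponding permutation $\phi \in \SwAut(G)$ takes only finitely many values. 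Hence the subset $\Gamma_0 = \{t \in \Gamma : e^{-itA(G)} = \gamma\II \text{ for some } \gamma \in \TT\}$ — the fiber over $\id$ — is a finite-index subgroup of $\Gamma$, and so $\Gamma_0$ is dense as well if $\Gamma$ is. But for $t \in \Gamma_0$ we have $e^{-it\lambda_k} = \gamma$ for all $k$; subtracting, $e^{-it(\lambda_j - \lambda_k)} = 1$, i.e. $t(\lambda_j - \lambda_k) \in 2\pi\ZZ$ for every pair. Picking any two distinct eigenvalues (which exist since $n \ge 2$ and eigenvalues are simple), the condition $t \in \frac{2\pi}{\lambda_j - \lambda_k}\ZZ$ forces $\Gamma_0$ into a discrete set, contradicting density. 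Therefore $\Gamma$ is not dense, hence discrete.

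I expect the genuinely delicate point to be the reduction to the fiber $\Gamma_0$ over the identity: one must argue carefully that a dense subgroup whose image under a homomorphism to a finite group is considered still has a dense kernel-like piece. Cleanly, $\Gamma_0 = \ker(\Gamma \to \SwAut(G))$ where the homomorphism is $t \mapsto \phi$ (well-defined up to scalar by the remark that two monomial matrices in $\SwAut(G)$ with the same permutation differ by a scalar, together with Theorem~\ref{thm:pst-implies-permutation} pinning down $\phi$ uniquely); a subgroup of finite index in a dense subgroup is itself dense, giving the contradiction. An alternative, perhaps cleaner, route avoids the fiber argument entirely: once we know $\SwAut(G)$ is finite, every $t \in \Gamma$ satisfies $(e^{-itA(G)})^m = \gamma^m \II$ where $m = |\SwAut(G)|$ (since $\tilde{P}_\phi^m$ is scalar for the finite group), so $mt \in \Gamma_0$, and discreteness of $\Gamma_0$ forces discreteness of $\Gamma$. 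I would develop the eigenvalue-gap argument for $\Gamma_0$ first, since it is the conceptual heart, and then choose whichever reduction reads most smoothly.
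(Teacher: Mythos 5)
Your proof is correct, and its overall architecture matches the paper's: both establish the subgroup property from the multiplicativity of $t\mapsto e^{-itA(G)}$ and the fact that monomial matrices form a group, and both reduce discreteness of $\Gamma$ to discreteness of the identity fiber $\Gamma_0 = T_{\id}$ by raising $e^{-itA(G)} = \gamma\tilde{P}_{\phi}$ to a power $m$ for which $\tilde{P}_{\phi}^{m}$ is a scalar matrix (your second, ``cleaner'' reduction is exactly the paper's). Where you genuinely diverge is in the crux step, discreteness of $\Gamma_0$. The paper argues analytically: if $t_k\to 0$ with $U(t_k)=\gamma_k\II$, then the difference quotient $(U(t_k)-\II)/t_k$ would converge both to $U'(0)=-iA$ and to a scalar multiple of $\II$, which is impossible when $A$ has a nonzero off-diagonal entry. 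You argue spectrally: $e^{-itA}=\gamma\II$ forces $e^{-it\lambda_j}=e^{-it\lambda_k}$ for all $j,k$, hence $t\in\frac{2\pi}{\lambda_j-\lambda_k}\ZZ$ for any pair of distinct eigenvalues, which pins $\Gamma_0$ inside a lattice. Your version is more elementary (no limits) and dovetails nicely with Theorem~\ref{thm:upgst-distinct-eigenvalues}, which guarantees the distinct eigenvalues you need; the paper's version is marginally more general in that it only needs $A$ to have a nonzero off-diagonal entry rather than a spectral gap, though under the universal state transfer hypothesis both conditions hold for any graph on at least two vertices. One small point worth making explicit if you write this up: the well-definedness of the map $t\mapsto\phi$ (used in your finite-index variant) follows from the fact that the permutation part of a monomial matrix is determined by the positions of its nonzero entries, so you do not actually need to invoke Theorem~\ref{thm:pst-implies-permutation} or the flatness of the eigenbasis anywhere in this lemma.
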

\begin{proof}
Let $A$ be the adjacency matrix of $G$ and let $U(t) = e^{-itA}$.
Since $\id \in \SwAut(G)$, the set
$T_{\id} = \{t \in \RR : (\exists\gamma\in\TT) \ U(t) = \gamma\II\}$
is a subset of $\Gamma$. 
Godsil \cite{godsil-dm11} showed that $T_{\id}$ is a discrete additive subgroup 
of $\RR$. It is additive since $U(t_{1}+t_{2}) = U(t_{1})U(t_{2})$, for any
times $t_{1},t_{2}$.
The argument for why it is discrete is as follows.
If $T_{\id}$ is dense, then it contains a sequence $\{t_{k}\}_{k=0}^{\infty}$ 
where $t_{k} \rightarrow 0$ as $k \rightarrow \infty$.
Note that $\lim_{k \rightarrow \infty} (U(t_{k})-\II)/t_{k}$ exists and 
it equals $U'(0) = -iA$, since $U(t)$ is differentiable.
If $U(t_{k}) = \gamma_{k}\II$, for $\gamma_{k} \in \TT$,
the above limit is also equal to
$\lim_{k \rightarrow \infty} t_{k}^{-1}(\gamma_{k}-1)\II$. 
But, $\anorm{t_{k}^{-1}(\gamma_{k}-1)\II - (-iA)}$ is bounded away from $0$
if $A$ contains nonzero off-diagonal entries.

We now show $T_{\phi}$ is discrete, for each $\phi \in \SwAut(G)$.
Since $\SwAut(G)$ is finite, there is a positive integer $m$ so that 
$\tilde{P}_{\phi}^{m} = \alpha\II$, for some $\alpha \in \TT$.
Suppose, for contradiction, that $T_{\phi}$ is dense.
Then, for any $\delta > 0$, there is a time $t \in (0,\delta)$ so that
$U(t) = \beta\tilde{P}_{\phi}$, for some $\beta \in \TT$.
This shows that $U(mt) = (\alpha\beta^{m})\II$, which implies that $T_{\id}$ is
also dense. This contradiction shows that $T_{\phi}$ is discrete.
\end{proof}

\medskip

\begin{theorem} \label{thm:upst-implies-cyclic}
Let $G$ be a graph with universal perfect state transfer.
Then, the switching automorphism group $\SwAut(G)$ is cyclic.
\end{theorem}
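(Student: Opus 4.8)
The plan is to exploit Lemma \ref{lemma:upst-not-dense} together with the multiplicativity of the quantum walk in order to realize $\SwAut(G)$ as a homomorphic image of an infinite cyclic group. By Lemma \ref{lemma:upst-not-dense} the set $\Gamma$ is a discrete additive subgroup of $\RR$, and by periodicity (Lemma \ref{lemma:upst-periodic}) it contains a nonzero time $t$ with $e^{-itA(G)} = \alpha\II = \alpha\tilde{P}_{\id}$; hence $\Gamma$ is nontrivial and therefore equals $\tau\ZZ$, where $\tau$ is its least positive element. I would then define a map $f \colon \Gamma \to \SwAut(G)$ sending each $t \in \Gamma$ to the permutation $\phi$ for which $e^{-itA(G)} = \gamma\tilde{P}_{\phi}$. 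This $\phi$ is well defined because scalar multiples of a monomial matrix share the same underlying permutation, and $f$ is a group homomorphism because $e^{-i(t_{1}+t_{2})A(G)} = e^{-it_{1}A(G)}e^{-it_{2}A(G)}$ and monomial matrices compose as $\tilde{P}_{\phi_{1}}\tilde{P}_{\phi_{2}} = \tilde{P}_{\phi_{1}\phi_{2}}$.

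Next I would prove that $f$ is surjective, which is the crux of the argument. Given any $\phi \in \SwAut(G)$, I choose a vertex $a$ and set $b = \phi(a)$; universal perfect state transfer supplies a time $t$ at which $G$ has perfect state transfer from $a$ to $b$. The hypotheses of Theorem \ref{thm:pst-implies-permutation} are then automatically met for this $\phi$: by Theorem \ref{thm:upgst-distinct-eigenvalues} the eigenvalues of $A(G)$ are simple, so $A(G)$ has an essentially unique orthonormal eigenbasis; by Corollary \ref{cor:pgst-commute-polynomial} every element of $\SwAut(G)$ is a polynomial in $A(G)$ and is thus diagonalized by that same eigenbasis; and by Theorem \ref{thm:upgst-flat} each eigenvector entry has magnitude $1/\sqrt{n}$, so $\bracket{a}{z_{k}} \neq 0$ for every $k$. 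Theorem \ref{thm:pst-implies-permutation} then yields $e^{-itA(G)} = \gamma\tilde{P}_{\phi}$, so $t \in \Gamma$ and $f(t) = \phi$. Hence $f$ hits every element of $\SwAut(G)$.

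Finally, since $\Gamma = \tau\ZZ$ is infinite cyclic, its image under the homomorphism $f$ is cyclic, generated by $f(\tau)$. Surjectivity of $f$ then forces $\SwAut(G) = \langle f(\tau) \rangle$ to be cyclic, as claimed. Here $\SwAut(G)$ is finite---its order divides $|G|$ by Theorem \ref{thm:upgst-implies-abelian}---so ``cyclic'' is unambiguous, and the scalar ambiguity in each $\tilde{P}_{\phi}$ is harmless because $f$ records only the permutation part.

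I expect the main obstacle to be the surjectivity step: one must ensure that the time $t$ realizing perfect state transfer from $a$ to $\phi(a)$ actually forces the entire unitary $e^{-itA(G)}$ to coincide with a scalar multiple of $\tilde{P}_{\phi}$, rather than merely pinning down its $a$-th column. This is precisely what Theorem \ref{thm:pst-implies-permutation} delivers, but only once the three spectral facts above (simple spectrum, shared polynomial eigenbasis, and flatness granting full vertex support) have been assembled to verify its hypotheses. Everything else is then a formal consequence of $f$ being a surjection from an infinite cyclic group.
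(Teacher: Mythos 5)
Your proposal is correct and follows essentially the same route as the paper: both use Lemma \ref{lemma:upst-not-dense} to obtain the discrete group $\Gamma = \tau\ZZ$, both use universal perfect state transfer together with Theorem \ref{thm:pst-implies-permutation} to show every switching automorphism is realized at some time in $\Gamma$, and both conclude that the automorphism realized at the least positive time generates $\SwAut(G)$. Your packaging of this as a surjective homomorphism from the infinite cyclic group $\tau\ZZ$ onto $\SwAut(G)$ is a slightly cleaner formalization of the paper's explicit computation (which shows every visiting time is an integer multiple of $\tstar$ and then writes $\phi = (\vstar)^{m}$), and your explicit verification of the hypotheses of Theorem \ref{thm:pst-implies-permutation} via Theorems \ref{thm:upgst-distinct-eigenvalues} and \ref{thm:upgst-flat} and Corollary \ref{cor:pgst-commute-polynomial} is in fact more careful than the paper, which invokes that theorem without checking them.
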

\begin{proof}
Let $G$ be a graph with universal perfect state transfer 
whose Hermitian adjacency matrix is $A(G)$.
In the proof, we use $\gamma_{0},\gamma_{1},\gamma_{2},\ldots$
to represent complex numbers of unit modulus.
We consider the set $\Gamma$ of times where the quantum walk ``visits''
the switching automorphism group:
\begin{equation}
\Gamma = 
\left\{t \in \RR^{+} : (\exists\phi \in \SwAut(G))(\exists \gamma \in \TT)
	\left[e^{-itA(G)} = \gamma\tilde{P}_{\phi}\right] \right\}.
\end{equation}
By Lemma \ref{lemma:upst-not-dense}, $\Gamma$ is a discrete subgroup of $\RR$
and thus it has a least positive element. Let 
\begin{equation}
\tstar = \min\Gamma 
\end{equation}
and let $\vstar \in \SwAut(G)$ be the switching automorphism 
for which $e^{-i\tstar A(G)} = \gamma_{0}\tilde{P}_{\vstar}$.
We show that $\tilde{P}_{\vstar}$ generates $\SwAut(G)$.

Note that if $e^{-it_{1} A(G)} = \gamma_{1}\tilde{P}_{\psi}$, 
for some $\tilde{P}_{\psi} \in \SwAut(G)$,
then $t_{1}$ is necessarily an integer multiple of $\tstar$. 
Otherwise, if $m\tstar < t_{1} < (m+1)\tstar$, 
for some nonnegative integer $m$, then
\begin{eqnarray}
e^{-i(t_{1} - m\tstar)A(G)}
	& = & e^{-it_{1} A(G)} [e^{i\tstar A(G)}]^{m} \\
	& = & \gamma_{1}\tilde{P}_{\psi}(\gamma_{0}\tilde{P}_{\vstar})^{m} \\
	& = & (\gamma_{0}^{m}\gamma_{1}) \ \tilde{P}_{\psi}\tilde{P}_{(\vstar)^m} \\
	& = & (\gamma_{0}^{m}\gamma_{1}\gamma_{2}) \ \tilde{P}_{\psi (\vstar)^{m}}.
\end{eqnarray}
This contradicts the choice of $\tstar$ since $0 < t_{1} - m\tstar < \tstar$.

We show that for each $\tilde{P}_{\phi} \in \SwAut(G)$, there is a time $t$ 
for which $e^{-it A(G)} = \gamma\tilde{P}_{\phi}$, for some $\gamma \in \TT$.
Fix a vertex $a$ of $G$ and let $b = \phi(a)$ with 
$\tilde{P}_{\phi}\uket{a} = \gamma_{3}\uket{b}$.
Since $G$ has universal perfect state transfer, there is a time $\tilde{t}$ where
\begin{equation}
e^{-i\tilde{t}A(G)}\uket{a} = \gamma_{4}\uket{b} = (\gamma_{3}^{-1}\gamma_{4}) \tilde{P}_{\phi}\uket{a}.
\end{equation}
By Theorem \ref{thm:pst-implies-permutation}, we have
$e^{-i\tilde{t}A(G)} = \gamma\tilde{P}_{\phi}$, for some $\gamma \in \TT$.

Thus, for each $\tilde{P}_{\phi} \in \SwAut(G)$, there is a time $\tilde{t}$,
where $e^{-i\tilde{t}A(G)} = \gamma\tilde{P}_{\phi}$ and $\tilde{t} = m\tstar$, 
for some integer $m \in \ZZ$ and $\gamma \in \TT$. Therefore,
\begin{equation}
\gamma\tilde{P}_{\phi} 
	= e^{-i\tilde{t}A(G)} 
	= (e^{-i\tstar A(G)})^{m} 
	= \gamma_{0}^{m} \tilde{P}_{\vstar}^{m}.
\end{equation}
This implies $\phi = (\vstar)^{m}$ which shows that $\SwAut(G) = \avg{\vstar}$.
\end{proof}

By Theorem \ref{thm:upgst-implies-abelian}, the order of the switching automorphism group 
of a universal perfect state transfer graph must divide the order of the graph.
The next theorem characterizes those graphs whose switching automorphism group
has order equal to the order of the graph.

\begin{theorem}
Let $G$ be a $n$-vertex graph with universal perfect state transfer.
Then, $\SwAut(G)$ is cyclic of order $n$ 
if and only if 
$G$ is switching isomorphic to a circulant.
\end{theorem}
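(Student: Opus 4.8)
The plan is to prove the two implications separately, with the reverse direction carrying the real content. For the easy direction, suppose $G$ is switching isomorphic to a circulant $G'$, so $A(G') = \tilde{P}_{\eta}^{\dagger}A(G)\tilde{P}_{\eta}$ for some monomial $\tilde{P}_{\eta}$. Conjugation by $\tilde{P}_{\eta}$ carries $\SwAut(G')$ isomorphically onto $\SwAut(G)$ (a product of monomial matrices is monomial, and conjugation preserves commuting with the adjacency matrix), so it suffices to analyze $G'$. Since $A(G')$ is circulant it is a polynomial in $\Theta_{n}$ and hence commutes with $\Theta_{n}$; thus $\Theta_{n} \in \SwAut(G')$ and $\avg{\Theta_{n}}$ already has order $n$. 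Therefore $|\SwAut(G)| = |\SwAut(G')| \ge n$, while Theorem \ref{thm:upgst-implies-abelian} forces $|\SwAut(G)|$ to divide $n$; so $|\SwAut(G)| = n$, and it is cyclic by Theorem \ref{thm:upst-implies-cyclic}.

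For the reverse direction I first extract the combinatorial structure of the action. From the Burnside computation inside the proof of Theorem \ref{thm:upgst-implies-abelian} we have $|\Orb(G)|\cdot|\SwAut(G)| = n$, so $|\SwAut(G)| = n$ yields a single orbit, i.e. $\SwAut(G)$ acts transitively on $V(G)$. Let $\tilde{P}_{\phi}$ generate $\SwAut(G)$, viewed as a group of permutations; then $\phi$ has order $n$. If $\phi$ had a cycle of length $\ell < n$ in its disjoint-cycle decomposition, then $\phi^{\ell}$ would fix every vertex on that cycle while being nontrivial (its order $n$ does not divide $\ell$), contradicting Lemma \ref{lemma:upgst-no-fixed-point}; hence $\phi$ is a single $n$-cycle. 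Relabeling the vertices (a permutation conjugation) so that $\phi = \sigma = (0\ 1\ \cdots\ n-1)$, I may write $\tilde{P}_{\phi} = \Theta_{n}D$ for a diagonal unitary $D = \diag(d_{0},\ldots,d_{n-1})$ with each $|d_{k}| = 1$.

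The heart of the argument is to remove $D$ by a diagonal switching. I look for a diagonal unitary $E = \diag(e_{0},\ldots,e_{n-1})$ and a scalar $\beta \in \TT$ with $E^{\dagger}\Theta_{n}DE = \beta\Theta_{n}$. Comparing the single nonzero entry in each column reduces this to the system $e_{k}d_{k}\overline{e_{k+1}} = \beta$ for all $k$ modulo $n$; fixing $e_{0} = 1$ determines $e_{1},\ldots,e_{n-1}$ recursively, and multiplying all $n$ equations shows the cyclic consistency condition to be exactly $\beta^{n} = \prod_{k} d_{k}$, which is solvable with $\beta \in \TT$ since the right-hand side lies in $\TT$. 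With such an $E$, since $\tilde{P}_{\phi}$ commutes with $A(G)$, the conjugate $E^{\dagger}\tilde{P}_{\phi}E = \beta\Theta_{n}$ commutes with $E^{\dagger}A(G)E$; as the matrices commuting with $\Theta_{n}$ are precisely the circulants, $C := E^{\dagger}A(G)E$ is circulant. Composing the relabeling permutation with $E$ gives a single monomial matrix conjugating $A(G)$ to $C$, so $G$ is switching isomorphic to a circulant.

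I expect the main obstacle to lie in this reverse direction, specifically in the diagonal gauge transformation: one must verify that the wrap-around relation $\beta^{n} = \prod_{k} d_{k}$ is the only obstruction and that it can always be met, which is precisely what makes the a priori twisted monomial generator $\Theta_{n}D$ untwistable into the plain shift $\Theta_{n}$. By contrast, the step identifying $\phi$ as an $n$-cycle is routine given Lemma \ref{lemma:upgst-no-fixed-point} and the Burnside count, and the forward direction is immediate from Theorems \ref{thm:upgst-implies-abelian} and \ref{thm:upst-implies-cyclic}.
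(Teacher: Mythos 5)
Your proof is correct and follows essentially the same route as the paper in both directions: circulant $\Rightarrow$ order $n$ via the conjugated shift $\tilde{P}^{\dagger}\Theta_{n}\tilde{P}$, and order $n$ $\Rightarrow$ circulant via the fixed-point-free generator being an $n$-cycle. Your explicit diagonal gauge transformation untwisting $\Theta_{n}D$ into $\beta\Theta_{n}$ (with the consistency condition $\beta^{n}=\prod_{k}d_{k}$) is a welcome elaboration of a step the paper compresses into the single assertion that $\tilde{P}_{\varphi^{\star}}$ and $A(G)$ ``share the same circulant eigenbasis'' up to switching equivalence.
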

\begin{proof}
Assume that $G$ is a graph with universal perfect state transfer
whose adjacency matrix is the Hermitian matrix $A(G)$.

($\Longrightarrow$)
Suppose that the switching automorphism group $\SwAut(G)$ is cyclic of order $n$.
Let $\vstar$ be a switching automorphism that generates $\SwAut(G)$, that is, 
$\SwAut(G) = \avg{\vstar}$.
Since $\tilde{P}_{\vstar}$ has no fixed points (by Lemma \ref{lemma:upgst-no-fixed-point})
and has order $n$, its cycle structure is of a $n$-cycle. 
Without loss of generality, by reordering, we may assume that $\vstar = (1 \ 2 \ \ldots \ n)$.
Since $\tilde{\Theta}_{n} = \tilde{P}_{\vstar} \in \SwAut(G)$, 
it commutes with $A(G)$, and hence they share the same circulant eigenbasis. 
This shows that $A(G)$ is a circulant up to switching equivalence.

($\Longleftarrow$)
Suppose that $G$ is switching isomorphic to a circulant.
Thus, $A(G) = \tilde{P}^{\dagger}C\tilde{P}$ for some circulant matrix $C$ and monomial matrix $\tilde{P}$.
By Theorem \ref{thm:upst-implies-cyclic}, we know $\SwAut(G)$ is cyclic whose order divides $n$.
Since any circulant matrix is a polynomial in $\Theta_{n}$, 
suppose $C = f(\Theta_{n})$ for some polynomial $f(x) \in \CC[x]$.
Consider the monomial matrix $\tilde{Q} = \tilde{P}^{\dagger}\Theta_{n}\tilde{P}$.
Note that $A(G)$ commutes with $\tilde{Q}$ since
$A(G)$ is a polynomial in $\tilde{Q}$:
\begin{equation}
A(G) 
	= \tilde{P}^{\dagger}f(\Theta_{n})\tilde{P}
	= f(\tilde{P}^{\dagger}\Theta_{n}\tilde{P})
	= f(\tilde{Q}).
\end{equation}
This implies that $\tilde{Q}$ is a switching automorphism of $G$.
Note $\tilde{Q}$ is of order $n$, since $\tilde{\Theta}_{n}$ is.
So, the order of $\SwAut(G)$ is $n$ since it has an element of order $n$. 
\end{proof}


\section{Explicit Constructions}

Our goal in this section is to show infinite families of Hermitian graphs with universal 
pretty good state transfer. Given our earlier characterizations, it is not surprising that
these families are based on complex circulants.

Let $\SC_{n}$ be a $n$-vertex Hermitian graph whose adjacency matrix is the following circulant matrix:
\begin{equation}
A(\SC_{n}) = 
\begin{bmatrix}
0 & -i & 0 & \ldots & 0 & i \\
i & 0 & -i & \ldots & 0 & 0 \\
0 & i & 0 & \ldots & 0 & 0 \\
\vdots & \vdots & \vdots & \vdots & \vdots & \vdots \\
0 & 0 & 0 & \ldots & 0 & -i \\ 
-i & 0 & 0 & \ldots & i & 0
\end{bmatrix}
\end{equation}
We note that $A(\SC_{n}) = i\Theta_{n} - i\Theta_{n}^{T}$,
where $\Theta_{n} = \Circ(0,\ldots,0,1)$ as defined in Section \ref{section:preliminaries}.

\begin{figure}[t]
\begin{center}
\begin{tikzpicture}[->, >=stealth', shorten >=2pt, scale=1.5][h!!!!!!!!!]
\foreach \c [count=\x from 1] in {0, ..., 4} 
{
	\draw[->, >=latex] ({(\x-1)*72+18+9}:1)
	arc ({(\x-1)*72+18+9}:{(\x*72+18-9}:1); 
	\draw [->, >=latex]({\x*72+18-9}:1) -> ({(\x-1)*72+18+8}:1); 
	\node at (\x*72+50:0.65)[circle][scale=0.75]{$-i$}; 
	\node at (\x*72+50:1.25)[circle][scale=0.75]{$i$};	
}	
\foreach \c [count=\x from 1] in {0, ..., 4} 
{
	\node at (\x*72+18:1)[circle, circular drop shadow, fill={blue!20}, draw][scale=0.75]{$\c$};	
}	
\end{tikzpicture}
\caption{The Hermitian $\SC_{p}$ has universal pretty good state transfer, for each prime $p$.}
\end{center}
\end{figure}
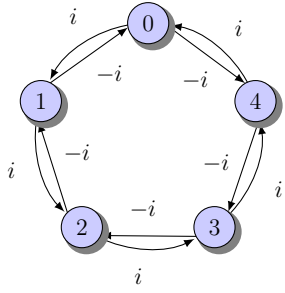

We say that a set of real numbers $\lambda_{1},\ldots,\lambda_{m}$ is
{\em linearly independent over the rationals} if whenever
$\sum_{k=1}^{m} a_{k}\lambda_{k} = 0$, for rational numbers $a_{1},\ldots,a_{m} \in \QQ$, 
then $a_{k} = 0$, for each $k=1,\ldots,m$.
As in Godsil \etal \cite{gkss12}, we apply Kronecker's theorem from number theory. 
The particular version of Kronecker's theorem useful for our case is stated 
in the following (see Theorem 444 in Hardy and Wright \cite{hardy-wright}). 

\begin{theorem} \label{thm:kronecker}
(Kronecker) 
Let $\lambda_{1},\ldots,\lambda_{m}$ be linearly independent over $\QQ$.
Let $\alpha_{1},\ldots,\alpha_{m}$ be arbitrary real numbers, 
and let $T$ and $\epsilon$ be positive reals.
Then, there is a real number $t \in \RR$ and integers $p_{1},\ldots,p_{m} \in \ZZ$ 
so that $t > T$ and
\begin{equation}
|t\lambda_{k} - \alpha_{k} - p_{k}| < \epsilon,
\end{equation}
for each $k=1,\ldots,m$.
\end{theorem}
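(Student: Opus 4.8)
The plan is to recognize Theorem~\ref{thm:kronecker} as the assertion that the one-parameter orbit $\{(t\lambda_{1},\ldots,t\lambda_{m}) \bmod \ZZ^{m} : t \in \RR\}$ is dense (indeed equidistributed) in the $m$-torus $\RR^{m}/\ZZ^{m}$. Writing $\alpha=(\alpha_{1},\ldots,\alpha_{m})$, the inequalities $|t\lambda_{k}-\alpha_{k}-p_{k}|<\epsilon$ for suitable integers $p_{k}$ say exactly that the point $(t\lambda_{1},\ldots,t\lambda_{m}) \bmod \ZZ^{m}$ lies within $\epsilon$, in the sup-metric on the torus, of the image of $\alpha$. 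So it suffices to show that this orbit meets every neighborhood of every point of $\RR^{m}/\ZZ^{m}$, and that it does so for arbitrarily large $t$.

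First I would reduce density to equidistribution and invoke Weyl's criterion in its continuous form: the orbit $t \mapsto (t\lambda_{1},\ldots,t\lambda_{m}) \bmod \ZZ^{m}$ is equidistributed if and only if, for every nonzero integer vector $n=(n_{1},\ldots,n_{m}) \in \ZZ^{m}$,
\[
\lim_{S\to\infty} \frac{1}{S}\int_{0}^{S} \exp\!\Big(2\pi i\, t\textstyle\sum_{k=1}^{m} n_{k}\lambda_{k}\Big)\,dt = 0.
\]
The key computation is short. Setting $c=\sum_{k=1}^{m} n_{k}\lambda_{k}$, the integral equals $(e^{2\pi i c S}-1)/(2\pi i c S)$, whose modulus is at most $1/(\pi|c|S)\to 0$, provided $c\neq 0$. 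This is precisely where the hypothesis enters: since the $\lambda_{k}$ are linearly independent over $\QQ$ and $n\neq 0$ has integer (hence rational) entries, $c=\sum_{k} n_{k}\lambda_{k}\neq 0$. Thus every nontrivial Weyl average decays and the orbit is equidistributed, hence dense.

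Finally I would extract the statement with $t>T$. Equidistribution on the half-line $[0,\infty)$ forces the orbit restricted to any tail $[T,\infty)$ to remain dense, since a tail carries full asymptotic density; so for the target $\alpha$ and tolerance $\epsilon$ there is some $t>T$ with $(t\lambda_{1},\ldots,t\lambda_{m}) \bmod \ZZ^{m}$ within $\epsilon$ of $\alpha \bmod \ZZ^{m}$, and taking each $p_{k}$ to be the nearest integer to $t\lambda_{k}-\alpha_{k}$ yields the desired inequalities. I expect the only genuine obstacle to be the passage from vanishing Weyl averages to honest equidistribution, i.e.\ the continuous Weyl equidistribution theorem; this is a standard density argument, since trigonometric polynomials are uniformly dense in the continuous functions on the torus, so vanishing of all nontrivial exponential averages upgrades to $\frac{1}{S}\int_{0}^{S} f\,dt$ converging to the Haar integral of $f$ for every continuous $f$, and testing against a bump function concentrated near $\alpha$ produces orbit points in every neighborhood. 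An alternative, purely algebraic route bypasses equidistribution: the closure $H$ of the orbit is a closed subgroup of $\RR^{m}/\ZZ^{m}$, and if $H$ were proper its annihilator in the Pontryagin dual $\ZZ^{m}$ would contain a nonzero $n$ with $e^{2\pi i t c}=1$ for all $t$, forcing $c=0$ and again contradicting rational independence; this is shorter but relies on the classification of closed subgroups of the torus.
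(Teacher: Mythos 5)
Your argument is correct, but note that the paper itself does not prove this statement at all: it is quoted verbatim as Theorem 444 of Hardy and Wright and used as a black box, so there is no internal proof to compare against. What you have written is a legitimate self-contained proof via the continuous form of Weyl's equidistribution criterion: the reformulation on the torus $\RR^{m}/\ZZ^{m}$ is right, the computation $\bigl|\frac{1}{S}\int_{0}^{S}e^{2\pi i c t}\,dt\bigr|\le 1/(\pi|c|S)$ is right, and the hypothesis of rational independence enters exactly where it must, to guarantee $c=\sum_{k}n_{k}\lambda_{k}\neq 0$ for nonzero integer vectors $n$. Your handling of the $t>T$ constraint via the fact that an initial segment contributes nothing to the asymptotic average is also the standard and correct way to get the tail statement. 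The comparison with the classical source is worth making explicit: the proofs in Hardy and Wright (Lettenmeyer's and Estermann's inductive arguments, and Bohr's proof via the function $F(t)=\prod_{k}(1+e^{2\pi i(t\lambda_{k}-\alpha_{k})})$) are more elementary and avoid any appeal to Stone--Weierstrass, whereas your route is shorter to state and yields the strictly stronger conclusion of equidistribution rather than mere density --- a strengthening the paper never needs, since all its applications (Theorem \ref{thm:circulant-with-upgst} and its corollaries) only require hitting a target to within $\epsilon$ at some large time. The one step you correctly flag as nontrivial, upgrading vanishing Weyl sums to equidistribution, is indeed the only place where real work is hidden, and your sketch of it (density of trigonometric polynomials plus a bump function under the indicator of a neighborhood of $\alpha$) is the standard and complete way to fill it.
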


The following proposition is a crucial ingredient of our main result in this section.

\begin{proposition} \label{prop:spectra-hermitian-cycle}
For any prime $p > 2$, the set of real numbers
\begin{equation}
\left\{ \sin\left(\frac{2\pi k}{p}\right) : k=1,\ldots,(p-1)/2\right\}
\end{equation}
is linearly independent over the rationals.
\end{proposition}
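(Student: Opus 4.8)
The goal is to show that the numbers $\sin(2\pi k/p)$, for $k=1,\ldots,(p-1)/2$, are linearly independent over $\QQ$, where $p>2$ is prime. The natural approach is to work inside the cyclotomic field $\QQ(\omega_p)$, where $\omega_p = \exp(2\pi i/p)$, and exploit the fact that the minimal polynomial of $\omega_p$ over $\QQ$ is the $p$th cyclotomic polynomial $\Phi_p(x) = 1 + x + \cdots + x^{p-1}$, which has degree $p-1$. The key structural fact I would use is that $\{1,\omega_p,\omega_p^2,\ldots,\omega_p^{p-2}\}$ is a $\QQ$-basis of $\QQ(\omega_p)$, or equivalently that the only $\QQ$-linear relation among all $p$ powers $1,\omega_p,\ldots,\omega_p^{p-1}$ is the one proportional to $\sum_{j=0}^{p-1}\omega_p^{j}=0$.

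\textbf{Main steps.} First I would suppose a rational relation $\sum_{k=1}^{(p-1)/2} c_k \sin(2\pi k/p) = 0$ with $c_k \in \QQ$, and rewrite each sine in exponential form as $\sin(2\pi k/p) = (\omega_p^{k} - \omega_p^{-k})/(2i) = (\omega_p^{k} - \omega_p^{p-k})/(2i)$. Substituting and clearing the factor $1/(2i)$, this produces a $\QQ$-linear relation among the powers $\omega_p^{1},\ldots,\omega_p^{p-1}$ of the form $\sum_{k=1}^{(p-1)/2} c_k(\omega_p^{k} - \omega_p^{p-k}) = 0$. The crucial observation is that since each $k$ ranges in $\{1,\ldots,(p-1)/2\}$, the exponents $k$ and $p-k$ are all distinct and nonzero, and together they sweep out exactly $\{1,2,\ldots,p-1\}$ once each. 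So the relation assigns coefficient $+c_k$ to $\omega_p^{k}$ and coefficient $-c_k$ to $\omega_p^{p-k}$, with no power $\omega_p^{0}=1$ appearing at all.

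Next I would invoke the minimal-polynomial fact. Any $\QQ$-linear dependence among $1,\omega_p,\ldots,\omega_p^{p-1}$ must be a rational multiple of the relation coming from $\Phi_p(\omega_p)=0$, namely $1 + \omega_p + \cdots + \omega_p^{p-1} = 0$, in which every power has coefficient $1$. Comparing coefficients: our relation has coefficient $0$ on $\omega_p^{0}$, whereas any nonzero multiple of $\Phi_p(\omega_p)=0$ has a nonzero coefficient on $\omega_p^{0}$. Hence our relation must be the trivial multiple, forcing every coefficient to vanish; in particular $c_k = 0$ (and $-c_k=0$) for each $k$. This gives the desired linear independence.

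\textbf{Anticipated obstacle.} The only delicate point is justifying precisely that the space of $\QQ$-linear relations among $\{1,\omega_p,\ldots,\omega_p^{p-1}\}$ is exactly one-dimensional, spanned by the cyclotomic relation. This is where the irreducibility of $\Phi_p$ over $\QQ$ is essential: it guarantees $[\QQ(\omega_p):\QQ]=p-1$, so among the $p$ powers listed there is exactly one independent linear relation. I would state this as a clean lemma (or cite the irreducibility of $\Phi_p$) rather than re-deriving it, and then the coefficient-matching argument above closes the proof cleanly. The bookkeeping that $\{k, p-k : 1 \le k \le (p-1)/2\}$ is a disjoint partition of $\{1,\ldots,p-1\}$ — which uses that $p$ is odd so $k \ne p-k$ — is the other point to verify, but it is routine.
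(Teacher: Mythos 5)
Your proof is correct, but it takes a genuinely different route from the paper's. You work directly in the cyclotomic field: writing $\sin(2\pi k/p) = (\omega_p^{k}-\omega_p^{p-k})/(2i)$ turns a rational relation among the sines into a $\QQ$-linear relation among $\omega_p^{1},\ldots,\omega_p^{p-1}$ with coefficient $0$ on $\omega_p^{0}$, and since irreducibility of $\Phi_p$ forces the relation space of $\{1,\omega_p,\ldots,\omega_p^{p-1}\}$ to be the one-dimensional span of the all-ones relation, every coefficient must vanish. The paper instead expands $\sin(k\theta)$ via the binomial theorem as a polynomial in $\sin\theta$ of degree at most $p-2$ (handling even $k$ through $\sin(2\pi k/p) = -\sin(2\pi(p-k)/p)$), so that a nontrivial relation would exhibit a nonzero polynomial of degree at most $p-2$ vanishing at $\sin(2\pi/p)$, contradicting the fact (cited from Beslin--De Angelis) that the minimal polynomial of $\sin(2\pi/p)$ has degree $p-1$. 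Your argument rests only on the standard Eisenstein irreducibility of $\Phi_p$ and avoids both the Chebyshev-style expansion and the external citation, so it is arguably cleaner and more self-contained; the paper's approach has the side benefit of introducing the degree bound on the minimal polynomial of $\sin(2\pi/p)$, which the authors reuse later (in the corollary on $K_{2}\cart\SC_{p}$, where they need $\sin(2\pi/n)$ irrational). The two bookkeeping points you flag — the one-dimensionality of the relation space and the partition $\{k,p-k\}$ of $\{1,\ldots,p-1\}$ — are exactly the right things to verify, and both go through as you describe.
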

\begin{proof}
For each integer $k$, we may express $e^{ik\theta}$ in two different ways:
\begin{equation}
e^{ik\theta} 
= 
(\cos\theta + i\sin\theta)^{k}
=
\cos(k\theta) + i\sin(k\theta).
\end{equation}
Thus, we have $\sin(k\theta) = \Imag(e^{ik\theta})$.
Therefore,
\begin{eqnarray}
\sin(k\theta) 
	& = & \sum_{j=0}^{k} \binom{k}{j} \iverson{j \mbox{\footnotesize ~odd}} 
			(-1)^{(j-1)/2} (\cos\theta)^{k-j} (\sin\theta)^{j} \\
	& = & \sum_{\ell=0}^{\lfloor \frac{k-1}{2}\rfloor} \binom{k}{2\ell+1} 
			(-1)^{\ell} (\cos\theta)^{k-(2\ell+1)} (\sin\theta)^{2\ell+1}.
\end{eqnarray}
For our purposes, we let $\theta = 2\pi/p$, where $p$ is an odd prime.
For $k$ odd, we may replace $\cos\theta$ in the above equation 
with $\sqrt{1-\sin^{2}\theta}$, and it shows that 
$\sin(2\pi k/p)$ is a polynomial in $\sin(2\pi/p)$ of degree at most $k$.
For $k$ even, we note that
\begin{equation}
\sin\left(\frac{2\pi k}{p}\right)
=
- \sin\left(\frac{2\pi (p-k)}{p}\right),
\end{equation}
and now $p-k$ is odd.
In either case, we have that $\sin(2\pi k/p)$ is a polynomial in
$\sin(2\pi/p)$ of degree at most $\max\{k,p-k\}$.
Note that $\max\{k,p-k\} \le p-2$ for each $k=1,\ldots,(p-1)/2$.

Suppose that the rational numbers $a_{1},\ldots,a_{(p-1)/2}$ satisfy
\begin{equation}
\sum_{k=1}^{(p-1)/2} a_{k}\sin\left(\frac{2\pi k}{p}\right) = 0.
\end{equation}
Since $\sin(2\pi k/p)$ may be written as $f_{k}(\sin(2\pi/p))$ where
$f_{k}(x)$ is a polynomial of degree $d_{k}$, for $d_{k} \le p-2$, 
if $a_{k} \neq 0$ for some collection of indices $k$, then
\begin{equation}
	\sum_{k: a_{k} \neq 0} a_{k} f_{k}\left(\sin\left(\frac{2\pi}{p}\right)\right) = 0.
\end{equation}
Note that the left-hand side is a polynomial of degree at most $p-2$
that vanishes at $\sin(2\pi/p)$.
But, it is known that the minimal polynomial of $\sin(2\pi/p)$ 
is of degree $p-1$ (see \cite{bd04}).
Therefore, we must have $a_{k} = 0$ for all $k=1,\ldots,(p-1)/2$.
\end{proof}

We are now ready to prove our main result about the family of graphs $\SC_{p}$.

\begin{theorem} \label{thm:circulant-with-upgst}
For each prime $p$, the graph $\SC_{p}$ has universal pretty good state transfer.
\end{theorem}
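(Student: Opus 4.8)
The plan is to exploit that $\SC_{p}$ is a circulant, so the Fourier matrix $\FF_{p}$ diagonalizes $A(\SC_{p})$ with the flat eigenbasis $\{\ket{F_{k}}\}$ (consistent with Theorem~\ref{thm:upgst-flat}), and then force the $p$ Fourier phases into simultaneous alignment using Kronecker's theorem. First I would compute the spectrum directly from Equation~(\ref{eqn:circulant-eigenvalue}): the first row of $A(\SC_{p})$ is $(0,-i,0,\ldots,0,i)$, so $\lambda_{k} = -i\omega_{p}^{k} + i\omega_{p}^{-k} = 2\sin(2\pi k/p)$ on the $k$th Fourier mode. The crucial structural observations are the pairing $\lambda_{p-k} = -\lambda_{k}$ together with $\lambda_{0}=0$.

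Next I would write the transition amplitude explicitly. For vertices $a,b$ with $d = b-a \bmod p$, the spectral theorem gives
\begin{equation}
\tbracket{\buket{b}}{e^{-itA(\SC_{p})}}{\buket{a}} = \frac{1}{p}\sum_{k=0}^{p-1} e^{-it\lambda_{k}}\omega_{p}^{dk}.
\end{equation}
Each summand $c_{k} = e^{-it\lambda_{k}}\omega_{p}^{dk}$ lies in $\TT$, and $c_{0}=1$ for every $t$. Since $|\sum_{k} c_{k}| = p$ exactly when all $c_{k}$ coincide, and $c_{0}$ is pinned at $1$, it suffices to drive every phase $-t\lambda_{k} + 2\pi dk/p$ simultaneously close to $0 \pmod{2\pi}$; a one-line triangle inequality then yields $|\sum_{k} c_{k}| \ge p - \sum_{k}|c_{k}-1|$, which translates a phase accuracy of $\epsilon/(2\pi)$ into the desired bound $|U(t)_{b,a}| \ge 1-\epsilon$.

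The resulting conditions $t\lambda_{k} \equiv 2\pi dk/p \pmod{2\pi}$ for $k=1,\dots,p-1$ are where the pairing pays off: the condition indexed by $p-k$ reads $-t\lambda_{k} \equiv -2\pi dk/p \pmod{2\pi}$, which is identical to the one indexed by $k$. Hence only the $(p-1)/2$ conditions for $k=1,\dots,(p-1)/2$ are independent, and rewriting them as $\tfrac{t}{\pi}\sin(2\pi k/p) \equiv \tfrac{dk}{p} \pmod 1$ I would invoke Kronecker's theorem (Theorem~\ref{thm:kronecker}) with frequencies $\sin(2\pi k/p)/\pi$ and targets $dk/p$. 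These frequencies are linearly independent over $\QQ$ by Proposition~\ref{prop:spectra-hermitian-cycle} (the common factor $1/\pi$ is harmless), so for any prescribed accuracy and any $T$ there is a time $t>T$ meeting all conditions. This establishes universal pretty good state transfer for odd $p$; the degenerate case $p=2$, where the formula gives the zero matrix, is instead the $K_{2}$ graph already shown in Section~\ref{section:small-graphs} to have universal perfect state transfer.

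The step I expect to be the crux — rather than a technical obstacle — is matching the number of independent spectral conditions to the number of available independent frequencies: the eigenvalue pairing $\lambda_{p-k}=-\lambda_{k}$ collapses the $p-1$ phase constraints down to exactly $(p-1)/2$, which is precisely the count of sines shown to be $\QQ$-independent in Proposition~\ref{prop:spectra-hermitian-cycle}. Were the independent eigenvalues to outnumber the independent frequencies, Kronecker's theorem would not apply, so the primality of $p$ (entering through the degree-$(p-1)$ minimal polynomial of $\sin(2\pi/p)$) is exactly what makes the argument go through.
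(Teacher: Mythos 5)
Your proposal is correct and follows essentially the same route as the paper: diagonalize via the Fourier basis, exploit the pairing $\lambda_{p-k}=-\lambda_{k}$ to cut the phase conditions down to $(p-1)/2$, and apply Kronecker's theorem using Proposition~\ref{prop:spectra-hermitian-cycle}. The only differences are cosmetic (a triangle-inequality bound in place of the paper's cosine form, and an explicit disposal of the degenerate case $p=2$, which the paper leaves implicit).
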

\begin{proof}
Since $\SC_{p}$ is a circulant, its eigenvalues
are given by (see Equation (\ref{eqn:circulant-eigenvalue})):
\begin{equation}
\lambda_{k} = -i\omega_{p}^{k} + i\omega_{p}^{-k} = 2\sin\left(\frac{2\pi k}{p}\right),
\end{equation}
for $k=0,\ldots,p-1$, where $\omega_{p} = \exp(2\pi i/p)$.
Let $j$ be an arbitrary vertex of $\SC_{p}$. Then,
\begin{equation}
\tbracket{\buket{j}}{e^{-itA(\SC_{p})}}{\buket{0}}
= 
\sum_{k=0}^{p-1} e^{-it\lambda_{k}} \bracket{\buket{j}}{\bvket{F_{k}}}	
	\bracket{\ket{F_{k}}}{\uket{0}}
=
\frac{1}{p} \sum_{k=0}^{p-1} e^{-it\lambda_{k}} \omega_{p}^{jk},
\end{equation}
where $\ket{F_{k}}$ is the $k$th circulant eigenvector 
which satisfies $\bracket{\buket{j}}{\bvket{F_{k}}} = \omega_{p}^{jk}/\sqrt{p}$.
By using $\lambda_{p-k} = -\lambda_{k}$, which holds for $k=1,\ldots,(p-1)/2$,
we get
\begin{eqnarray} \label{eqn:upgst-prime-cycle}
\tbracket{\buket{j}}{e^{-itA(\SC_{p})}}{\buket{0}}
	& = & \frac{1}{p}
		\left[ 1 + \sum_{k=1}^{(p-1)/2} 
					\left(
					e^{-it\lambda_{k}} \omega_{p}^{jk}
					+
					e^{it\lambda_{k}} \omega_{p}^{-jk}
					\right)
		\right] \\
	& = & \frac{1}{p}
		\left[ 1 + \sum_{k=1}^{(p-1)/2} 
					2\cos\left(t\lambda_{k} - \frac{2\pi jk}{p}\right)
		\right]
\end{eqnarray}
By Proposition \ref{prop:spectra-hermitian-cycle}, the eigenvalues $\lambda_{k} = \sin(2\pi k/p)$,
for $k=1,\ldots,(p-1)/2$, are linearly independent over the rationals;
thus, the numbers $\lambda_{k}/2\pi$ are also linearly independent over $\QQ$.
Therefore, by Theorem \ref{thm:kronecker} (Kronecker's Theorem), 
for any positive reals $T$ and $\epsilon$, there is a time $t > T$ and integers $m_{k} \in \ZZ$ so that
\begin{equation}
\left|t\left(\frac{\lambda_{k}}{2\pi}\right) - \frac{jk}{p} - m_{k}\right| < \frac{\epsilon}{2\pi}
\ \
\Longleftrightarrow
\ \
\left|t\lambda_{k} - \frac{2\pi jk}{p} - 2\pi m_{k}\right| < \epsilon,
\end{equation}
for all $k=1,\ldots,(p-1)/2$.
We need to prove that for any $\delta > 0$, there is a time $t$ 
for which $|\tbracket{\buket{j}}{e^{-itA(\SC_{p})}}{\buket{0}}| \ge 1-\delta$. 

So, let $x_{k} = t\lambda_{k} - 2\pi jk/p - 2\pi m_{k}$, for $k=1,\ldots,(p-1)/2$, 
where we will view $x_{k}$ as a function of $t$.
Since $\cos(x)$ is continuous, for any given $\delta > 0$, there is a $\epsilon > 0$  
so that $\cos(x_{k}) > 1 - \delta$, whenever $|x_{k}| < \epsilon$.
Here, Kronecker's Theorem allows us to choose a time $t$ so that $|x_{k}| < \epsilon$.
Applying this to Equation (\ref{eqn:upgst-prime-cycle}), we get
\begin{equation}
\tbracket{\buket{j}}{e^{-itA(\SC_{p})}}{\buket{0}}
\ \ge \
\frac{1}{p}\left(1 + (p-1)(1-\delta)\right)
\ \ge \
1-\delta.
\end{equation}
This shows that $\SC_{p}$ has universal pretty good state transfer.
\end{proof}

We observe in the following corollary that arguments used to prove
Theorem \ref{thm:circulant-with-upgst} also apply to a Cartesian
bunkbed of $\SC_{p}$, for primes $p \ge 5$.
Recall that given two graphs $G_{1}$ and $G_{2}$, their Cartesian product 
$G_{1} \cart G_{2}$ is defined as the graph whose adjacency matrix is 
$A(G_{1}) \otimes \II + \II \otimes A(G_{2})$.

\begin{corollary}
For each prime $p \ge 5$, the graph $K_{2} \cart \SC_{p}$ has universal pretty good state transfer.
\end{corollary}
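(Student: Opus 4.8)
The plan is to exploit the tensor structure of the Cartesian product. Since $A(K_{2} \cart \SC_{p}) = X \otimes \II + \II \otimes A(\SC_{p})$ and the two summands commute, the quantum walk factorizes as $e^{-itA(K_{2}\cart\SC_{p})} = e^{-itX} \otimes e^{-itA(\SC_{p})}$. Writing each vertex as a pair $(s,j)$ with $s \in \{0,1\}$ (which copy) and $j \in \{0,\ldots,p-1\}$ (cycle position), the transition amplitude therefore splits as
\begin{equation}
\tbracket{(s',j')}{e^{-itA(K_{2}\cart\SC_{p})}}{(s,j)}
= \tbracket{s'}{e^{-itX}}{s} \cdot \tbracket{j'}{e^{-itA(\SC_{p})}}{j}.
\end{equation}
A direct computation gives $e^{-itX} = \cos(t)\II - i\sin(t)X$, so the first factor has modulus $|\cos t|$ when $s=s'$ and $|\sin t|$ when $s \neq s'$, while the second factor is exactly the cycle amplitude from Theorem \ref{thm:circulant-with-upgst}, of modulus at most $1$. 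Since each factor has modulus at most $1$, universal pretty good state transfer on $K_{2} \cart \SC_{p}$ is equivalent to driving \emph{both} factors to modulus $1$ at a single common time $t$.

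First I would reduce this to one simultaneous Diophantine approximation. Fix a source and target, let $\Delta = s'-s \bmod 2$, and let $r = j'-j \bmod p$ (by the circulant symmetry of $\SC_{p}$ the second factor depends only on $r$). The first factor tends to modulus $1$ precisely when $t/\pi$ is close to $\Delta/2$ modulo $1$; reusing Equation (\ref{eqn:upgst-prime-cycle}), the second factor tends to $1$ when $t\lambda_{k}/(2\pi)$ is close to $rk/p$ modulo $1$ for every $k=1,\ldots,(p-1)/2$, where $\lambda_{k}=2\sin(2\pi k/p)$. I would then apply Kronecker's theorem (Theorem \ref{thm:kronecker}) with the frequency set
\begin{equation}
\left\{ \frac{1}{\pi} \right\} \cup \left\{ \frac{\lambda_{k}}{2\pi} : k=1,\ldots,\frac{p-1}{2} \right\}
\end{equation}
and the targets $\Delta/2$ and $rk/p$, obtaining a time $t$ at which every factor is within $\epsilon$ of its optimum, hence state transfer from $(s,j)$ to $(s',j')$. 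As the pair was arbitrary, this yields the universal property, exactly mirroring the proof of Theorem \ref{thm:circulant-with-upgst}.

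The hard part will be verifying the hypothesis of Kronecker's theorem, namely that this frequency set is linearly independent over $\QQ$. Clearing the common factor $1/\pi$, this is equivalent to the linear independence over $\QQ$ of $\{1\} \cup \{\sin(2\pi k/p) : k=1,\ldots,(p-1)/2\}$, i.e. to strengthening Proposition \ref{prop:spectra-hermitian-cycle} by adjoining the constant $1$. I would prove this by the same minimal-polynomial argument: the constant $1$ is a degree-$0$ polynomial in $\sin(2\pi/p)$, while $\sin(2\pi k/p)$ is a polynomial in $\sin(2\pi/p)$ of degree $k$ for $k$ odd and of degree $p-k$ for $k$ even; these degrees, together with the degree $0$ of the constant, are pairwise distinct and all at most $p-2$. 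Consequently any nontrivial rational relation would produce a nonzero polynomial of degree at most $p-2$ vanishing at $\sin(2\pi/p)$, contradicting that its minimal polynomial has degree $p-1$; and since the polynomials have pairwise distinct degrees no nontrivial combination can degenerate to the zero polynomial. The primality of $p$ is exactly what supplies the degree-$(p-1)$ minimal polynomial that drives this argument, and is the essential number-theoretic input beyond what was needed for $\SC_{p}$ itself.

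As a consistency check I would observe that this same independence rules out every relation of the form $\pm\sin(2\pi a/p) \pm \sin(2\pi b/p) = 1$, which is precisely the statement that the eigenvalues $\{\pm 1 + \lambda_{k}\}$ of $K_{2} \cart \SC_{p}$ are distinct; this aligns with the necessary condition supplied by Theorem \ref{thm:upgst-distinct-eigenvalues} and confirms that the tensor-factored Kronecker argument is not secretly certifying transfer on a graph with a forbidden repeated spectrum.
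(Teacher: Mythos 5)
Your proposal is correct and follows essentially the same route as the paper: factor the walk as $e^{-itA(K_{2})}\otimes e^{-itA(\SC_{p})}$ and apply Kronecker's theorem to the augmented frequency set $\{1\}\cup\{\sin(2\pi k/p): 1\le k\le (p-1)/2\}$ to drive both tensor factors to modulus $1$ simultaneously. The one place you diverge is in justifying that this augmented set is linearly independent over $\QQ$: the paper dispatches this with a one-line citation to the rationality criterion for $\sin(2\pi/n)$, whereas your extension of the minimal-polynomial degree argument of Proposition \ref{prop:spectra-hermitian-cycle} (distinct exact degrees, all at most $p-2$, against a minimal polynomial of degree $p-1$) actually supplies the missing details and is the more complete justification.
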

\begin{proof}
For a prime $p \ge 5$, let $\GG$ denote the graph $K_{2} \cart \SC_{p}$
whose adjacency matrix is
\begin{equation}
A(\GG) = A(K_{2}) \otimes \II_{p} + \II_{2} \otimes A(\SC_{p}).
\end{equation}
Given that this is a sum of commuting matrices, we have
\begin{equation}
e^{-itA(\GG)} 
	= e^{-it(A(K_{2}) \otimes \II_{p})} e^{-it(\II_{2} \otimes A(\SC_{p}))}
	= e^{-itA(K_{2})} \otimes e^{-itA(\SC_{p})}.
\end{equation}
By the proof of Theorem \ref{thm:circulant-with-upgst},
the eigenvalues of $\SC_{p}$ are defined by the values $\sin(2\pi k/p)$,
for $k=1,\ldots,(p-1)/2$.
Since $\sin(2\pi/n)$ is rational if and only if $n=1,2,3,4$, or $6$ (see \cite{bd04}), 
the set 
\begin{equation}
\Lambda = \{1\} \cup \{\sin(2\pi k/p) : 1 \le k \le (p-1)/2)\}
\end{equation}
is linearly independent over the rationals.

The quantum walk on $\GG$ from $(0,0)$ to $(b,j)$, for $b \in \zo$ and $j \in \ZZ_{p}$,
is given by
\begin{eqnarray} \label{eqn:cartesian-split}
\tbracket{\buket{b,j}}{e^{-itA(\GG)}}{\buket{0,0}}
	& = & \tbracket{\buket{b,j}}{(e^{-it A(K_{2})} \otimes e^{-it A(\SC_{p})})}{\buket{0,0}} \\
	& = & \tbracket{\buket{b}}{e^{-itA(K_{2})}}{\buket{0}} 
		\cdot \tbracket{\buket{j}}{e^{-itA(\SC_{p})}}{\buket{0}}.
\end{eqnarray}
On the other hand, a quantum walk on $K_{2}$ is given by
\begin{equation}
\tbracket{\buket{b}}{e^{-itA(K_{2})}}{\buket{0}}
	= \ubra{b}
		\begin{bmatrix}
		\cos(t) & -i\sin(t) \\
		-i\sin(t) & \cos(t)
		\end{bmatrix}
		\uket{0}
	= \left\{\begin{array}{ll}
		\cos(t) & \mbox{ if $b=0$ } \\
		-i\sin(t) & \mbox{ if $b=1$ }
		\end{array}\right.
\end{equation}
Thus, if $t \approx (2\ZZ + b)\pi/2$, then 
$|\tbracket{\buket{b}}{e^{-itA(K_{2})}}{\buket{0}}| \approx 1$.

By Kronecker's Theorem, since the set $\Lambda$ is linearly independent over $\QQ$, 
for any $\epsilon > 0$, there is a time $t$ so that 
the following inequalities are simultaneously satisfied:
\begin{equation} \label{eqn:approx-k2}
\left|t - \frac{(2+b)\pi}{2} - 2\pi m_{0}\right| \ < \ \epsilon
\end{equation}
\begin{equation} \label{eqn:approx-cp}
\left|t\lambda_{k} - \frac{2\pi jk}{p} - 2\pi m_{k}\right| \ < \ \epsilon
\end{equation}
where $m_{k}$ are integers, $k=0,1,\ldots,(p-1)/2$.
Here, $\epsilon > 0$ was chosen so that:
\begin{enumerate}
\item[(a)] $(\cos(t)\iverson{b=0} + \sin(t)\iverson{b=1}) > 1 - \delta$; and
\item[(b)] $\cos(x_{k}) > 1 - \delta$, where $x_{k} = t\lambda_{k} - 2\pi jk/p - 2\pi m_{k}$, 
	for $k=1,\ldots,(p-1)/2$.
\end{enumerate}
By the proof of Theorem \ref{thm:circulant-with-upgst}, 
condition (b) above ensures that 
$|\tbracket{\buket{j}}{e^{-itA(\SC_{p})}}{\buket{0}}| > 1-\delta$.
Also, condition (a) guarantees that 
$|\tbracket{\buket{b}}{e^{-itA(K_{2})}}{\buket{0}}| > 1-\delta$.
By Equation (\ref{eqn:cartesian-split}), we have
\begin{equation}
|\tbracket{\buket{b,j}}{e^{-itA(\GG)}}{\buket{0,0}}|
	\ge (1-\delta)^{2} 
	\ge 1-2\delta.
\end{equation}
This proves the claim.
\end{proof}


\section{Generalized Circulants}

We study universal state transfer on graphs that are switching equivalent to circulants. 
For this class of graphs, we provide a spectral characterization for universal perfect state transfer. 
But, first we prove a useful lemma that reduces universal perfect state transfer to perfect state transfer 
between certain pairs of vertices.

\begin{lemma} \label{lemma:upst-zero-to-one}
Let $G$ be a $n$-vertex graph that is switching equivalent to a circulant.
Then, 
$G$ has universal perfect state transfer 
if and only if 
$G$ has perfect state transfer from $0$ to $j$,
for some $j \in \{1,\ldots,n-1\}$ which satisfies $\gcd(j,n)=1$.
\end{lemma}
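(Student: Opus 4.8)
The plan is to reduce to the case where $G$ is literally a circulant, dispose of the forward direction trivially, and then extract the full universal property from the cyclic shift automorphism by invoking Theorem~\ref{thm:pst-implies-permutation}.

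Since $G$ is switching equivalent to a circulant, I would first write $A(G) = D^{\dagger}CD$ for a circulant matrix $C$ and a diagonal unitary $D$ (the permutation part is trivial because switching equivalence fixes $P_{\phi} = \II$). Then $e^{-itA(G)} = D^{\dagger}e^{-itC}D$, so the $(a,b)$-entries of $e^{-itA(G)}$ and of $e^{-itC}$ agree in modulus for every $t$; in particular, $G$ and $C$ share the same vertex labels and have perfect state transfer between exactly the same ordered pairs of vertices at exactly the same times. Hence $G$ has universal perfect state transfer (resp.\ transfer from $0$ to $j$) if and only if $C$ does, and I may assume without loss of generality that $A(G) = C$ is circulant. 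The forward implication is then immediate: universal perfect state transfer includes transfer from $0$ to $1$, and $\gcd(1,n) = 1$ furnishes the required $j$.

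For the backward direction, recall that the cyclic shift matrix $\Theta_{n}$ satisfies $\Theta_{n}\uket{k} = \uket{k+1}$ (indices mod $n$) and commutes with every circulant, so $\Theta_{n}^{j} \in \SwAut(G)$ for each $j$, with associated permutation $\sigma^{j}$ sending $0 \mapsto j$. Both $C$ and $\Theta_{n}$ are diagonalized by the Fourier basis $\{\ket{F_{k}}\}$, which therefore serves as a common orthonormal eigenbasis, and vertex $0$ has full support since $\bracket{\buket{0}}{\bvket{F_{k}}} = 1/\sqrt{n} \neq 0$ for every $k$. Thus the hypotheses of Theorem~\ref{thm:pst-implies-permutation} are satisfied, and perfect state transfer from $0$ to $j$ at some time $t$ yields $e^{-itA(G)} = \gamma\,\Theta_{n}^{j}$ for some $\gamma \in \TT$.

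The coprimality now does the real work. For each integer $m$ we have $e^{-imtA(G)} = (e^{-itA(G)})^{m} = \gamma^{m}\,\Theta_{n}^{jm}$, and $\Theta_{n}^{jm}$ sends vertex $a$ to vertex $a + jm \pmod{n}$. Given an arbitrary ordered pair $(a,b)$, the congruence $jm \equiv b - a \pmod{n}$ is solvable precisely because $\gcd(j,n) = 1$; for such $m$ we obtain $e^{-imtA(G)}\uket{a} = \gamma^{m}\uket{b}$, which is perfect state transfer from $a$ to $b$. Since $(a,b)$ was arbitrary, $G$ has universal perfect state transfer. I expect the only delicate point to be verifying the hypotheses of Theorem~\ref{thm:pst-implies-permutation}—that the adjacency matrix and the chosen shift automorphism admit a common eigenbasis in which the source vertex has full support—which is exactly what the Fourier diagonalization of circulants guarantees; the assumption $\gcd(j,n)=1$ enters only at the end, ensuring the powers of $\Theta_{n}^{j}$ realize every cyclic shift and hence every vertex-to-vertex transfer.
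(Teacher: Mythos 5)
Your proof is correct and follows essentially the same route as the paper's: reduce to the circulant case via the diagonal switching, apply Theorem~\ref{thm:pst-implies-permutation} to conclude $e^{-itA(G)}$ is a scalar multiple of the shift $\Theta_{n}^{j}$, and then use coprimality of $j$ and $n$ to realize every transfer by taking powers. If anything you are slightly more careful than the paper, since you explicitly verify the shared Fourier eigenbasis hypothesis of Theorem~\ref{thm:pst-implies-permutation} and you establish transfer between arbitrary pairs $(a,b)$ rather than only pairs of the form $(0,k)$.
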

\begin{proof}
Suppose $G$ is a graph on $n$ vertices which satisfies $A(G) = D^{\dagger}CD$,
where $D$ is a diagonal matrix with unit modulus complex entries and $C$ is a circulant matrix.
We only need to prove the sufficient condition since the necessary condition follows by definition.
Assume $G$ has perfect state transfer from $0$ to $j$, where $\gcd(j,n)=1$; that is,
\begin{equation}
e^{-itA(G)}\uket{0} = \gamma\uket{j},
\end{equation}
for some $\gamma \in \TT$.
By Theorem \ref{thm:pst-implies-permutation}, this implies that
\begin{equation}
e^{-itA(G)} = \gamma\tilde{P}_{\sigma},
\end{equation}
where $\sigma$ is a permutation defined by $\sigma(x) \equiv x+j\pmod{n}$.
Thus, $\sigma = (0 \ j \ 2j \ \ldots \ (n-1)j)$.
Since $\gcd(j,n)=1$, there is an integer $m$ so that $mj \equiv 1\pmod{n}$.
This implies that $\sigma^{m}(0) = 1$ and $\sigma^{mk}(0) = k$, for any $k$. 
To show that $G$ has universal perfect state transfer, it suffices to show 
that $G$ has perfect state transfer from $0$ to every other vertex, say $k$.
We have
\begin{equation}
(e^{-itA(G)})^{mk}\uket{0} 
	= (\gamma\tilde{P}_{\sigma})^{mk}\uket{0}
	= \gamma' P_{\sigma}^{mk}\uket{0}
	= \gamma'\uket{k}
\end{equation}
for some $\gamma' \in \TT$.
\end{proof}

We are now ready to prove the spectral characterization of universal perfect
state transfer graphs that are switching equivalent to circulants.

\begin{theorem}
Let $G$ be a $n$-vertex graph that is switching equivalent to a circulant.
Then, 
$G$ has universal perfect state transfer 
if and only if 
each eigenvalue of $G$ is of the form
\begin{equation} \label{eqn:circulant-eigenvalue-form}
\lambda_{k} = \alpha + \beta(jk + c_{k}n),
\end{equation}
where $\alpha, \beta \in \RR$, with $\beta > 0$, $j \in \ZZ$ is relatively prime to $n$,
and $c_{k} \in \ZZ$.
\end{theorem}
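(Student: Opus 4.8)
The plan is to reduce the problem, via Lemma \ref{lemma:upst-zero-to-one}, to a single perfect-state-transfer event from vertex $0$ to a vertex $j$ with $\gcd(j,n)=1$, and then to translate that event into a condition on the spectrum using the circulant Fourier basis. Write $A(G) = D^{\dagger}CD$ where $D$ is diagonal unitary and $C$ is a circulant, so that $A(G)$ and $C$ share the eigenvalues $\lambda_k$ with circulant eigenvectors $\ket{F_k}$, and recall $\Theta_n\ket{F_k} = \omega_n^{-k}\ket{F_k}$. Since conjugating by the diagonal $D$ only rephases individual matrix entries, $G$ has perfect state transfer from $0$ to $j$ if and only if $C$ does; I would therefore carry out both directions working with $C$.

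($\Longleftarrow$) Assume each $\lambda_k = \alpha + \beta(jk + c_k n)$ with $\beta>0$, $\gcd(j,n)=1$, $c_k\in\ZZ$. Using $e^{-itC} = \sum_k e^{-it\lambda_k}\ketbra{F_k}{F_k}$ together with $\braket{F_k}{0}=1/\sqrt n$, I would compute
\[
\tbracket{j}{e^{-itC}}{0} = \frac{1}{n}\sum_{k=0}^{n-1} e^{-it\lambda_k}\,\omega_n^{jk}.
\]
The crux is to select the time $t = 2\pi/(\beta n)$. Then $e^{-it\beta(jk+c_kn)} = \omega_n^{-jk}$, because the term $c_k n$ contributes an integer multiple of $2\pi$, so $e^{-it\lambda_k}\omega_n^{jk} = e^{-it\alpha}$ is independent of $k$ and the sum collapses to $e^{-it\alpha}$, which has modulus $1$. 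This gives perfect state transfer from $0$ to $j$, and since $\gcd(j,n)=1$, Lemma \ref{lemma:upst-zero-to-one} upgrades it to universal perfect state transfer. This direction is essentially the verification above once the correct time is guessed.

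($\Longrightarrow$) Assume $G$ has universal perfect state transfer. By Lemma \ref{lemma:upst-zero-to-one} there are $j$ with $\gcd(j,n)=1$ and a time $t$, which may be taken positive using the periodicity from Lemma \ref{lemma:upst-periodic}, with perfect state transfer from $0$ to $j$. The shift $\sigma(x)\equiv x+j \pmod n$ is realized as the switching automorphism $\tilde{P}_{\sigma} = D^{\dagger}\Theta_n^{\,j}D \in \SwAut(G)$, since $\Theta_n^{\,j}$ commutes with the circulant $C$ and conjugation by the diagonal $D$ preserves the permutation part. Because $G$ has distinct eigenvalues (Theorem \ref{thm:upgst-distinct-eigenvalues}) and a flat eigenbasis (Theorem \ref{thm:upgst-flat}, which forces $\braket{0}{z_k}\neq 0$), Theorem \ref{thm:pst-implies-permutation} applies and gives $e^{-itA(G)} = \gamma\tilde{P}_{\sigma}$ for some $\gamma\in\TT$. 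Conjugating by $D$ yields $e^{-itC} = \gamma\,\Theta_n^{\,j}$, and comparing coefficients on each $\ket{F_k}$ produces the scalar identities $e^{-it\lambda_k} = \gamma\,\omega_n^{-jk}$. Taking complex logarithms gives $t\lambda_k = -\arg\gamma + 2\pi jk/n + 2\pi c_k$ for integers $c_k$, and setting $\alpha = -\arg(\gamma)/t$ and $\beta = 2\pi/(tn)>0$ puts each eigenvalue into the claimed form.

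The main obstacle is the forward direction, where one must certify that the shift permutation $\sigma$ is a bona fide switching automorphism so that Theorem \ref{thm:pst-implies-permutation} may be invoked; the conjugate $\tilde{P}_{\sigma} = D^{\dagger}\Theta_n^{\,j}D$ is exactly the device that supplies this, while spectral simplicity and flatness supply the remaining hypotheses. Once $e^{-itC} = \gamma\,\Theta_n^{\,j}$ is in hand, reading off the eigenvalue form is routine arithmetic with complex logarithms. A minor but necessary point is securing $\beta>0$, which is precisely why the transfer time is chosen positive, a choice justified by periodicity.
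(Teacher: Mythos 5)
Your proof is correct, and its backward direction is essentially the paper's: a Fourier-basis computation followed by Lemma \ref{lemma:upst-zero-to-one}, except that you transfer $0 \to j$ at time $t = 2\pi/(\beta n)$ rather than $0 \to 1$ at time $2\pi m/n$ with $jm \equiv 1 \pmod{n}$, which spares you the modular inverse. The forward direction, however, takes a genuinely different route. The paper applies the scalar forcing lemma (Lemma \ref{lemma:complex-sum-forcing}) directly to the $(j,0)$-entry of $e^{-it_{j}A(G)}$ and then removes $\alpha$, $\beta$ and the transfer time by informal ``without loss of generality'' rescalings; you instead exhibit the shift as an explicit switching automorphism $\tilde{P}_{\sigma} = D^{\dagger}\Theta_{n}^{j}D$, verify the hypotheses of Theorem \ref{thm:pst-implies-permutation}, and extract the full matrix identity $e^{-itC} = \gamma\,\Theta_{n}^{j}$, from which the eigenvalue form follows with $\alpha = -\arg(\gamma)/t$ and $\beta = 2\pi/(tn)$ written explicitly. (Incidentally, you do not need Theorems \ref{thm:upgst-distinct-eigenvalues} and \ref{thm:upgst-flat} here: $A(G)$ and $\tilde{P}_{\sigma}$ visibly share the conjugated Fourier basis $\{D^{\dagger}\ket{F_{k}}\}$, whose entries all have modulus $1/\sqrt{n}$, so the shared-eigenbasis and full-support hypotheses hold directly.) What your route buys is that the parameters $\alpha,\beta$ are produced honestly rather than normalized away, so the requirement $\beta > 0$ is visibly tied to the sign of $t$ --- a point the paper glosses over; what the paper's route buys is that it never needs to certify an automorphism or a shared eigenbasis, working purely at the level of one scalar sum. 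One small soft spot: your appeal to Lemma \ref{lemma:upst-periodic} to make $t$ positive is under-justified as stated (perfect state transfer from a vertex to itself holds trivially at time $0$, so periodicity alone does not immediately supply a nonzero positive period). The cleanest fix avoids periodicity altogether: if the transfer $0 \to j$ occurs at $t < 0$, then $e^{-i(-t)C} = \overline{\gamma}\,\Theta_{n}^{-j}$ gives perfect state transfer $0 \to n-j$ at time $-t > 0$, and $\gcd(n-j,n) = \gcd(j,n) = 1$, so you may simply rename $j$.
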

\begin{proof}
Let $G$ be a graph on $n$ vertices that is switching equivalent to a circulant.
Thus, $G$ has the circulant eigenbasis $\{\ket{F_{k}} : k=0,\ldots,n-1\}$, 
where $\vbbravec{j}{F_{k}} = e^{2\pi ijk/n}/\sqrt{n}$. 

We will make two simplifying assumptions on the eigenvalues of $G$ without loss of generality.
First, note if $\lambda_{k}$ is an eigenvalue of $A(G)$,
then $\lambda_{k} + \alpha$ is an eigenvalue of $A(G) + \alpha\II$.
This allows us to assume $\alpha = 0$ in Equation (\ref{eqn:circulant-eigenvalue-form}), 
since this diagonal shift merely introduces an irrelevant phase factor in a quantum walk:
\begin{equation}
e^{-it(A(G) - \alpha\II)} = e^{i\alpha}e^{-itA(G)}.
\end{equation}
Due to this invariance under diagonal shifts, we will also assume $\lambda_{0}=0$.
Second, note if $\lambda_{k}$ is an eigenvalue of $A(G)$,
then $\beta\lambda_{k}$ is an eigenvalue of $\beta A(G)$.
This allows us to assume $\beta = 1$ in Equation (\ref{eqn:circulant-eigenvalue-form}), since 
\begin{equation}
e^{-it\beta A(G)} = e^{-i(t\beta)A(G)},
\end{equation}
which shows that the time-scaling factor $\beta$ can be absorbed into our time parameter.

Using the above assumptions, the eigenvalues of $G$ are of the form
\begin{equation}
\lambda_{k} = jk + c_{k}n,
\end{equation}
where $c_{k} \in \ZZ$ are integers and $j \in \ZZ$ is an integer with $\gcd(j,n)=1$.

\bigskip

($\Longrightarrow$) Suppose $G$ has universal perfect state transfer.
Thus, for any $j$ with $\gcd(j,n)=1$, the graph $G$ has perfect state transfer 
from $0$ to $j$ at some time $t_{j}$. This implies
\begin{equation}
|\tbracket{\buket{j}}{e^{-it_{j}A(G)}}{\buket{0}}|
	= \left| \sum_{k=0}^{n-1} e^{-it_{j}\lambda_{k}} 
		\bracket{\buket{j}}{\bvket{F_{k}}} \bracket{\bvket{F_{k}}}{\buket{0}} \right|
	= \left| \frac{1}{n} \sum_{k=0}^{n-1} e^{-i(t_{j}\lambda_{k} - 2\pi jk/n)} \right|
	= 1.
\end{equation}
By Lemma \ref{lemma:complex-sum-forcing}, there is a real number $\theta \in \RR$
so that for each $k$, we have
\begin{equation} 
t_{j}\lambda_{k} - \frac{2\pi jk}{n} \equiv \theta \pmod{2\pi}.
\end{equation}
Due to invariance under time-scaling, we assume without loss of generality that $t_{j} = 2\pi/n$.
Therefore,
\begin{equation}
\frac{2\pi}{n}(\lambda_{k} - jk) \equiv \frac{2\pi}{n}(\lambda_{\ell} - j\ell) \pmod{2\pi}.
\end{equation}
But since $\lambda_{0} = 0$, we get
\begin{equation}
\frac{2\pi}{n}(\lambda_{k} - jk) \equiv 0 \pmod{2\pi}
\end{equation}
Thus, $\lambda_{k} = jk + n\ZZ$, for each $k$.

\bigskip

($\Longleftarrow$) 
Suppose the eigenvalues of $G$ are of the form
$\lambda_{k} = jk + c_{k}n$,
where $c_{k} \in \ZZ$ are integers and $j \in \ZZ$ is an integer with $\gcd(j,n)=1$.
To show $G$ has universal perfect state transfer, by Lemma \ref{lemma:upst-zero-to-one}, 
it suffices to show $G$ has perfect state transfer from $0$ to $1$ at some time $t$.
We have
\begin{equation}
\tbracket{\buket{1}}{e^{-itA(G)}}{\buket{0}}
	= \sum_{k=0}^{n-1} e^{-it\lambda_{k}} 	
		\bracket{\buket{1}}{\bvket{F_{k}}} \bracket{\bvket{F_{k}}}{\buket{0}} 
	= \frac{1}{n} \sum_{k=0}^{n-1} e^{-i(t\lambda_{k} - 2\pi k/n)}.
\end{equation}
Using the assumed form of the eigenvalues $\lambda_{k}$, we get
\begin{equation}
\tbracket{\buket{1}}{e^{-itA(G)}}{\buket{0}}
	= \frac{1}{n} \sum_{k=0}^{n-1} e^{-i(t(jk + c_{k}n) - 2\pi k/n)} 
	= \frac{1}{n} \sum_{k=0}^{n-1} e^{-itc_{k}n} (e^{-i(tj - 2\pi/n)})^{k}
\end{equation}
By letting $t = 2\pi m/n$, where $m$ is an integer which satisfies $jm \equiv 1\pmod{n}$, we get
\begin{equation}
\tbracket{\buket{1}}{e^{-itA(G)}}{\buket{0}}
=
\frac{1}{n} \sum_{k=0}^{n-1} e^{-2\pi ic_{k}m} (e^{-2\pi i(jm-1)/n})^{k}
=
1.
\end{equation}
This proves the claim.
\end{proof}


\section{Real Symmetric Graphs}

Kay \cite{kay11} showed that universal perfect state transfer is not possible 
for graphs with real symmetric adjacency matrices. 
We state formally this important observation in the following.

\begin{theorem} (Kay \cite{kay11})
Let $G$ be a graph with real symmetric adjacency matrix.
If $G$ has perfect state transfer from vertex $a$ to vertex $b$ 
and also from vertex $a$ to vertex $c$, then we have $b = c$.
\end{theorem}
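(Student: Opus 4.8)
The plan is to turn perfect state transfer into an exact statement about the unitary $U(t)=e^{-itA}$ and then exploit that, for a \emph{real} symmetric $A$, the spectral idempotents are real. Since $U(s)$ is unitary, $|\tbracket{b}{U(s)}{a}|=1$ forces $U(s)\uket{a}=\gamma_1\uket{b}$ for some $\gamma_1=e^{i\theta_1}\in\TT$, and likewise $U(t)\uket{a}=\gamma_2\uket{c}$ with $\gamma_2=e^{i\theta_2}$. Writing $A=\sum_k\lambda_k E_k$ with real eigenprojectors $E_k$ and comparing, eigenspace by orthogonal eigenspace, the two sides of $\sum_k e^{-is\lambda_k}E_k\uket{a}=\gamma_1\sum_k E_k\uket{b}$, I would obtain for every $k$ the identity $E_k\uket{b}=e^{-i(s\lambda_k+\theta_1)}E_k\uket{a}$. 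This is the one place realness is essential: $E_k\uket{a}$ and $E_k\uket{b}$ are real vectors, so on the support $S=\{k:E_k\uket{a}\neq 0\}$ the scalar $e^{-i(s\lambda_k+\theta_1)}$ must be real, hence a sign $\sigma^{(b)}_k=(-1)^{n_k}\in\{\pm1\}$ with $s\lambda_k+\theta_1=\pi n_k$ for an integer $n_k$. (Over a general Hermitian $A$ this scalar would be an arbitrary element of $\TT$, which is precisely why the complex case tolerates many partners.)

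Running the identical computation for $c$ gives $E_k\uket{c}=\sigma^{(c)}_k E_k\uket{a}$ with $\sigma^{(c)}_k=(-1)^{m_k}$ and $t\lambda_k+\theta_2=\pi m_k$ on $S$. Since $E_k\uket{b}=E_k\uket{c}=0$ off $S$, this yields $\uket{b}=\sum_{k\in S}\sigma^{(b)}_k E_k\uket{a}$ and $\uket{c}=\sum_{k\in S}\sigma^{(c)}_k E_k\uket{a}$, so the whole problem collapses to comparing the two sign patterns $\sigma^{(b)},\sigma^{(c)}$ on $S$. If they agree, then $\uket{b}=\uket{c}$ and $b=c$; if they agree only up to a global sign, then $\uket{b}=-\uket{c}$, which is impossible for two standard basis vectors. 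Throughout I take the transfer to be proper, $b\neq a$ and $c\neq a$, as in Kay's setting of two overlapping pairs; note that $b\neq a$ is equivalent to $\sigma^{(b)}$ being nonconstant, since a constant pattern returns $\uket{a}$ (and a constant $-1$ pattern is outright impossible, as $-\uket{a}$ is not a basis vector). Thus it remains only to rule out that $\sigma^{(b)}$ and $\sigma^{(c)}$ differ but not merely by an overall sign.

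The heart of the argument — and the step I expect to be the main obstacle — is a parity analysis. For $k,l\in S$ I would subtract the phase relations to get $s(\lambda_k-\lambda_l)=\pi(n_k-n_l)$ and $t(\lambda_k-\lambda_l)=\pi(m_k-m_l)$, so the ratio $s/t$ equals the fixed rational $(n_k-n_l)/(m_k-m_l)$ for every pair with $\lambda_k\neq\lambda_l$. Suppose, toward the forbidden case, that some pair has $n_k-n_l$ and $m_k-m_l$ of opposite parity; writing $s/t=P/Q$ in lowest terms and cancelling the (necessarily odd) common factor from an even numerator over an odd denominator, this forces, say, $P$ even and $Q$ odd. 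But then $Q$ divides $m_k-m_l$ and $n_k-n_l=P(m_k-m_l)/Q$ is even for \emph{every} pair, so $\sigma^{(b)}$ is constant on $S$ — forcing $b=a$, a contradiction (the symmetric parity case forces $c=a$). Hence $n_k-n_l\equiv m_k-m_l\pmod 2$ for all pairs, i.e. $\sigma^{(b)}$ and $\sigma^{(c)}$ coincide up to a global sign, and the dichotomy of the previous paragraph yields $b=c$. Getting this lowest-terms/parity bookkeeping exactly right is the delicate point, and it is precisely where the $\pm1$ structure forced by realness (rather than an arbitrary phase) is consumed.
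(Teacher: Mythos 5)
The paper does not actually prove this statement---it is quoted from Kay \cite{kay11} without argument---so there is no in-paper proof to compare against; your proposal supplies one, and as far as I can check it is correct and complete. The chain $U(s)\uket{a}=\gamma_1\uket{b}$, then $E_k\uket{b}=e^{-i(s\lambda_k+\theta_1)}E_k\uket{a}$, then realness of the spectral idempotents forcing that phase into $\{\pm 1\}$ on the eigenvalue support $S$ of $a$, and finally the reconstruction $\uket{b}=\sum_{k\in S}\sigma^{(b)}_k E_k\uket{a}$, is sound; this is exactly the ``strong cospectrality'' structure underlying Kay's observation, and it is indeed the step where a Hermitian $A$ would only give an arbitrary phase. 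The parity bookkeeping you flagged as delicate also checks out: writing $m_k-m_l=Qd_{kl}$ and $n_k-n_l=Pd_{kl}$ with $\gcd(P,Q)=1$, coprimality rules out $P,Q$ both even; if exactly one is even then one of the two sign patterns is constant on $S$, which your nonconstancy criterion converts into $b=a$ or $c=a$; and if both are odd the patterns agree up to a global sign, whence $\uket{b}=\pm\uket{c}$ and the minus branch is impossible for standard basis vectors. Two remarks. First, your added hypothesis $b\neq a$, $c\neq a$ is not cosmetic: the theorem as printed is literally false without it (e.g.\ $K_2$ has perfect state transfer from $1$ to $2$ at $t=\pi/2$ and from $1$ to $1$ at $t=\pi$, giving overlapping pairs with $b\neq c$), so flagging it is a point in your favor rather than a gap. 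Second, the degenerate possibilities ($s=0$, $t=0$, or $a$ supported on a single eigenspace) are all silently absorbed by the same criterion that a constant sign pattern forces $b=a$ or $c=a$; it would be worth one explicit sentence, since forming the ratio $s/t=(n_k-n_l)/(m_k-m_l)$ presupposes the existence of a pair with $m_k\neq m_l$, which is guaranteed precisely because $\sigma^{(c)}$ is nonconstant.
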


In contrast, we show that there are graphs with real symmetric adjacency matrices 
with universal pretty good state transfer. For our result, we appeal to a theorem 
of Lindemann from number theory (see Theorem 9.1 in Niven \cite{niven}). 

\begin{theorem} \label{thm:lindemann}
(Lindemann)
Given any distinct algebraic numbers $\alpha_{1},\ldots,\alpha_{m}$,
the values $e^{\alpha_{1}},\ldots,e^{\alpha_{m}}$ are linearly independent over
the field of algebraic numbers.
\end{theorem}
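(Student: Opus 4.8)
The statement is the Lindemann--Weierstrass theorem, and the approach I would take is the classical Hermite integral method, arguing by contradiction. Suppose some nontrivial relation $\sum_{k=1}^{m} \beta_{k} e^{\alpha_{k}} = 0$ holds with the $\alpha_{k}$ distinct algebraic numbers and the $\beta_{k}$ algebraic and not all zero. First I would multiply through by $e^{-\alpha_{1}}$ so that one exponent is $0$ (with nonzero coefficient), and clear denominators. The crucial preliminary reduction is algebraic: multiply the relation by all of its images under the Galois group of a number field $K$ containing every $\alpha_{k}$ and $\beta_{k}$, expand the product, and collect exponentials sharing a common exponent. This produces a relation $b_{0} + \sum_{n} b_{n} e^{\gamma_{n}} = 0$ in which the $\gamma_{n}$ are nonzero algebraic numbers whose multiset is closed under conjugation, the $b_{n}$ are rational integers (equal on conjugate exponents), and $b_{0} \in \ZZ$ is nonzero.

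The analytic engine is the auxiliary integral $I(t) = \int_{0}^{t} e^{t-u} f(u)\, du$ for a polynomial $f$ of degree $r$. Integration by parts gives the exact identity $I(t) = e^{t}\sum_{j=0}^{r} f^{(j)}(0) - \sum_{j=0}^{r} f^{(j)}(t)$, together with the crude bound $|I(t)| \le |t|\, e^{|t|}\, \widetilde{f}(|t|)$, where $\widetilde{f}$ denotes $f$ with every coefficient replaced by its absolute value. For a large prime $p$ I would take $f(x) = \ell^{\,Np}\, x^{p-1} \prod_{n} (x-\gamma_{n})^{p} / (p-1)!$, where $N$ is the number of nonzero exponents and $\ell \in \ZZ$ is a fixed scaling making all relevant quantities algebraic integers, and then form $J = b_{0} I(0) + \sum_{n} b_{n} I(\gamma_{n})$.

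Substituting the identity for $I$ and using $b_{0} + \sum_{n} b_{n} e^{\gamma_{n}} = 0$ to annihilate the $e^{\gamma}$-terms leaves $J = -\sum b_{n} \sum_{j} f^{(j)}(\gamma_{n})$, summing over the exponent $0$ as well. This $J$ is fixed by $\mathrm{Gal}(K/\QQ)$, since the exponent multiset is conjugation-closed and conjugate exponents carry equal integer coefficients; as it is also a sum of algebraic integers, $J$ is a rational integer. The divisibility analysis is the heart of the lower bound: because each nonzero $\gamma_{n}$ is a root of $f$ of order $p$, every $f^{(j)}(\gamma_{n})$ with $n \neq 0$ is divisible by $p$, and likewise $f^{(j)}(0)$ for $j \neq p-1$; the single surviving term $f^{(p-1)}(0) = \ell^{\,Np}\big(\prod_{n}(-\gamma_{n})\big)^{p}$ is an algebraic integer \emph{not} divisible by $p$ once $p$ exceeds $\ell$, $|b_{0}|$, and the norm of $\prod_{n}\gamma_{n}$. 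Hence $J \equiv -b_{0} f^{(p-1)}(0) \not\equiv 0 \pmod{p}$, so $J$ is a nonzero integer and $|J| \ge 1$. The analytic bound then gives $|I(\gamma_{n})| \le C^{p}/(p-1)!$ for a constant $C$ independent of $p$, whence $|J| \to 0$ as $p \to \infty$, contradicting $|J| \ge 1$.

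The step I expect to be the main obstacle is not the estimate but the interface between the algebra and arithmetic in the first and third paragraphs: one must simultaneously guarantee that the Galois symmetrization leaves a genuinely nonzero rational-integer constant coefficient $b_{0}$ (cancellation among equal exponents is the subtle danger) and that the prime $p$ isolates precisely one term of $J$ escaping divisibility by $p$. Once that bookkeeping is arranged, the competing lower bound $|J|\ge 1$ and analytic upper bound $|J|\to 0$ collide, yielding the theorem.
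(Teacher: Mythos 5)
The paper does not prove this statement: it is quoted as a classical theorem of Lindemann and used as a black box, with a pointer to Theorem 9.1 of Niven \cite{niven}. So there is no in-paper argument to compare yours against; what follows is an assessment of your sketch on its own terms. The analytic and arithmetic skeleton you describe is the correct classical one: the identity $I(t)=e^{t}\sum_{j}f^{(j)}(0)-\sum_{j}f^{(j)}(t)$ with the crude bound, the choice $f(x)=\ell^{Np}x^{p-1}\prod_{n}(x-\gamma_{n})^{p}/(p-1)!$ for a large prime $p$, the Galois-invariance argument making $J$ a rational integer, the isolation of $b_{0}f^{(p-1)}(0)=b_{0}\ell^{Np}\bigl(\prod_{n}(-\gamma_{n})\bigr)^{p}$ as the unique term escaping divisibility by $p$, and the collision of $|J|\ge 1$ with $|J|\le C^{p}/(p-1)!\to 0$ are all as in the standard treatments.

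The one genuine hole is exactly the one you flag but do not close: after multiplying the relation by its Galois conjugates and collecting equal exponents, you assert that the coefficient $b_{0}$ of $e^{0}$ is a nonzero rational integer, but nothing in your reduction guarantees this. Distinct sums $\sum_{\sigma}\sigma(\alpha_{k_{\sigma}}-\alpha_{1})$ can themselves vanish, and their contributions can cancel against the term $\prod_{\sigma}\sigma(\beta_{1})$ coming from choosing the zero exponent in every factor; your argument then consumes the hypothesis $b_{0}\neq 0$ in the congruence $J\equiv -b_{0}f^{(p-1)}(0)\pmod{p}$, so the proof as written is incomplete at precisely this joint. The standard repairs are either (i) to order $\CC$ lexicographically by real and imaginary parts and note that the maximal exponent in the expanded product is attained by a unique choice of terms, so \emph{some} grouped coefficient is nonzero (though not necessarily the constant one), and then to run the Baker-style variant that forms $n$ auxiliary polynomials $f_{i}(x)=\ell^{np}\prod_{k}(x-\gamma_{k})^{p}/(x-\gamma_{i})$ and shows the product $J_{1}\cdots J_{n}$ is a nonzero rational integer that is too small --- a version which never needs a nonzero constant term --- or (ii) to follow Niven's own symmetrization, which exhibits the rational term as a nonzero norm before the prime $p$ is introduced. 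The gap is fillable by standard means, but it is the load-bearing step and your proposal leaves it open.
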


\begin{figure}[t]
\begin{center}
\begin{tikzpicture}[scale=4]
\foreach \c [count=\x from 0] in {{1-e-e^2+e^3},{1-e-e^2+e^3}}
{	
	\node at ({\x*180}:1.1)[circle][scale=0.75]{$\c$};
}
\foreach \c [count=\x from 0] in {{1-e+e^2-e^3}, {1-e+e^2-e^3}}
{
	\node at ({\x*180+90}:0.87)[circle][scale=0.75]{$\c$};
}
\foreach \x in {45, 135, ..., 360}
\foreach \y in {90, 180}	
{
	\draw ({\x}:1) --(\x+\y:1);
}
\foreach \c [count=\x from 0] in {0, ..., 3} 
{
	\node at ({\x*90+45}:1)[circle, circular drop shadow, fill={blue!20}, draw][scale=0.75]{$\c$};	
}
\node at (10:0.4)[circle][scale=0.75]{$1+e-e^2-e^3$};
\node at (170:0.4)[circle][scale=0.75]{$1+e-e^2+e^3$};
\end{tikzpicture}
\caption{The smallest graph with real symmetric adjacency matrix which
has universal {\em pretty good} state transfer. 
No graph with real symmetric adjacency matrix can have universal {\em perfect} 
state transfer (see Kay \cite{kay11}).}
\end{center}
\end{figure}
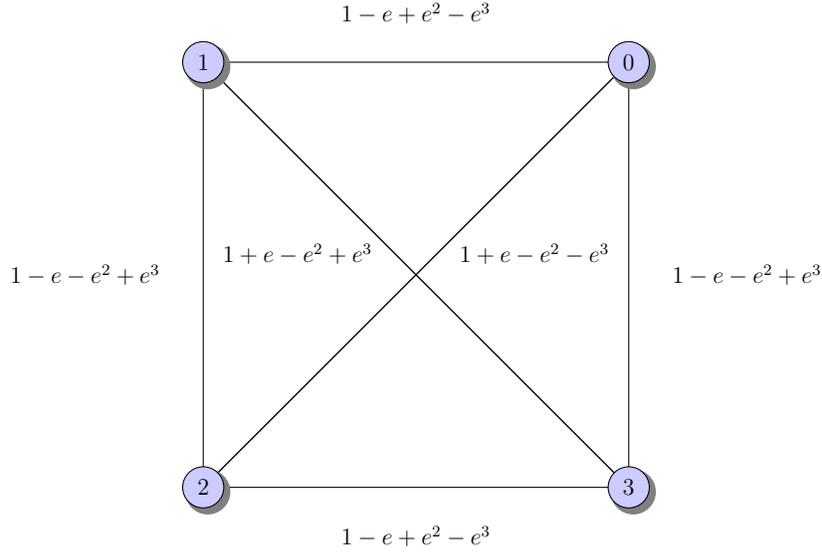

\begin{theorem}
There is a family of (weighted) graphs with universal pretty good state transfer 
whose adjacency matrices are real symmetric. 
\end{theorem}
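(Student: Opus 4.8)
The plan is to exhibit the family explicitly, using real Hadamard matrices as a flat \emph{real} orthonormal eigenbasis and choosing the eigenvalues to be powers of $e$, so that Lindemann's theorem (Theorem \ref{thm:lindemann}) supplies the rational independence needed to drive a Kronecker argument. Concretely, for each $n$ admitting a real Hadamard matrix $H$ of order $n$ (for instance $n=2^{m}$ via the Sylvester construction), normalize $H$ so its first column is all ones and set
\begin{equation}
A \;=\; \frac{1}{n}\, H\,\diag\!\left(1,e,e^{2},\ldots,e^{n-1}\right)H^{T}
\;-\; \frac{1}{n}\left(\sum_{k=0}^{n-1} e^{k}\right)\II .
\end{equation}
Because $H$ is real and $\diag(\cdot)$ is real, $A=A^{T}$ is real symmetric, and the subtracted term merely zeros out the (constant) diagonal, yielding a genuine weighted graph $G$. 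Its eigenvalues are $\lambda_{k}=e^{k}-\tfrac1n\sum_{j}e^{j}$, which are distinct, and its eigenvectors are the columns of $\tfrac{1}{\sqrt n}H$, forming a real orthonormal basis with every entry of modulus $1/\sqrt n$, i.e.\ a flat eigenbasis. This is exactly the shape forced by Theorems \ref{thm:upgst-flat} and \ref{thm:upgst-distinct-eigenvalues}, and it creates no conflict with Kay's theorem, which rules out only \emph{perfect} (not pretty good) state transfer over the reals.

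First I would record the amplitude between any two vertices $a,b$. Since $\bracket{\buket{a}}{\bvket{z_{k}}}=H_{a,k}/\sqrt n$ with $H_{a,k}\in\{\pm1\}$, and $\lambda_{k}-\lambda_{0}=e^{k}-1$,
\begin{equation}
\tbracket{\buket{b}}{e^{-itA}}{\buket{a}}
= \frac{1}{n}\sum_{k=0}^{n-1} e^{-it\lambda_{k}}\,H_{b,k}H_{a,k}
= \frac{e^{-it\lambda_{0}}}{n}\sum_{k=0}^{n-1} e^{-it(e^{k}-1)}\,H_{b,k}H_{a,k},
\end{equation}
where the $k=0$ term equals $1$ (the first Hadamard column is all ones) and each $H_{a,k}H_{b,k}\in\{+1,-1\}$. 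The strategy is to choose $t$ so that every term aligns with this all-ones term: if $e^{-it(e^{k}-1)}H_{a,k}H_{b,k}\approx 1$ for each $k\ge 1$, then the sum is $\approx n$ and the amplitude has modulus $\approx 1$, with the global phase $e^{-it\lambda_{0}}$ serving as the $\gamma\in\TT$ in the definition of pretty good state transfer.

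Next I would establish the rational independence that powers this alignment. Applying Lindemann's theorem to the distinct algebraic numbers $0,1,\ldots,n-1$ shows that $1,e,e^{2},\ldots,e^{n-1}$ are linearly independent over the algebraic numbers, in particular over $\QQ$. A one-line consequence is that the differences $e^{k}-1$, for $k=1,\ldots,n-1$, are also independent over $\QQ$: any relation $\sum_{k\ge 1} a_{k}(e^{k}-1)=0$ with $a_{k}\in\QQ$ rewrites as $\sum_{k\ge1} a_{k}e^{k}-(\sum_{k} a_{k})e^{0}=0$, forcing every $a_{k}=0$. Hence the $n-1$ numbers $(e^{k}-1)/(2\pi)$ are linearly independent over $\QQ$.

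Finally, given $\delta>0$, pick $\eta<\delta$ and choose target phases $\alpha_{k}$ ($\alpha_k=0$ when $H_{a,k}H_{b,k}=+1$ and $\alpha_k=\tfrac12$ when $H_{a,k}H_{b,k}=-1$); by Kronecker's theorem (Theorem \ref{thm:kronecker}) applied to the $(e^{k}-1)/(2\pi)$ there is a time $t$ with $\big|e^{-it(e^{k}-1)}H_{a,k}H_{b,k}-1\big|<\eta$ for all $k\ge1$ simultaneously. Then the amplitude above has modulus at least $\tfrac{n-(n-1)\eta}{n}\ge 1-\eta>1-\delta$, so $G$ has pretty good state transfer from $a$ to $b$; as $a,b$ range over all pairs this is universal, and letting $n=2^{m}$ vary gives the infinite family. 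I expect the main obstacle to be purely the bookkeeping in this last step: correctly translating ``align each term with the all-ones term'' into the target vector $\alpha_{k}$ and tolerance, and confirming a \emph{single} $t$ meets all $n-1$ approximations at once (precisely what Kronecker delivers once independence is in hand). The conceptual ingredients—a real flat eigenbasis from Hadamard matrices and the $\QQ$-independence of $\{e^{k}\}$ from Lindemann—are clean; the care lies in handling the $\pm1$ sign pattern attached to each vertex pair.
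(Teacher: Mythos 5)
Your proposal is correct and follows essentially the same route as the paper: a Sylvester--Hadamard matrix supplies the real flat eigenbasis, the eigenvalues are taken to be $e^{0},e^{1},\ldots,e^{n-1}$ (up to your harmless trace shift), Lindemann's theorem gives the $\QQ$-linear independence, and Kronecker's theorem aligns the phases with the $\pm 1$ sign pattern $H_{a,k}H_{b,k}$. The only cosmetic differences are that you normalize by $1/n$, zero out the diagonal, and apply Kronecker to the differences $e^{k}-1$ after factoring a global phase, whereas the paper applies it directly to all $n$ eigenvalues; both are sound.
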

\begin{proof}
For a positive integer $n$, let $H_{n}$ be the Hadamard matrix (of Sylvester type)
of order $N = 2^{n}$ which satisfies $H_{n}^{T}H_{n} = 2^{n}\II$. 
Recall that these Hadamard matrices may be defined recursively by
$H_{n} = H_{1} \otimes H_{n-1}$, for $n \ge 2$, with
$H_{1} = \begin{bmatrix} 1 & 1 \\ 1 & -1 \end{bmatrix}$.
In what follows, we will identify the set of integers $\{0,\ldots,2^{n}-1\}$
with the set $\zo^{n}$ of binary sequences of length $n$.

For each $z \in \zo^{n}$, we choose $\alpha_{z} = z$ be $N$ distinct (algebraic) numbers.
By Lindemann's Theorem \ref{thm:lindemann}, the set $\{e^{\alpha_{z}} : z \in \zo^{n}\}$ 
is linearly independent over the rationals.
Let $G_{n}$ be a graph on $N$ vertices whose adjacency matrix is given by 
\begin{equation}
A(G_{n}) 
= 
H_{n} 
\begin{bmatrix}
\exp(\alpha_{0}) & 0 & 0 & \ldots & 0 \\
0 & \exp(\alpha_{1}) & 0 & \ldots & 0 \\
\vdots & \vdots & \vdots & \ldots & \vdots \\
0 & 0 & 0 & \ldots & \exp(\alpha_{N-1})
\end{bmatrix}
H_{n}^{T}.
\end{equation}
We show that $G_{n}$ has universal pretty good state transfer.
Let the eigenvalues of $G_{n}$ be denoted $\lambda_{z} = e^{\alpha_{z}}$
and let the $z$th eigenvector be denoted $\ket{\chi}_{z}$ 
(which is the $z$th column of $H_{n}$). 
For each $a \in \zo^{n}$, we have
\begin{equation}
\bracket{\buket{a}}{\bvket{\chi}_{z}} = \frac{1}{\sqrt{N}} \prod_{k=1}^{n} (-1)^{a_{k}z_{k}},
\end{equation}
where $a_{k}$ and $z_{k}$ represent the $k$th bits of $a$ and $z$, respectively.
Thus, for a given vertex $a \in \zo^{n}$ of $G_{n}$, we get
\begin{equation} \label{eqn:lindemann}
\tbracket{\buket{a}}{e^{-itA(G_{n})}}{\buket{0_{n}}}
	= \sum_{z \in \zo^{n}} e^{-it\lambda_{z}}
		\bracket{\buket{a}}{\bvket{\chi}_{z}}
		\bracket{\bvket{\chi}_{z}}{\buket{0_{n}}} 
	= \frac{1}{N} \sum_{z \in \zo^{n}} e^{-it\lambda_{z}} \bracket{\buket{a}}{\bvket{\chi}_{z}}.
\end{equation}
Since the eigenvalues $\lambda_{z}$'s are linearly independent over the rationals,
by Kronecker's Theorem \ref{thm:kronecker}, 
for any $\epsilon > 0$, there is a real number $t \in \RR$ so that
\begin{equation}
\left|t\lambda_{z} - 2^{\iverson{\braket{a}{\chi_{z}} = 1}}\pi - 2\pi p_{z}\right| 
	< \epsilon,
\end{equation}
for some integers $p_{z} \in \ZZ$.
Therefore, $\tbracket{\buket{a}}{e^{-itA(G_{n})}}{\buket{0_{n}}} \approx 1$ 
in Equation (\ref{eqn:lindemann})
since $e^{-it\lambda_{z}} \approx \bracket{\buket{a}}{\bvket{\chi}_{z}}$, 
for each $z$.
This proves the claim.
\end{proof}


\section{Conclusions}

In this paper, we studied the problem of {\em universal} (pretty good or perfect)
state transfer on graphs. This is a stronger and natural extension to the notion 
of pretty good and perfect state transfer which have been studied extensively in 
quantum walks on graphs.
As our main contribution, we proved spectral and structural properties of graphs 
with universal state transfer and showed several infinite family of graphs with
this property.

We showed that if $G$ is a $n$-vertex graphs with universal state transfer, 
then $G$ has $n$ distinct eigenvalues and its eigenbasis is flat. 
Furthermore, the switching automorphism group $\SwAut(G)$ is abelian and its order
$|\SwAut(G)|$ must divide $n$. On the other hand, if the universal state transfer is
{\em perfect}, then $\SwAut(G)$ is cyclic and that $|\SwAut(G)| = n$ if and only if
$G$ is switching isomorphic to a circulant. 
For graphs which are switching equivalent to circulants, we proved a spectral
characterization for universal perfect state transfer. Here, we showed that the 
eigenvalues of such a $n$-vertex graph must be the image of the integers modulo 
$n$ under a linear bijection up to time-scaling and diagonal-shifting.

We also described an infinite family of graphs (with Hermitian adjacency matrices)
with universal pretty good state transfer. These graphs are obtained from directed
prime-length cycles with $\pm i$ weights. Finally, we showed a family of graphs
with real symmetric adjacency matrices that have universal {\em pretty good} state transfer.
In contrast, as observed by Kay, no graphs with real symmetric adjacency matrices can have 
universal {\em perfect} state transfer.

It would be interesting to find a family of graphs with universal perfect state transfer. 
Currently, the only known examples are $\mathcal{K}_{2}$ and $\SC_{3}$.
Also, the graph $\mathcal{K}_{4}$ described in Section \ref{section:small-graphs} is 
the only known example of a universal pretty good state transfer graph that is not a 
circulant. It is unclear if there is a family of graphs with universal state transfer 
that generalizes $\mathcal{K}_{4}$ in a natural way. 
We leave these as open questions for future work.


\section*{Acknowledgments}

The research was supported in part by the National Science Foundation grant DMS-1262737.
We thank Chris Godsil for a great many helpful advice on graph spectra and quantum walks.




\newpage
\appendix

\section*{Appendix}

\begin{lemma} \label{lemma:complex-sum-forcing}
Let $\beta_{1},\ldots,\beta_{n}$ be a set of positive real numbers
which satisfies $\sum_{k=1}^{n} \beta_{k} = 1$.
Let $\alpha_{1},\ldots,\alpha_{n}$ be a set of real numbers where
$|\sum_{k=1}^{n} e^{i\alpha_{k}}\beta_{k}| = 1$.
Then, there exists $\alpha \in \RR$ so that $\alpha_{k} = \alpha$, for all $k=1,\ldots,n$.
\end{lemma}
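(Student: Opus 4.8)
The plan is to recognize this as the equality case of the triangle inequality for a convex combination of points on the unit circle. Since each $e^{i\alpha_k}$ lies on the unit circle and the weights $\beta_k$ are positive and sum to $1$, the quantity $\sum_k \beta_k e^{i\alpha_k}$ is a convex combination of unit-modulus complex numbers, and the hypothesis asserts that this combination again has modulus $1$ --- the maximum possible. Geometrically, a convex combination of distinct points of the unit circle lies strictly inside the closed unit disk, so attaining modulus $1$ should force all the points to coincide.

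To make this precise, first I would record the triangle-inequality bound
\begin{equation}
1 = \left| \sum_{k=1}^{n} \beta_{k} e^{i\alpha_{k}} \right| \le \sum_{k=1}^{n} \beta_{k} |e^{i\alpha_{k}}| = \sum_{k=1}^{n} \beta_{k} = 1,
\end{equation}
so that equality holds throughout. Next, let $\alpha$ be the argument of the sum $S = \sum_{k} \beta_{k} e^{i\alpha_{k}}$, so that $S = e^{i\alpha}$ since $|S| = 1$. Multiplying through by $e^{-i\alpha}$ gives $\sum_{k} \beta_{k} e^{i(\alpha_{k} - \alpha)} = 1$, and taking real parts yields $\sum_{k} \beta_{k} \cos(\alpha_{k} - \alpha) = 1$.

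The key step is then to combine this identity with $\sum_{k} \beta_{k} = 1$ to obtain
\begin{equation}
\sum_{k=1}^{n} \beta_{k} \bigl(1 - \cos(\alpha_{k} - \alpha)\bigr) = 0.
\end{equation}
Every summand is nonnegative, since $\beta_{k} > 0$ and $\cos(\alpha_{k} - \alpha) \le 1$; hence each summand vanishes, forcing $\cos(\alpha_{k} - \alpha) = 1$, that is, $\alpha_{k} \equiv \alpha \pmod{2\pi}$ for every $k$. Thus all the $e^{i\alpha_{k}}$ equal the common value $e^{i\alpha}$, which is precisely what the lemma requires (and is all that is actually invoked in the proof of Theorem \ref{thm:pst-implies-permutation}).

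I do not expect any genuine obstacle here; the statement is a clean extremal fact. The only point that needs a little care is the passage from the scalar modulus condition to a sum of manifestly nonnegative terms, so that positivity of the $\beta_{k}$ can be exploited termwise. The real-part manipulation above accomplishes this directly, without appealing to any external characterization of the equality case of the triangle inequality, and the positivity hypothesis $\beta_{k} > 0$ is exactly what rules out cancellation and pins down each angle individually.
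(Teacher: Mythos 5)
Your proof is correct, but it takes a different route from the paper's. The paper squares the modulus condition and expands $\bigl|\sum_k \beta_k e^{i\alpha_k}\bigr|^2$ into the diagonal terms $\sum_k \beta_k^2$ plus the $\binom{n}{2}$ cross terms $2\Real(e^{i(\alpha_k-\alpha_\ell)})\beta_k\beta_\ell$; comparing with $1 = (\sum_k \beta_k)^2$ and using $\Real(z) \le |z|$ forces $\cos(\alpha_k - \alpha_\ell) = 1$ for every pair $k < \ell$, so all angles agree pairwise. You instead stay linear: you rotate the sum by its own argument, take real parts, and reduce to the single identity $\sum_k \beta_k\bigl(1 - \cos(\alpha_k - \alpha)\bigr) = 0$ with manifestly nonnegative summands. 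Both are equality-case-of-triangle-inequality arguments, but yours handles $n$ nonnegative terms rather than $O(n^2)$ cross terms, and it constructs the common phase $\alpha$ explicitly as the argument of the sum, whereas the paper only gets it implicitly from the pairwise equalities. A small point in your favor: you correctly note that what one really obtains is $\alpha_k \equiv \alpha \pmod{2\pi}$, i.e.\ $e^{i\alpha_k} = e^{i\alpha}$, which is all that the application in Theorem \ref{thm:pst-implies-permutation} uses; the paper's proof (and the lemma's literal statement) glosses over this same distinction.
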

\begin{proof}
By squaring both identities, we get
\begin{equation}
1 = \sum_{k=1}^{n} \beta_{k}^{2} + 
		\sum_{1 \le k < \ell \le n} 2\Real(e^{i(\alpha_{k}-\alpha_{\ell})}\beta_{k}\beta_{\ell})
	\le \sum_{k=1}^{n} \beta_{k}^{2} + 
		\sum_{1 \le k < \ell \le n} 2\beta_{k}\beta_{\ell}
	= 1.
\end{equation}
Here, we use $\Real(z) \le |z|$, which holds for any $z \in \CC$.
Thus, we get
\begin{equation}
\sum_{1 \le k < \ell \le n} \Real(e^{i(\alpha_{k}-\alpha_{\ell})})\beta_{k}\beta_{\ell}
= 
\sum_{1 \le k < \ell \le n} \cos(\alpha_{k}-\alpha_{\ell})\beta_{k}\beta_{\ell}
= 
\sum_{1 \le k < \ell \le n} \beta_{k}\beta_{\ell}.
\end{equation}
Since $\beta_{k} > 0$, we have $\alpha_{k} = \alpha_{\ell}$, 
for each $1 \le k < \ell \le n$.
\end{proof}


\end{document}
